\definecolor[named]{ACMPurple}{cmyk}{0.55,1,0,0.15}
\definecolor[named]{ACMDarkBlue}{cmyk}{1,0.58,0,0.21}
\newcommand\renewcommand{\clap}[1]{\hbox to 0pt{\hss{#1}\hss}}
\newcommand{\MBB}[1]{\ensuremath{\mathbb{#1}}\xspace}
\newcommand{\MCL}[1]{\ensuremath{\mathcal{#1}}\xspace}
\newcommand{\MSF}[1]{\ensuremath{\mathsf{#1}}\xspace}
\newcommand{\Prop}{\MBB{P}}
\newcommand{\Type}{\MBB{T}}
\newcommand{\cdef}{\mathbin{:=}}
\newcommand{\bnfdef}{\mathbin{::=}}
\newcommand{\toot}{\mathrel\leftrightarrow}
\DeclareMathAlphabet{\mymathbb}{U}{bbold}{m}{n}
\newcommand{\Void}{\mymathbb{0}}
\newcommand{\Unit}{\mymathbb{1}}
\newcommand{\Nat}{\MBB{N}}
\newcommand{\Val}{\MBB{V}}
\newcommand{\Bool}{\MBB{B}}
\newcommand{\SigType}[2]{\Sigma{#1}.\,{#2}}
\renewcommand{\SigType}[2]{\{{#1}\mid{#2}\}}
\newcommand{\Fin}[1]{\MBB{F}_{#1}}
\newcommand{\Term}{\MSF{Term}}
\newcommand{\Formula}{\MSF{Form}}
\newcommand{\dbot}{\dot\bot}
\newcommand{\dtop}{\dot\top}
\newcommand{\dand}{\dot\land}
\newcommand{\dor}{\dot\lor}
\newcommand{\dto}{\dot\to}
\newcommand{\dtoot}{\dot\toot}
\newcommand{\dneg}{\dot\neg}
\newcommand{\dforall}{\dot\forall}
\newcommand{\dexists}{\dot\exists}
\newcommand{\din}{\mathrel{\dot\in}}
\DeclareMathOperator*{\dbigvee}{\scalerel*{\bigvee}{\sum}}
\newcommand{\arity}[1]{\mathalpha{|{#1}|}}
\newcommand{\unit}{\mathtt{*}}
\newcommand{\btrue}{\mathsf{tt}}
\newcommand{\bfalse}{\mathsf{ff}}
\newcommand{\some}[1]{\ulcorner#1\urcorner}
\newcommand{\none}{\emptyset}
\newcommand{\Opt}{\MBB{O}}
\newcommand{\List}{\MBB{L}}
\newcommand{\capp}{\mathbin{+\hspace{-6pt}+}} 
\newcommand{\map}{\,}
\newcommand{\inl}{\in}
\newcommand{\incl}{\subseteq}
\newcommand{\clen}[1]{\mathalpha{|{#1}|}}
\newcommand{\cnil}{\mathalpha{[\,]}}
\newcommand{\ccons}{\mathbin{::}}
\newcommand{\subst}[3]{{#1}[{#2}/{#3}]}
\newcommand{\tapp}{\mathbin{+\hspace{-8pt}+\hspace{-8pt}+}}
\newcommand{\red}{\mathrel\preceq}
\newcommand{\BPCP}{\MSF{BPCP}}
\newcommand{\SBPCP}{\Sigma_{\BPCP}}
\newcommand{\deriv}{\triangleright}
\newcommand{\funcssymb}{\MCL{F}}
\newcommand{\predssymb}{\MCL{P}}
\newcommand{\funcs}[1]{\funcssymb_{#1}}
\newcommand{\preds}[1]{\predssymb_{#1}}
\newcommand{\Funcs}{\funcs\Sigma}
\newcommand{\Preds}{\preds\Sigma}
\newcommand{\FV}{\mathsf{FV}}
\renewcommand{\phi}{\varphi}
\newcommand{\binop}{\mathbin{\dot\square}}
\newcommand{\binopm}{\Box}
\newcommand{\quant}{\dot\nabla}
\newcommand{\quantm}{\nabla}
\newcommand{\MM}{\MCL{M}}
\newcommand{\BB}{\MCL{B}}
\newcommand{\SAT}{\MSF{SAT}}
\newcommand{\FSAT}{\MSF{FSAT}}
\newcommand{\FSATEQ}{\MSF{FSATEQ}}
\newcommand{\MSLSAT}{\MSF{MSLSAT}}
\newcommand{\SLSAT}{\MSF{SLSAT}}
\newcommand{\MSL}{\MSF{MSL}}
\newcommand{\SL}{\MSF{SL}}
\newcommand{\emp}{\MSF{emp}}
\newcommand{\sep}{-\kern-.6em\raisebox{-.659ex}{*}\ }
\begin{document}

\title[Trakhtenbrot's Theorem in Coq]{Trakhtenbrot's Theorem in Coq: \texorpdfstring{\\}{}
Finite Model Theory through the Constructive Lens\rsuper*}
\titlecomment{{\lsuper*}extended version of~\cite{ijcarversion}}

\author{Dominik Kirst\rsuper{a}}
\address{Saarland University, Saarland Informatics Campus, Saarbrücken, Germany}
\email{kirst@cs.uni-saarland.de}{}{}
\author{Dominique Larchey-Wendling\rsuper{b}}
\address{Université de Lorraine, CNRS,  LORIA, Vand{\oe}uvre-l\`es-Nancy, France}
\email{dominique.larchey-wendling@loria.fr}

\begin{abstract}
We study finite first-order satisfiability (FSAT)
in the constructive setting of dependent type theory.
Employing synthetic accounts of enumerability and decidability,
we give a full classification of FSAT depending on
the first-order signature of non-logical symbols.
On the one hand,
our development focuses on Trakhtenbrot's theorem, stating
that FSAT is undecidable as soon as the signature contains an at least binary
relation symbol. Our proof proceeds by a many-one reduction chain
starting from the Post correspondence problem.
On the other hand, we establish the decidability of FSAT for monadic
first-order logic, i.e.\ where the signature only
contains at most unary function and relation symbols, as well as the enumerability
of FSAT for arbitrary
enumerable signatures.
To showcase an application of Trakhtenbrot's theorem, we continue our reduction chain with a many-one reduction from FSAT to separation logic.
All our results are mechanised in the framework of a growing Coq library of synthetic undecidability proofs.
\end{abstract}

\maketitle

\section{Introduction}

In the wake of the seminal discoveries concerning
the undecidability of first-order logic by Turing and Church in the 1930s,
a broad line of work has been pursued to characterise the border between decidable
and undecidable fragments of the original decision problem.
These fragments can be grouped either by syntactic restrictions
controlling the allowed 
function and relation symbols
or the quantifier prefix, or by semantic restrictions
on the admitted models (see~\cite{borger1997classical} for
a comprehensive description).

Concerning signature restrictions, already predating the undecidability results, Löwenheim had shown in 1915 that monadic first-order logic, admitting only signatures with at most unary symbols, is decidable~\cite{Lowenheim1915}.
Therefore, the successive negative results usually presuppose non-trivial signatures containing an at least binary symbol.

Turning to semantic restrictions, Trakhtenbrot proved in 1950 that, if only admitting finite models, the satisfiability problem over non-trivial signatures is still undecidable~\cite{trakhtenbrot50}.
Moreover, the situation is somewhat dual to the unrestricted case, since finite satisfiability (FSAT) is still enumerable while, in the unrestricted case, validity is enumerable.
As a consequence, finite validity cannot be characterised by a complete finitary deduction system and, resting on finite model theory, various natural problems in database theory and separation logic are undecidable.
The latter will be subject of a case study in Section~\ref{sec:seplog}.

Conventionally, Trakhtenbrot's theorem is proved by (many-one) reduction from the halting problem for Turing machines (see e.g.~\cite{borger1997classical,Libkin:2010:EFM:1965351}).
An encoding of a given Turing machine $M$ can be given as a formula $\phi_M$ such that the models of $\phi_M$ correspond to the runs of $M$.
Specifically, the finite models of $\phi_M$ correspond to terminating runs of $M$ and so a decision procedure for FSAT of $\phi_M$ would be enough to decide whether $M$ terminates or not.

Although this proof strategy is in principle explainable on paper, already the formal definition of Turing machines, not to mention their encoding in first-order logic, is not ideal for mechanisation in a proof assistant.
So for our Coq mechanisation of Trakhtenbrot's theorem, we follow a different strategy by starting from the Post correspondence problem (PCP), a simple matching problem on strings.
Similar to the conventional proof, we proceed by encoding every instance $R$ of PCP as a formula $\phi_R$ such that $R$ admits a solution iff $\phi_R$ has a finite model.
Employing the framework of synthetic undecidability~\cite{ForsterCPP,library_coqpl},
the computability of $\phi_R$ from $R$ is guaranteed since all functions definable in constructive type theory are computable without reference to a concrete model of computation.

Both the conventional proof relying on Turing machines and our elaboration starting from PCP actually produce formulas in a custom signature well-suited for the encoding of the seed decision problems.
The sharper version of Trakhtenbrot's theorem, stating that a signature with at least one binary relation (or one binary function and one unary relation) is enough to turn FSAT undecidable, is in fact left as an exercise in e.g.~Libkin's book~\cite{Libkin:2010:EFM:1965351}.
However, at least in a constructive setting, this generalisation is non-trivial and led us to mechanising a chain of signature transformations eliminating and compressing function and relation symbols step by step.

The constructive and type-theoretic setting introduces subtleties that remain hidden from view in a classical approach to (finite) model theory.
Among these subtleties, quotients are critical for signature reductions but not generally constructively available.
With suitable notions of finiteness (here defined as listability) and discreteness (decidable equality), however, we are able to build finite and decidable quotients (\Cref{thm:fin_quotient}) sufficient for our purposes.
Moreover, the usual set-theoretic constructions in model theory can be simulated in type theory just to some extend, with the prominent lack of a (computable) power set.
Fortunately, it is possible to use the notion of weak power set (\Cref{appendix:lem:wposet}) for these constructions.
Relatedly, finiteness \emph{does not entail} computability in the constructive setting, so the Tarski semantics has to be refined.
In a critically useful result, we establish that finite satisfiability is not impacted by the further requirement of the discreteness of the model (\Cref{thm:quotient_FSAT}).

Complementing the undecidability result, we further formalise that FSAT is enumerable for enumerable signatures and decidable for monadic signatures.
Again, both of these standard results come with their subtleties when explored in a constructive approach to finite model theory.

\smallskip

In summary, the main contributions of this paper are the following:
\begin{itemize}
	\item
	we provide an axiom-free Coq mechanisation comprising a full classification of finite satisfiability with regards to the signatures allowed;\footnote{Downloadable from \url{http://www.ps.uni-saarland.de/extras/fol-trakh-ext/} and systematically hyperlinked with the definitions and theorems in this PDF.}
	\item
	we present a streamlined proof strategy for Trakhtenbrot's theorem well-suited for mechanisation and simple to explain informally, basing on PCP;\@
	\item
	we give a constructive account of signature transformations and the treatment of interpreted equality typically neglected in a classical development;
       \item
        compared to the conference version of this paper~\cite{ijcarversion}, we contribute a refined analysis of the
        conditions allowing for decidable FSAT in the case of monadic signatures,
        introducing the notion of discernability of symbols (e.g.~\Cref{thm:full_monadic_fol});
	\item
	additionally to~\cite{ijcarversion} also, we mechanise a many-one reduction from FSAT
        to the satisfiability problem of separation logic following~\cite{10.1007/3-540-45294-X_10} (e.g.~\Cref{thm:SL});
        \item
        finally, we point out that some of the involved proofs that were just sketched in~\cite{ijcarversion}
        have been expanded (e.g.~\Cref{thm:quotient_FSAT}), and reworked for better readability (e.g.~\Cref{coq:FIN_DISCR_DEC_nSAT_FIN_DEC_2SAT} and~\Cref{lemma:remove_funcs}).
\end{itemize}

\noindent
The paper is structured as follows.
We first describe the type-theoretical framework for undecidability proofs and the representation of first-order logic in Section~\ref{sec:prelims}.
We then outline our variant of Trakhtenbrot's theorem for a custom signature in Section~\ref{sec:trakh_prelim}.
This is followed in Section~\ref{sec:finmod} by a development of enough constructive finite model theory to reach the
stronger form of Trakhtenbrot's theorem where the signature is only assumed to contain one at least binary symbol.
In Section~\ref{sec:decidability} we switch to decidability results for FSAT over monadic signatures,
first assuming that symbols enjoy decidable equality, then maximally strengthening the decidability results
to the case of (the weaker notion of) decidable Boolean discernability. In Section~\ref{sec:trakh_full}, we conclude
with the precise decidability/undecidability classification of FSAT\@.
Section~\ref{sec:seplog} comprises the case study on the undecidability of separation logic and we end with a brief discussion of the Coq development and future work in Section~\ref{sec:discussion}.

\section{First-Order Satisfiability in Constructive Type Theory}%
\label{sec:prelims}

In order to make this paper accessible to readers unfamiliar with constructive type theory, we outline the required features of Coq's underlying type theory, the synthetic treatment of computability available in constructive mathematics, some properties of finite types, as well as our representation of first-order logic.

\subsection{Basics of Constructive Type Theory}

We work in the framework of a constructive type theory such as the one implemented in Coq, providing a predicative hierarchy of type universes $\Type$ above a single impredicative universe $\Prop$ of propositions.
On type level, we have the unit type $\Unit$ with a single element $\unit:\Unit$,
the void type $\Void$, function spaces $X\to Y$, products $X\times Y$, sums $X+Y$, dependent products $\forall x:X.\,F\,x$, and dependent sums $\SigType{x:X}{F\,x}$.
On propositional level, these types are denoted using the usual logical notation ($\top$, $\bot$, $\to$, $\land$, $\lor$, $\forall$, and $\exists$).

We employ the basic inductive types
of Booleans ($\Bool\bnfdef \btrue\mid\bfalse$), of
Peano natural numbers ($n:\Nat\bnfdef 0\mid 1{+}n$),
the option type ($\Opt\,X\bnfdef \some{x}\mid\none$),
and lists ($l:\List\,X\bnfdef\cnil\mid x\ccons l$).
We write $\clen l$ for the \emph{length} of a list, $l\capp m$ for the \emph{concatenation}
of $l$ and $m$, $x\inl l$ for \emph{membership},
and simply $f\map [x_1;\ldots;x_n]\cdef [f\,x_1;\ldots;f\,x_n]$ for the
\emph{map} function.
We denote by $X^n$ the type of vectors of length $n:\Nat$
and by $\Fin n$ the finite types understood as indices
$\{0,\ldots,n-1\}$.
The definitions/notations for lists are shared with vectors $\vec v:X^n$.
Moreover, when $i:\Fin n$ and $x:X$, we denote by $\vec v_i$ the $i$-th component of $\vec v$
and by $\subst{\vec v\,}xi$ the vector $\vec v$ with $i$-th component updated to value $x$.

\subsection{Synthetic (Un-)decidability}
\label{sec:undec}

We review the main ingredients of our synthetic approach to
decidability and undecidability~\cite{forster2018verification,ForsterCPP,forster2019certified,library_coqpl,Larchey-WendlingForster:2019:H10_in_Coq,spies2020undecidability}, based on the computability of all
functions definable in constructive type theory%
\footnote{A result shown and applied for many variants of constructive type theory
and which Coq designers are committed to maintain as Coq evolves.}
or other constructive foundations of mathematics.
We first introduce standard notions of computability theory without referring to a formal model of computation, e.g.\ Turing machines.

\begin{definition}%
\label{def:decidable}
	A \emph{problem} or predicate $p:X\to \Prop$ is
	\begin{itemize}
		\item
		\setCoqFilename{decidable}
		\emph{\coqlink[decidable_bool_eq]{decidable}} if there is $f:X \to\Bool$ with
		$\forall x.\,p\,x\toot f\,x=\btrue$.
		\item
		\setCoqFilename{enumerable}
		\emph{\coqlink[opt_enum_t]{enumerable}} if there is $f:\Nat\to\Opt\,X$ with $\forall x.\,p\,x\toot \exists n.\, f\,n=\some x$.
	\end{itemize}
	These notions generalise to predicates of higher arity.
	Moreover, a type $X$ is
	\begin{itemize}
		\item
		\setCoqFilename{enumerable}
		\emph{\coqlink[type_enum_t]{enumerable}} if there is $f:\Nat\to\Opt\,X$ with $\forall x. \exists n.\, f\,n=\some x$.
		\item
		\setCoqFilename{decidable}
		\emph{\coqlink[discrete]{discrete}} if equality on $X$ (i.e.\ $\lambda xy:X.\,x=y$) is decidable.
		\item
		a \emph{data type} if it is both enumerable and discrete.
	\end{itemize}
\end{definition}

\noindent
\setCoqFilename{decidable}
Using the expressiveness of dependent types, we equivalently tend to establish the decidability
of a predicate $p:X\to\Prop$ by giving a function \coqlink[decidable]{$\forall x:X.\,p\,x +\neg p\,x$}.
Note that it is common to mechanise decidability results in this synthetic sense (e.g.~\cite{braibant2010efficient,maksimovic2015hocore,schafer2015completeness}).
Next, decidability and enumerability transport along reductions:

\setBaseUrl{{http://www.ps.uni-saarland.de/extras/fol-trakh-ext/website/Undecidability.}}
\setCoqFilename{Synthetic.Definitions}
\begin{definition}[][reduces]
	A problem $p:X\to\Prop$ \emph{(many-one) reduces} to $q:Y\to\Prop$, written $p\red q$,
        if there is a function $f: X\to Y$ such that $p\,x\toot q\,(f\,x)$ for all $x:X$.\footnote{Or equivalently, the \setCoqFilename{Synthetic.SyntheticDefinitions}%
        \coqlink[reduction_dependent]{dependently typed characterisation} $\forall x:X.\,\SigType{y:Y}{p\,x\toot q\,y}$.}
\end{definition}

\setBaseUrl{{http://www.ps.uni-saarland.de/extras/fol-trakh-ext/website/Undecidability.TRAKHTENBROT.}}
\begin{myfact}%
	\label{fact:reduction}
	Assume $p:X\to\Prop$, $q:Y\to\Prop$ and $p\red q$:
        \setCoqFilename{decidable}%
        \coqlink[ireduction_decidable]{(1)}~if $q$ is decidable, then so is $p$ and
        \setCoqFilename{enumerable}%
	\coqlink[ireduction_opt_enum_t]{(2)}~if $X$ and $Y$ are data types and $q$ is enumerable,
                then so is $p$.
\end{myfact}

\begin{proof}
	These are Fact 2.11 and Lemma 2.12 in~\cite{ForsterCPP}.
\end{proof}

Item~(1) implies that we can justify the undecidability of a target problem by reduction from a seed problem known to be undecidable, such as the halting problem for Turing machines.
This is in fact the closest rendering of undecidability available in a synthetic setting, since the underlying type theory is consistent with the assumption that every problem is decidable\rlap.\footnote{As witnessed by classical set-theoretic models satisfying $\forall p:\Prop.\, p+\neg p$ ({cf.}~\cite{werner_sets_1997}).}\enspace%
Nevertheless, we believe that in the intended effective interpretation for synthetic computability, a typical seed problem is indeed undecidable and so are the problems reached
by verified reductions.
More specifically, since the usual seed problems are not co-enumerable, (2) implies that the reached problems are not co-enumerable either.

\smallskip

Given its simple inductive characterisation involving only basic types of lists and Booleans,
the (binary) Post correspondence problem (\BPCP) is a well-suited seed problem for compact encoding into first-order logic.

\setBaseUrl{{http://www.ps.uni-saarland.de/extras/fol-trakh-ext/website/Undecidability.TRAKHTENBROT.}}
\setCoqFilename{bpcp}
\begin{definition}
	Given a list $R: \List(\List\,\Bool\times \List\,\Bool)$ of pairs $s/t$ of Boolean strings\rlap,\footnote{Notice that
        the list $R$ is viewed as a (finite) set of pairs $s/t\inl R$ (hence ignoring the order or duplicates),
        while $s$ and $t$, which are also lists, are viewed a strings (hence repetitions and ordering matter for $s$ and
        $t$).}
        we define \emph{derivability} of a pair $s/t$ from $R$ (denoted by \coqlink[pcp_hand]{$R\deriv s/t$})
        and \emph{solvability} (denoted by \coqlink[BPCP_problem]{$\BPCP\,R$}):
	\[\infer1{s/t \inl R}{R\deriv s/t}                               \quad\qquad
	\infer2{s/t \inl R}{R\deriv u/v}{R\deriv (s\capp u)/(t\capp v)}  \quad\qquad
	\infer1{R\deriv s/s}{\BPCP\,R}
        \]
\end{definition}

\begin{myfact}%
  \label{fact:BPCP_undec}
        Given a list $R:\List(\List\,\Bool\times \List\,\Bool)$, the derivability predicate $\lambda s\,t.R\deriv s/t$
        is \coqlink[bpcp_hand_dec]{decidable}. However, the halting problem for Turing machines reduces to \BPCP.
\end{myfact}

\setCoqFilename{red_utils}
\begin{proof}
        We give of proof of the decidability of $R\deriv s/t$ by induction on $\clen s+\clen t$.
        We also provide a \coqlink[BPCP_BPCP_problem_eq]{trivial proof} of the equivalence of two definitions of \BPCP.
        See~\cite{forster2018verification,forster2019certified} for details on the reduction from the halting problem to \BPCP.
\end{proof}

It might at first appear surprising that derivability $\lambda s\,t.R\deriv s/t$ is decidable while \BPCP\ is reducible from the halting
problem (and hence undecidable). This simply illustrates that undecidability is rooted in the unbounded existential quantifier in the equivalence
$\BPCP\,R\toot \exists s.\,  R\deriv s/s$.

\smallskip

In summary, the approach to undecidability used in this and other papers~\cite{ForsterCPP,forster2018verification,forster2019certified,Larchey-WendlingForster:2019:H10_in_Coq,spies2020undecidability,dudenhefner2020undecidability,kirst:2001:synthetic} contributing to the Coq Library of Undecidability Proofs~\cite{library_coqpl} is to verify (synthetic) many-one reductions from a problem known to be undecidable, rooted by the halting problem for Turing machines.
In our case, we start from \BPCP as well-suited seed for finite first-order satisfiability, backed by the reduction from Turing machine halting to \BPCP\ verified in~\cite{forster2018verification}.
Therefore, our mechanised undecidability results as reported in \Cref{sec:trakh_full,sec:seplog} are statements of the form $\BPCP\red P$ for problems $P$ expressing finite first-order satisfiability and satisfiability in separation logic, respectively.
Given the constructive setting, these reductions expressed in Coq's type theory may then be interpreted as computable functions, transporting the undecidability (specifically, non-co-enumerability) of \BPCP\ to said problems, along the lines of Fact~\ref{fact:reduction}.
This improves on the ubiquitous pen-and-paper practice to sketch an algorithm and leave its computability implicit in that, by formally defining the algorithm as a Coq term, its computability is guaranteed by its very construction.

The even more explicit alternative would be to resort to a concrete model of computation, e.g.\
by implementing high-level reduction functions operating on abstract data structures as Turing machines operating over textual or binary
encodings of those data-structures.
Then the notion of undecidability could be formally bootstrapped by showing that there is no Turing machine deciding the
(textually encoded) halting problem, and other problems could be shown undecidable by verifying reductions computable by a Turing machine.
However, this approach would require low-level coding in such a model, which in principle can be supported by tools to a certain extent~\cite{forster_et_al:LIPIcs:2019:11072}, but still introduces enormous overhead unnecessary to cope with in a setting providing its own implicit notion of computability.

\subsection{Constructive Finiteness}

We present four tools for manipulating finite types: the
 \emph{finite pigeon hole principle} (PHP) here established without assuming discreteness,
the \emph{well-foundedness} of strict orders over finite types, \emph{quotients} over strongly decidable
equivalences that map onto $\Fin n = \{0,\ldots,n-1\}$, and the \emph{weak powerset} finitely enumerating every weakly
decidable predicate over a finite type. But first, let us fix a definition of finiteness.

\setBaseUrl{{http://www.ps.uni-saarland.de/extras/fol-trakh-ext/website/Undecidability.Shared.Libs.DLW.Utils.}}
\setCoqFilename{fin_base}
\begin{definition}
	A \emph{type $X$ is \coqlink[finite_t]{finite}} if there is a list $l_X$ s.t. $\forall x:X.\,x\inl l_X$, and
	a \emph{predicate $p:X\to\Prop$ is \coqlink[fin_t]{finite}} if there is a list $l_p $ s.t. $\forall x.\, p\,x\toot x\inl l_p $.
\end{definition}

Note that in constructive settings there are various alternative characterisations of finiteness%
\footnote{And these alternative characterisations are not necessarily constructively equivalent.}
(bijection with $\Fin n$ for some $n$; negated infinitude for some definition of infiniteness; etc)
and we opted for the above since it is easy to work with while transparently capturing the expected meaning.
One can distinguish \emph{strong} finiteness in $\Type$ (i.e.\ $\SigType{l_X:\List\,X}{\forall x.\, x\inl l_X}$)
from \emph{weak} finiteness in $\Prop$ (i.e.\ $\exists l_X:\List\,X.\,\forall x.\, x\inl l_X$), the list $l_X$ being required
computable in the strong case.
Strong finiteness implies weak finiteness but the converse holds only in restricted contexts, e.g.\ inside proofs of
propositions in $\Prop$.

\smallskip

For the finite PHP, the typical classical proof requires the discreteness
of $X$ to design transpositions/permutations. Here we avoid discreteness
completely, the existence of a duplicate being established without
actually computing one.

\setCoqFilename{php}
\begin{theorem}[Finite PHP][PHP_rel]%
\label{thm:finite_php}
Let $R:X\to Y\to\Prop$ be a binary relation
and $l:\List\,X$ and $m:\List\,Y$ be two lists where
$m$ is shorter than $l$ $(\clen m < \clen l)$.
If $R$ is total from $l$ to $m$
$(\forall x.\, x\inl l \to \exists y.\, y\inl m \land R\,x\,y)$ then
the values at two distinct positions in $l$ are related to the same $y$ in $m$,
i.e.\ there exist $x_1,x_2\inl l$ and $y\inl m$ such that $l$ has shape
$l=\cdots\capp x_1\ccons \cdots\capp x_2\ccons \cdots$ and $R\,x_1\,y$ and $R\,x_2\,y$.
\end{theorem}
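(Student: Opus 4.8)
**

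The plan is to prove the finite pigeonhole principle by induction on the list $l$, carefully avoiding any use of discreteness of $X$ or $Y$. The key insight guiding the constructive proof is that we never need to actually identify which element is duplicated; we only need to establish the existence of a repeated relationship, which we can do by a purely structural argument on how totality forces elements to overlap when the codomain list $m$ is too short.

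First I would set up the induction on $l$, generalising over $m$. The base case where $l = \cnil$ is vacuous, since $\clen m < \clen{\cnil} = 0$ is impossible. For the inductive step, suppose $l = x \ccons l'$, so $\clen m < 1 + \clen{l'}$, i.e.\ $\clen m \leq \clen{l'}$. By totality applied to the head $x \inl l$, there exists some $y_0 \inl m$ with $R\,x\,y_0$. The natural case split is on whether some element of the tail $l'$ is also related to this particular $y_0$. In the first case, if there is $x_2 \inl l'$ with $R\,x_2\,y_0$, then taking $x_1 \cdef x$ together with $x_2$ and $y = y_0$ immediately yields the required shape $l = x_1 \ccons \cdots \capp x_2 \ccons \cdots$, and we are done without recursing.

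The harder case is when no element of $l'$ is related to $y_0$. Here the strategy is to remove $y_0$ from $m$, obtaining a shorter list $m'$ with $\clen{m'} = \clen m - 1$, and to argue that $R$ restricted to $l'$ and $m'$ is still total: every $x' \inl l'$ is related to some $y \inl m$ by totality, and since by assumption that $y$ cannot be $y_0$, it must lie in $m'$. Then $\clen{m'} = \clen m - 1 \leq \clen{l'} - 1 < \clen{l'}$, so the induction hypothesis applies to $l'$ and $m'$, producing $x_1, x_2 \inl l'$ and $y \inl m'$ with the duplication, and prepending $x$ to recover the shape in $l$. The main obstacle is precisely this constructive removal of $y_0$ from $m$ and the accompanying case analysis ``is some tail element related to $y_0$?'', both of which naively seem to require deciding equality on $Y$ or comparing $R\,x'\,y_0$ as a decidable proposition.

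The delicate point, and where I expect the real work to lie, is discharging these case splits \emph{without} discreteness or decidability assumptions on $R$. The trick is to phrase the argument so that the case distinction is made not on a decidable predicate but via a constructive analysis of list positions: rather than asking whether a specific $y_0$ recurs, one reformulates totality as assigning to each of the $\clen l$ elements of $l$ a \emph{position} (an index into $m$) and applies a pigeonhole on indices in $\Fin{\clen m}$, where equality \emph{is} decidable since $\Fin n$ is discrete. Concretely, one builds from the totality hypothesis a function-like relation into positions of $m$, reduces to the numerical pigeonhole ``a map from a set of size $> \clen m$ into $\{0,\ldots,\clen m - 1\}$ must collide'', and then transports the collision back through the positions to recover $x_1, x_2$ and the common $y = m_i$. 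This way the only equality tests occur on the discrete type $\Fin{\clen m}$, the existence of the duplicate is extracted without computing one by working inside $\Prop$, and the entire argument goes through for an arbitrary relation $R$ on possibly non-discrete $X$ and $Y$.
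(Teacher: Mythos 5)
Your proposal is correct, but it takes a genuinely different route from the paper. The paper first treats the special case where $R$ is the identity on $X$ (totality then becomes the inclusion $l\incl m$) and proves, by structural induction on $m$, the disjunction: either $l$ contains a duplicate, or $l$ and $m$ are permutations of one another; since permutations preserve length, $\clen m<\clen l$ rules out the second disjunct, and the extension to arbitrary $R$ is then a short exercise (essentially building, inside $\Prop$, the list of witnesses in $m$ chosen for the elements of $l$). The point of that argument is that the disjunction ``duplicate or permutable'' is established with no case split on any undecidable proposition, so decidable equality is never used anywhere, not even on indices. Your proof instead transfers everything to positions: totality yields a total $\Prop$-valued relation from $\Fin{\clen l}$ to $\Fin{\clen m}$, a finite-choice step (legitimate because the goal is a $\Prop$, so $\Prop$-existentials may be eliminated, and the domain is finite) upgrades it to an actual map $\Fin{\clen l}\to\Fin{\clen m}$, and the standard functional pigeonhole on the discrete type $\Fin{\clen m}$ produces the collision, which is transported back to a split of $l$. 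Both routes are constructively sound, and both in fact need a finite-choice-style step to bridge the relational hypothesis; yours buys reduction to the most classical, reusable form of PHP at the price of the relation-to-function transfer and positional bookkeeping, while the paper's buys complete independence from decidability via the permutation disjunction. One caution for a mechanisation: the relation-to-function transfer is precisely where your first, abandoned attempt broke down, so the finite choice lemma ($\forall j.\,\exists i.\,T\,j\,i$ implies $\exists f.\,\forall j.\,T\,j\,(f\,j)$ for a finite index domain, proved by induction with the goal in $\Prop$) should be stated and proved explicitly rather than left implicit in the phrase ``function-like relation''.
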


\begin{proof}
We start with the case where $R$ is the identity relation $=_X$ on $X$, hence
we want to establish that $l$ contains a duplicate.
 We first prove the following \coqlink[length_le_and_incl_implies_dup_or_perm]{generalised statement}:
if $\clen m\le\clen l$ and $l\incl m$ (i.e.\ $\forall x.\, x\inl l \to x\inl m$)
then either $l$ contains a duplicate
or $l$ and $m$ are permutable. We establish the generalised statement
by structural induction on $m$.

In particular, when $\clen m<\clen l$ then $l$ and $m$ cannot be permutable
(because permutations preserve length), hence \coqlink[finite_pigeon_hole]{$l$ must contain a duplicate}.
Generalizing from $=_X$ to an arbitrary relation $R:X\to Y\to\Prop$ is then a simple
exercise.
\end{proof}

Using the PHP, given a strict order\footnote{i.e.\ an irreflexive and transitive binary relation.} over a finite type $X$,
any descending chain has length bounded
by the size of $X$ as measured by the length of the list enumerating  $X$.

\setBaseUrl{{http://www.ps.uni-saarland.de/extras/fol-trakh-ext/website/Undecidability.}}
\setCoqFilename{Shared.Libs.DLW.Wf.wf_finite}
\begin{myfact}[][wf_strict_order_finite]%
	\label{fact:wf}
	Every strict order on a finite type is well-founded.
\end{myfact}

\begin{proof}
	For a constructive proof, one can for instance show that descending chains cannot contain a duplicate
        (otherwise this would give an impossible cycle in a strict order), hence by the PHP,
        the length of descending chains is bounded by the length of the enumerating list of the finite type.
\end{proof}

Coq's type theory does not provide quotients in general (see e.g.~\cite{cohen13}) but one can build
computable quotients in certain conditions, here for a decidable equivalence
relation of which representatives of equivalence classes are listable.

\setCoqFilename{Shared.Libs.DLW.Utils.fin_quotient}
\begin{theorem}[Finite decidable quotient][decidable_EQUIV_fin_quotient]%
\label{thm:fin_quotient}
Let ${\sim}:X\to X\to\Prop$ be a decidable equivalence with
$\SigType{l_r:\List\,X}{\forall x\exists y.\,y\inl l_r\land x\sim y}$, i.e.\
finitely many equivalence classes.\footnote{Hence $l_r$ denotes a list of \emph{representatives} of equivalence classes.}\enspace%
Then one can compute the quotient $X/{\sim}$ onto $\Fin n$ for some $n$, i.e.
$n:\Nat$, $c:X\to\Fin n$ and $r:\Fin n\to X$ s.t.
$\forall p.\,c\,(r\,p) = p$ and $\forall x y.\,x\sim y\toot c\,x = c\, y$.
\end{theorem}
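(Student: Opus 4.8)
The plan is to first replace the given list $l_r$ of representatives by a list $l'$ containing exactly one representative per equivalence class, and then use positions in $l'$ as the codes. Concretely, I would compute $l'$ by a deduplication pass over $l_r$: processing $l_r$ from left to right and, using decidability of $\sim$, retaining an element $y$ only if it is not $\sim$-equivalent to any element already kept. A routine induction on $l_r$ then yields a list $l'$ with two properties: \emph{covering} — every $y\inl l_r$ satisfies $y\sim z$ for some $z\inl l'$; and \emph{irredundancy} — no two elements at distinct positions of $l'$ are $\sim$-equivalent. Combining covering with the hypothesis that every $x$ is $\sim$-equivalent to some $y\inl l_r$, transitivity gives that every $x:X$ is $\sim$-equivalent to some element of $l'$.

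Next I would set $n\cdef\clen{l'}$ and define $r:\Fin n\to X$ by $r\,p\cdef l'_p$, the $p$-th element of $l'$ (total since $p:\Fin n$ is a valid index). For the code map, given $x:X$ I would search $l'$ for the first position $p$ such that $x\sim l'_p$; since $\sim$ is decidable this is a decidable bounded search over the finite index set $\Fin n$, and since we have just shown that such a position provably exists, the search succeeds and returns a genuine $c\,x:\Fin n$ rather than an option.

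It then remains to verify the two equations. For $c\,(r\,p)=p$: writing $q\cdef c\,(l'_p)$ for the first position with $l'_p\sim l'_q$, minimality gives $q\le p$ (since $p$ itself qualifies by reflexivity), while $q<p$ would make $l'_q$ and $l'_p$ two distinct $\sim$-equivalent elements of $l'$, contradicting irredundancy; hence $q=p$. For $x\sim y\toot c\,x=c\,y$: if $x\sim y$ then by symmetry and transitivity $x\sim l'_q\toot y\sim l'_q$ for every index $q$, so the first witnessing positions coincide and $c\,x=c\,y$; conversely if $c\,x=c\,y=p$ then $x\sim l'_p$ and $y\sim l'_p$, whence $x\sim y$ by symmetry and transitivity.

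The only genuinely delicate point, and the one that matters for the constructive mechanisation, is the definition of $c$: I must turn the \emph{propositional} existence of an equivalent element of $l'$ into the \emph{data} of its first position. This is sound because the predicate $\lambda p.\,x\sim l'_p$ is decidable and the search ranges over the finite, bounded domain $\Fin n$, so a constructive linear search guarded by the existence proof produces the witness, the propositional hypothesis being consumed only to certify that the search does not run off the end of $l'$. Everything else — the deduplication and the two verifications — is routine once decidability of $\sim$ and irredundancy of $l'$ are in hand.
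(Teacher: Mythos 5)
Your proposal is correct and follows essentially the same route as the paper's proof: deduplicate the representative list using decidability of $\sim$, take $r$ to be positional indexing into the deduplicated list, and define $c$ by first-position search, which is constructively legitimate since the search predicate is decidable over the finite index set. The additional detail you give on verifying the two equations and on extracting the witness from the propositional existence is exactly the content the paper leaves implicit.
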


\begin{proof}
From the list $l_r$ of representatives of equivalence classes, remove duplicate representatives
using the strong decidability of $\sim$. This gives a list $l'_r$ which now contains exactly
one representative for each equivalence class. Convert $l'_r$ to a vector $\vec v$. The function $r$
(representative) is defined by $r\cdef \lambda p.\,\vec v_p$. The function $c$ (for class) is
simple search: $c\,x$ is the first (and unique) $p$ such that $\vec v_p \sim x$.
\end{proof}

Using Theorem~\ref{thm:fin_quotient} with identity over $X$ as equivalence,
we get bijections between finite, discrete types and the type family
$(\Fin n)_{n:\Nat}$.\footnote{For a given $X$, the value $n$ (usually called cardinal) is unique
by the PHP.}

\setCoqFilename{Shared.Libs.DLW.Utils.fin_bij}
\begin{corollary}[][finite_t_discrete_bij_t_pos]%
\label{coro:fin_type}
If $X$ is a finite and discrete type then one can compute~$n:\Nat$
and a bijection from $X$ to $\Fin n$.
\end{corollary}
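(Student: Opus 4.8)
The plan is to instantiate Theorem~\ref{thm:fin_quotient} with the equivalence relation $\sim$ taken to be propositional equality $=_X$ on $X$, and then to verify that the resulting quotient map is in fact a bijection. All the real work is already done by that theorem, so the corollary amounts to checking its hypotheses and unpacking its conclusion.

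First I would check that $=_X$ satisfies the two assumptions of Theorem~\ref{thm:fin_quotient}. It is an equivalence relation trivially (reflexivity, symmetry, and transitivity of equality), and it is decidable precisely because $X$ is assumed discrete. For the finitely-many-classes condition $\SigType{l_r:\List\,X}{\forall x\exists y.\,y\inl l_r\land x = y}$, I would simply reuse the enumerating list $l_X$ witnessing finiteness of $X$: taking $l_r\cdef l_X$ and, for each $x$, choosing $y\cdef x$ works, since $x\inl l_X$ and $x=x$. Applying the theorem then yields $n:\Nat$ together with $c:X\to\Fin n$ and $r:\Fin n\to X$ satisfying $\forall p.\,c\,(r\,p)=p$ and $\forall xy.\,x = y\toot c\,x=c\,y$.

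I would then claim that $c$ is the desired bijection, with inverse $r$. The equation $c\,(r\,p)=p$ shows that $r$ is a right inverse of $c$. For the other direction, the backward implication of $x = y\toot c\,x=c\,y$ states exactly that $c$ is injective; instantiating $c\,(r\,p)=p$ at $p\cdef c\,x$ gives $c\,(r\,(c\,x))=c\,x$, and injectivity of $c$ then forces $r\,(c\,x)=x$. Hence $r$ and $c$ are mutually inverse, so $c:X\to\Fin n$ is a bijection. There is essentially no obstacle here, as the entire content of the corollary is packaged inside Theorem~\ref{thm:fin_quotient}; the only minor point to be careful about is extracting genuine bijectivity (both inverse laws) from the quotient data, which follows from the injectivity encoded in the $\toot$ of the class-equality characterisation together with the section law $c\,(r\,p)=p$.
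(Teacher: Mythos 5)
Your proposal is correct and follows exactly the paper's route: the paper likewise obtains this corollary by instantiating Theorem~\ref{thm:fin_quotient} with identity over $X$ as the (decidable, by discreteness) equivalence, using the enumerating list as representatives. Your additional unpacking of the quotient data into the two inverse laws (section law plus injectivity from the $\toot$) is precisely the routine verification the paper leaves implicit.
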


We conclude this section with the question of the finiteness of powersets.
There is no notion of finite powerset in type theory: even for the unit/singleton
type $\Unit$, there is no list enumerating the predicates in $\Unit\to\Prop$,
even up to extensional equivalence\rlap.\footnote{As there is no list extensionally
enumerating $\Prop$ itself.}\enspace However, there is a notion of \emph{weak powerset}.
Recall that a predicate $p:X\to\Prop$ is \emph{weakly decidable} if it satisfies
$\forall x.\, p\,x\lor\lnot p\,x$, its specialized instance of
the Law of Excluded Middle (LEM)\rlap.\footnote{The general LEM
($\forall P:\Prop.\, P\lor\lnot P$) cannot be established constructively.}\enspace
Assuming $X$ is finite, one can compute a list containing all the weakly decidable
predicates in $X\to \Prop$.

\setBaseUrl{{http://www.ps.uni-saarland.de/extras/fol-trakh-ext/website/Undecidability.Shared.Libs.DLW.Utils.}}
\setCoqFilename{fin_upto}
\begin{lemma}[Weak powerset][finite_t_weak_dec_powerset]%
\label{appendix:lem:wposet}
For every finite type $X$, one can compute a list $ll:\List(X\to\Prop)$ which contains every
weakly decidable predicate in $X\to\Prop$ up to extensional equivalence, i.e.\ $ll$ satisfies
$\forall p:X\to\Prop.\,(\forall x.\, p\,x\lor\lnot p\,x)\to\exists q.\, q\inl ll\land \forall x.\, p\,x\toot q\,x$.
\end{lemma}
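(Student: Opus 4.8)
The plan is to exploit the fact that the conclusion is a proposition in $\Prop$: although $X$ is only assumed finite and not discrete, and $p$ is only \emph{weakly} decidable, the selection of a matching $q$ need not be computable, so the pointwise case analysis furnished by weak decidability already suffices to pick it out. By contrast, the list $ll$ itself must be produced at type level, uniformly in $p$, and this I can do from the covering list alone. Keeping these two layers apart is the guiding idea.

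Concretely, let $l_X:\List\,X$ be the list witnessing finiteness, so $\forall x.\,x\inl l_X$. First I would compute the \emph{power-list} of $l_X$, i.e.\ the (strongly finite) list of all sublists of $l_X$; this is a routine structural recursion on $l_X$ and requires no discreteness. To each sublist $m$ I associate the predicate $q_m\cdef\lambda x.\,x\inl m$, which is a perfectly good element of $X\to\Prop$ even without decidable equality, since $x\inl m$ is a proposition we never need to decide. I then set $ll$ to be the image of the power-list under $m\mapsto q_m$; this is computable and discharges the ``one can compute a list'' part of the statement. (An equivalent framing indexes candidates by Boolean vectors over the positions of $l_X$, reading $q$ off a truth table; the sublist version is just the more direct rendering of a ``powerset''.)

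It remains, given a weakly decidable $p$, to exhibit some $q\inl ll$ extensionally equivalent to $p$; since the whole goal lives in $\Prop$, I am free to case-split on $p\,x\lor\lnot p\,x$. I would prove, by structural induction on $l_X$, the auxiliary claim that there is a sublist $m$ of $l_X$ with $\forall x.\,x\inl m\toot(x\inl l_X\land p\,x)$: in the cons case $y\ccons l'$, weak decidability of $p\,y$ decides whether to prepend $y$ to the sublist obtained from the induction hypothesis, and the required equivalence follows from an elementary case analysis on membership that, crucially, never needs to decide $x=y$. Taking this $m$, the associated $q_m$ lies in $ll$; and since $\forall x.\,x\inl l_X$ by finiteness, the right-hand side collapses to $p\,x$, yielding $\forall x.\,p\,x\toot q_m\,x$ as required.

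The main obstacle is conceptual rather than computational: one must hold the $\Type$-level construction of $ll$ strictly apart from the $\Prop$-level choice of the witness. The tempting shortcut is to build ``the'' sublist of elements of $l_X$ satisfying $p$ directly, but that would demand deciding $p$, hence the discreteness of $X$, which we do not have. The resolution is precisely that the existential in the conclusion is propositional, so weak decidability is enough to \emph{select} a pre-computed candidate even though it is not enough to \emph{compute} one.
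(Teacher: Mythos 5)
Your proof is correct, and it shares the paper's engine: a structural induction on the enumerating list $l_X$ in which the candidate collection doubles at each element (head included or excluded), and the weakly decidable $p$ is matched by case-splitting on $p$ at the head inside the purely propositional existence proof. The packaging, however, differs. The paper's recursion constructs the predicates themselves: from the list $ll$ obtained for the tail it forms $(\lambda q\,z.\, x\neq z\land q\,z)\map ll\capp(\lambda q\,z.\, x= z\lor q\,z)\map ll$, so the invariant ``$ll$ contains every weakly decidable predicate up to equivalence on the processed prefix'' is proved along that same recursion, and the disequality guard $x\neq z$ is needed to force falsity at $x$ in the exclude branch. You instead compute the power-list of subsequences of $l_X$, take their membership predicates $\lambda z.\,z\inl m$ as candidates, and isolate the matching step as a separate propositional lemma: for weakly decidable $p$ there is a subsequence $m$ with $\forall x.\,x\inl m\toot(x\inl l_X\land p\,x)$. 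Stating the invariant as this equivalence makes the disequality guard unnecessary (in the $\lnot p\,y$ case, $y\inl m'$ is already contradictory), at the cost of the final --- trivial --- step that $\forall x.\,x\inl l_X$ collapses the conjunction. Both routes keep the $\Type$/$\Prop$ separation you rightly emphasise, neither needs discreteness, and both yield the expected $2^{\clen{l_X}}$ candidates; yours has the small advantage that the power-list and the selection lemma are independently reusable, while the paper's has the advantage that its recursion invariant is literally the lemma statement relativised to a prefix, so the conclusion is immediate. One point worth keeping explicit in a write-up (as the paper does after the lemma): in both constructions the resulting $ll$ may also contain predicates that are \emph{not} weakly decidable, which is harmless since the statement only demands containment in one direction.
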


\begin{proof}
The list $ll$ is built by induction on the list $l_X:\List\,X$ enumerating $X$.
If $l_X$ is $\cnil$ then $X$ is a void type and thus $ll \cdef (\lambda z.\top)\ccons\cnil$ fits.
If $l_X$ is $x\ccons l$ then we apply the induction hypothesis to $l$ and
get $ll$ for the finite sub-type composed of the elements of $l$ and
we define $ll'\cdef (\lambda p\,z.\, x\neq z\land p\,z)\map ll\capp
                     (\lambda p\,z.\, x= z\lor p\,z)\map ll$. We
check that $ll'$ contains every weakly decidable predicate over $x\ccons l$.
Notice that $\clen{ll'}=2\clen{ll}$ in the induction step, hence one could easily
show that \coqlink[finite_t_weak_dec_powerset]{$\clen{ll}=2^{\clen{l_X}}$}, recovering the cardinality of the
(classical) powerset.
\end{proof}

Notice that while the weak powerset contains all weakly decidable predicates,
it may contain predicates which are not weakly decidable (unless $X$ is
moreover weakly discrete).

\subsection{Representing First-Order Logic}

\setBaseUrl{{http://www.ps.uni-saarland.de/extras/fol-trakh-ext/website/Undecidability.TRAKHTENBROT.}}

We briefly outline our representation of the syntax and semantics of first-order logic in constructive type theory ({cf.}~\cite{forster2021completeness,kirst:2001:synthetic}).
Concerning the \emph{syntax}, we describe terms and formulas as dependent inductive types over a
\setCoqFilename{fo_sig}\coqlink[fo_signature]{signature} $\Sigma=(\Funcs; \Preds)$ of function
symbols $f:\Funcs$ and relation symbols $P:\Preds$ with arities $\arity f$ and $\arity P$,
using binary  connectives ${\binop}\in\{{\dto},{\dand},{\dor}\}$ and
quantifiers ${\quant}\in\{{\dforall},{\dexists}\}$:
		\[\begin{array}{r@{\,:\,}l@{~\bnfdef~}l@{\qquad}l}
                  t &
                  \setCoqFilename{fo_terms}
                  \coqlink[fo_term]{\Term_\Sigma} & x \mid f\,\vec{t}  &  (x:\Nat,~ f:\Funcs,~  \vec t:\Term^{\arity f}_\Sigma\,) \\
		\phi,\psi &
                  \setCoqFilename{fo_logic}
                  \coqlink[fol_form]{\Formula_\Sigma} & \dbot\mid P\,\vec{t}\,
                                   \mid \phi \binop \psi
                                   \mid \quant\phi
                   & (P:\Preds,~ \vec t:\Term^{\arity P}_\Sigma\,)
                 \end{array}\]
	Negation is defined as the abbreviation $\dneg\phi\cdef \phi\,\dto\,\dbot$.

\setCoqFilename{fo_logic}

In the chosen de Bruijn representation~\cite{de_bruijn_lambda_1972}, a bound variable is encoded as the number of
quantifiers shadowing its binder, e.g.\ $\forall x.\,\exists y.\, P\,x\,u\to P\,y\,v$ may be represented
by $\dforall\,\dexists\,P\,1\,4\,\dto\, P\,0\,5$.
The variables $2 = 4-2$ and $3 = 5-2$ in this example are the \emph{free} variables, and variables that do not occur freely are called \emph{fresh},
e.g.\ $0$ and $1$ are fresh.
For the sake of legibility, we write concrete formulas with named binders and defer de Bruijn representations to the Coq development.
For a formula $\phi$ over a signature $\Sigma$, we define
the list $\FV(\phi):\List\,\Nat$ of \coqlink[fol_vars]{free variables},
the list $\funcs\phi : \List\,\Funcs$ of \coqlink[fol_syms]{function symbols}
and the list $\preds\phi : \List\,\Preds$ of \coqlink[fol_rels]{relation symbols}
that actually occur in $\phi$, all by recursion on $\phi$.
We say that $\phi$ is closed if it is void of free variables, i.e.\ $\FV(\phi)=\cnil$.

Turning to \emph{semantics}, we employ the standard (Tarski-style) model-theoretic semantics, evaluating terms in a given domain and embedding the logical connectives into the constructive meta-logic ({cf.}~\cite{veldman_models}):

\setBaseUrl{{http://www.ps.uni-saarland.de/extras/fol-trakh-ext/website/Undecidability.TRAKHTENBROT.}}
\setCoqFilename{fo_sig}
\begin{definition}
	A \emph{\coqlink[fo_model]{model}} $\MM$ over a domain $D:\Type$ is described by a pair of functions
        $\forall f.\,D^{|f|}\to D$ and $\forall P.\,D^{|P|}\to \Prop$ denoted by $f^\MM$ and $P^\MM$.
        \setCoqFilename{fo_terms}%
	Given a \emph{variable assignment} $\rho:\Nat\to D$, we recursively extend it to a \emph{\coqlink[fo_term_sem]{term evaluation}} $\hat\rho:\Term\to D$ with $\hat\rho \,x\cdef \rho\,x$ and $\hat\rho\,(f\,\vec v)\cdef f^\MM\,(\hat{\rho}\map\vec v)$,
        \setCoqFilename{fo_logic}%
	and to the \emph{\coqlink[fol_sem]{satisfaction}} relation $\MM\vDash_\rho \phi$ by
	\begin{align*}
	\MM\vDash_\rho \dot{\bot}&~\cdef ~\bot&
	\MM\vDash_\rho \phi\binop\psi &~\cdef ~\MM\vDash_\rho\phi~\binopm~ \MM\vDash_\rho\psi\\
	\MM\vDash_\rho P\,\vec t\,&~\cdef ~P^\MM\,(\hat{\rho}\map\vec t\,)&
	\MM\vDash_\rho\quant\phi&~\cdef ~\quantm a:D.\,\MM\vDash_{a\cdot\rho} \phi
	\end{align*}
 	where each logical connective ${\binop}/{\quant}$ is mapped to its meta-level counterpart
        ${\binopm}/{\quantm}$ and
        \setCoqFilename{notations}%
        where we denote by $a\cdot\rho$ the \emph{\coqlink[de_bruijn_ext]{de Bruijn extension}} of $\rho$ by $a$, defined by
        $(a\cdot\rho)\,0\cdef a$ and $(a\cdot\rho)\,(1+x)\cdef\rho\,x$.\footnote{The notation $a\cdot\rho$
        illustrates that $a$ is pushed ahead of the sequence $\rho_0,\rho_1,\ldots$}
\end{definition}

A \emph{$\Sigma$-model} is thus a dependent triple $(D,\MM,\rho)$ composed of a domain $D$, a
model $\MM$ for $\Sigma$ over $D$ and an assignment $\rho:\Nat\to D$. It is \emph{finite} if $D$ is finite, and
\emph{decidable} if $P^\MM:D^{|P|}\to \Prop$ is decidable for all $P:\Preds$.

\setCoqFilename{fo_logic}
\begin{myfact}[][fol_sem_dec]%
\label{fact:satisfaction_dec}
Satisfaction
$\lambda \phi.\,\MM\vDash_\rho \phi$ is decidable for finite, decidable $\Sigma$-models.
\end{myfact}

\begin{proof}
By induction on $\phi$; finite quantification preserves decidability.
\end{proof}

In this paper, we are mostly concerned with \emph{finite satisfiability} of formulas.
However, since some of the compound reductions hold for more general or more specific notions,
we introduce the following variants:

\setCoqFilename{fo_sat}
\begin{definition}[Satisfiability]
	For a formula $\phi$ over a signature $\Sigma$, we write
	\begin{itemize}
		\item $\SAT(\Sigma)\,\phi$ if there is a $\Sigma$-model $(D,\MM,\rho)$ such that $\MM\vDash_\rho \phi$;
		\item \coqlink[fo_form_fin_dec_SAT]{$\FSAT(\Sigma)\,\phi$} if additionally $D$ is finite and $\MM$ is decidable;
		\item \coqlink[fo_form_fin_dec_eq_SAT]{$\FSATEQ(\Sigma;\equiv)\,\phi$} if the signature contains a distinguished binary relation symbol $\equiv$
                interpreted as equality, i.e. $x\equiv^\MM y\toot x=y$ for all $x,y:D$.
	\end{itemize}
\end{definition}

\noindent
Notice that in a classical treatment of finite model theory, models are supposed to be given \emph{in extension},
i.e.\ understood as tables providing computational access to functions and relations values.
To enable this view in our constructive setting, we restrict to decidable relations in the definition of $\FSAT$, and from now on,
\emph{finite satisfiability is always meant to encompass a decidable model}.
One could further require the domain $D$ to be discrete to conform more closely with the classical view;
discreteness is in fact enforced by \FSATEQ.
However, we refrain from this requirement and instead show in Section~\ref{sec:discrete_models} that $\FSAT$
and $\FSAT$ over discrete models are constructively equivalent.

\section{Trakhtenbrot's Theorem for a Custom Signature}%
\label{sec:trakh_prelim}

In this section, we show that \BPCP reduces to $\FSATEQ(\SBPCP;{\equiv})$ for the special purpose
signature $\SBPCP\cdef (\{\star^0, e^0, f_\btrue^1, f_\bfalse^1\}; \{P^2, {\prec^2},{\equiv^2}\})$.%
\footnote{We use superscripts to succinctly describe the arity of each symbol.}
To this end, we fix an instance $R:\List\,(\List\,\Bool\times \List\,\Bool)$ of $\BPCP$ (to be understood
as a finite set of pairs of Boolean strings) and
we construct a formula $\phi_R$ such that $\phi_R$ is finitely satisfiable if and only if $R$ has a solution.

Informally, we axiomatise a family $\BB_n$ of  models over the domain of Boolean strings of length bounded by $n$ and let $\phi_R$ express that $R$ has a solution in $\BB_n$.
The axioms express enough equations and inversions of the constructions included in the definition of $\BPCP$ such that a solution for $R$ can be recovered.

Formally, the symbols in $\SBPCP$ are used as follows:
the functions $f_b$ and the constant $e$ represent $b\ccons(\cdot)$ and $\cnil$
for the encoding of strings $s$ as terms $\overline s$ (before a term $\tau$, possibly $e$):
\[\overline{\cnil}\tapp \tau\cdef  \tau\hspace{3em}
\overline{b\ccons s}\tapp \tau\cdef f_b\,(\overline s\tapp \tau)\hspace{3em}
\overline s\cdef  \overline s\tapp e\]
The constant $\star$ represents an undefined value for strings too long to be encoded in the finite model $\BB_n$.
The relation $P$ represents derivability from $R$ (denoted $R\deriv\cdot/\cdot$ here) while $\prec$ and $\equiv$ represent strict suffixes and equality, respectively.

Expected properties of the intended interpretation can be captured formally as first-order formulas.
First, we ensure that $P$ is proper (only subject to defined values) and that $\prec$ is a strict order (irreflexive and transitive):
\[
\begin{array}{c@{~\cdef~}l@{\qquad}l}
\phi_P     & \dforall x y.\, P\,x\,y ~\dto~ x\not\equiv \star ~\dand~ y \not\equiv\star & \text{($P$  proper)} \\
\phi_\prec & (\dforall x.\, x\not\prec x)~\dand~ (\dforall x y z.\, x\prec y~\dto~ y\prec z~\dto~ x\prec z) & \text{($\prec$ strict order)}\\
\end{array}\]
Next, the image of $f_b$ is forced disjoint from $e$ and injective, as long as $\star$ is not reached.
We also ensure that the images of $f_\btrue$ and $f_\bfalse$ intersect only at $\star$:
\[\phi_f~\cdef~\left(\begin{array}{@{\,}l@{\,}}
  f_\btrue\,\star \equiv \star ~\dand~ f_\bfalse\,\star \equiv \star\\
  \dforall x.\, f_\btrue\,x\not\equiv e\\
  \dforall x.\, f_\bfalse\,x\not\equiv e \\
\end{array}\right)
\,\dand\,
\left(\begin{array}{@{\,}l@{\,}}
  \dforall x y.\,f_\btrue\,x\not\equiv \star~\dto~ f_\btrue\,x\equiv f_\btrue\,y~\dto~ x\equiv y \\
  \dforall x y.\,f_\bfalse\,x\not\equiv \star~\dto~ f_\bfalse\,x\equiv f_\bfalse\,y~\dto~ x\equiv y \\
  \dforall x y.\,f_\btrue \,x\equiv f_\bfalse \,y~\dto~ f_\btrue \,x\equiv \star ~\dand~ f_\bfalse \,y\equiv \star\\
\end{array}\right)\]
Furthermore, we enforce that $P$ simulates $R\deriv\cdot/\cdot$, encoding its inversion principle
	\[\phi_\deriv\cdef \dforall x y.\, P\,x\,y ~\dto\,\dbigvee^{\boldsymbol .}_{\clap{\scriptsize$ s/t\!\inl\! R$}}\,
        \dor\left\{\begin{array}{@{\,}l} x\equiv  \overline s ~\dand~ y \equiv \overline t\\
	                                \dexists u v.\,P\,u\,v~\dand~ x\equiv\overline s\tapp u~\dand~ y\equiv\overline t\tapp v~\dand~ u/v\prec x/y\\
                   \end{array}\right.\]
where $u/v\prec x/y$ denotes $(u\prec x~\dot\land~ v\equiv y)\dot\lor (v\prec y~\dot\land~ u\equiv x)\dot\lor (u\prec x ~\dot\land~ v\prec y)$.
Finally, $\phi_R$ is the conjunction of all axioms plus the existence of a solution:
\[\phi_R \cdef  \phi_P ~\dot\land~ \phi_\prec ~\dot\land~ \phi_f ~\dot\land~ \phi_\deriv ~\dot\land~ \dot\exists x.\, P\,x\,x.\]

\setCoqFilename{red_undec}
\begin{theorem}[][BPCP_FIN_DEC_EQ_SAT]%
	\label{thm:BPCP_FSATEQ}
	$\BPCP\red \FSATEQ(\SBPCP;{\equiv})$.
\end{theorem}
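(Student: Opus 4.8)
The reduction function is the map $R\mapsto\phi_R$ already spelled out above; since $\phi_R$ is assembled from $R$ by a plain recursion (the big disjunction simply ranges over the finite list $R$), it is a definable, hence computable, function of $R$, so nothing further has to be checked for computability in the synthetic setting. It remains to establish the equivalence $\BPCP\,R\toot\FSATEQ(\SBPCP;{\equiv})\,\phi_R$, which I would split into the two implications below.

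\emph{Completeness} ($\Rightarrow$). Given a solution, i.e.\ a string $s$ with $R\deriv s/s$, I would exhibit the intended model $\BB_n$ with $n\cdef\clen s$. Its domain is the (finite) type of Boolean strings of length at most $n$ together with one extra overflow element playing the role of $\star$; $e$ is interpreted as $\cnil$, each $f_b$ as the length-guarded prepending $w\mapsto b\ccons w$ sending anything that would exceed length $n$ — and $\star$ itself — to $\star$, the symbol $\equiv$ as equality, $\prec$ as the strict-suffix order, and $P^\MM\,x\,y$ as ``there are strings $s',t'$ of length $\le n$ encoded by $x,y$ with $R\deriv s'/t'$''. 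Finiteness is clear and decidability of the model follows from discreteness of bounded strings and from decidability of $R\deriv\cdot/\cdot$ (\Cref{fact:BPCP_undec}). I would then check the conjuncts of $\phi_R$ one by one: $\phi_P$ and $\phi_f$ hold by construction of $f_b$ and $\star$; $\phi_\prec$ because strict suffix is irreflexive and transitive; $\phi_\deriv$ by inverting the two derivation rules; and the final $\dexists x.\,P\,x\,x$ is witnessed by $\overline s$, using $R\deriv s/s$. Since all intermediate pairs occurring in a derivation are suffixes of its conclusion, they all fit within length $n$, which is exactly why $n=\clen s$ suffices and no overflow is ever triggered.

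\emph{Soundness} ($\Leftarrow$). Conversely, assume a finite decidable $\SBPCP$-model with $\equiv$ read as equality and satisfying $\phi_R$. The crux is a decoding lemma: \emph{for all $x,y$ in the domain, $P^\MM\,x\,y$ implies there are Boolean strings $s',t'$ with $R\deriv s'/t'$, $x=\overline{s'}$ and $y=\overline{t'}$} (writing $\overline{s'}$ for the value of that closed term). By $\phi_\prec$ the interpretation of $\prec$ is a strict order on a finite type, hence well-founded (\Cref{fact:wf}); consequently the order $u/v\prec x/y$ used in $\phi_\deriv$, being the strict product of $\prec$ with itself, is well-founded as well, and I would prove the lemma by well-founded induction on the pair $(x,y)$ along it. Unfolding $\phi_\deriv$ on $P\,x\,y$ yields, for some $s/t\inl R$, either the base case $x=\overline s$ and $y=\overline t$ — giving $s'\cdef s$, $t'\cdef t$ by the first derivation rule — or witnesses $u,v$ with $P\,u\,v$, $x=\overline s\tapp u$, $y=\overline t\tapp v$ and $u/v\prec x/y$; the induction hypothesis applied to $(u,v)$ returns $u',v'$ with $R\deriv u'/v'$ and $u=\overline{u'}$, $v=\overline{v'}$, and the homomorphism property $\overline{s\capp u'}=\overline s\tapp\overline{u'}$ of the encoding lets me take $s'\cdef s\capp u'$ and $t'\cdef t\capp v'$, closing the step with the second derivation rule. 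Applying the lemma to a witness $a$ of $\dexists x.\,P\,x\,x$ gives strings $s',t'$ with $R\deriv s'/t'$ and $\overline{s'}=a=\overline{t'}$. It then remains to recover $s'=t'$ from $\overline{s'}=\overline{t'}$: a straightforward induction using $\phi_f$ (disjointness of the $f_b$-images from $e$, their intersection only at $\star$, and injectivity away from $\star$) shows the encoding is injective on values distinct from $\star$, and $a\neq\star$ by $\phi_P$; hence $s'=t'$ and $R\deriv s'/s'$, that is $\BPCP\,R$.

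The main obstacle is the soundness direction, and specifically the termination of the decoding. A priori the inversion in $\phi_\deriv$ could be unfolded indefinitely, producing ever ``larger'' pairs; what stops this is precisely the finiteness of the model, which turns $\prec$ into a well-founded order and thereby bounds the decoding. This is the point at which the finiteness hypothesis of $\FSATEQ$ is indispensable — it is the model-theoretic shadow of the fact that a solution of $R$ is a \emph{finite} derivation — and it is also the reason the construction must thread the overflow element $\star$ carefully, so that injectivity of the encoding (needed to pass from $\overline{s'}=\overline{t'}$ to $s'=t'$) is only ever invoked on defined values.
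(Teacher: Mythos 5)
Your proposal is correct and follows essentially the same route as the paper: the same reduction $R\mapsto\phi_R$, the same bounded-string model $\BB_n$ with overflow element $\star$ for completeness, and the same well-founded induction along $u/v\prec^\MM x/y$ (via \Cref{fact:wf}) to decode a derivation in the soundness direction. If anything, you are slightly more explicit than the paper in two places it glosses over — the injectivity of the encoding away from $\star$ needed to pass from $\overline{s'}=\overline{t'}$ to $s'=t'$, and the observation that $n=\clen{s}$ suffices because derivations only involve suffixes — both of which are handled silently in the paper's prose but match its Coq development.
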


\begin{proof}
The reduction $\lambda R.\,\phi_R$ is proved correct by
Lemmas~\ref{lemma:BCPC_FSATEQ} and~\ref{lemma:FSATEQ_BPCP}.
\end{proof}

\setCoqFilename{BPCP_SigBPCP}
\begin{lemma}[][Sig_bpcp_encode_sound]%
        \label{lemma:BCPC_FSATEQ}
	$\BPCP\,R\to \FSATEQ(\SBPCP;{\equiv})\,\phi_R$.
\end{lemma}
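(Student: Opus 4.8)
The plan is to assume $\BPCP\,R$, extract an actual solution, and build the intended finite decidable model $\BB_n$ validating every conjunct of $\phi_R$. From $\BPCP\,R$ I first obtain a string $s:\List\,\Bool$ with $R\deriv s/s$, and I fix a length bound $n$; taking $n\cdef\clen s$ already suffices, since (as noted below) every string occurring in a derivation of $R\deriv x/y$ is a suffix of $x$ or $y$. I instantiate $\BB_n$ over the domain $D$ consisting of the Boolean strings of length at most $n$ together with one extra overflow point playing the role of $\star$. This $D$ is finite, being enumerated by the finite list of short strings extended with the overflow point, and it is discrete, so equality on $D$ is decidable.

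Next I fix the interpretations. I interpret $e$ as $\cnil$, the constant $\star$ as the overflow point, and each $f_b$ as bounded prepending: $f_b^{\BB_n}$ sends the overflow point to itself and a string $w$ to $b\ccons w$ if $\clen{b\ccons w}\le n$ and to the overflow point otherwise. I interpret $\equiv$ as the decidable equality of $D$, $\prec$ as the strict-suffix relation on strings (with the overflow point incomparable to everything), and $P$ as derivability, i.e.\ $P^{\BB_n}\,x\,y$ holds iff $x,y$ are both strings with $R\deriv x/y$. Thus $\equiv$ is equality as required by $\FSATEQ$, $\prec$ is decidable, and $P$ is decidable by \Cref{fact:BPCP_undec}; hence $\BB_n$ is a finite decidable $\SBPCP$-model. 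Since $\phi_R$ is closed I pick an arbitrary assignment $\rho$. A routine observation used throughout is that a term $\overline s\tapp u$ whose value would be a string of length at most $n$ incurs no overflow, because every intermediate value produced by the $f_b$ is a suffix of that string and hence short.

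The bulk of the work is a conjunct-by-conjunct verification of $\BB_n\vDash_\rho\phi_R$. The clauses $\phi_P$ and $\phi_\prec$ are immediate ($P^{\BB_n}$ is false whenever an argument is the overflow point, and strict suffix is irreflexive and transitive), and $\phi_f$ follows from elementary properties of prepending: $b\ccons w$ is never $\cnil$ and never the overflow point unless forced, $\btrue\ccons w$ and $\bfalse\ccons w'$ differ in their head so two nonoverflow images of $f_\btrue$ and $f_\bfalse$ cannot coincide, and prepending a fixed bit is injective on strings. The existential $\dexists x.\,P\,x\,x$ is witnessed by $s$ itself, which lies in $D$ since $\clen s\le n$ and satisfies $P^{\BB_n}\,s\,s$ because $R\deriv s/s$.

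The main obstacle is the inversion clause $\phi_\deriv$. I must show that for all $x,y:D$, if $P^{\BB_n}\,x\,y$ holds then, for some $s/t\inl R$, either the base disjunct $x\equiv\overline s\dand y\equiv\overline t$ or the step disjunct $\dexists u v.\,P\,u\,v\dand x\equiv\overline s\tapp u\dand y\equiv\overline t\tapp v\dand u/v\prec x/y$ holds. The overflow cases are vacuous since $P^{\BB_n}$ is then false, so $x,y$ are short strings with $R\deriv x/y$, and I proceed by induction on this derivation. The base rule gives $x/y\inl R$ and the base disjunct for the pair $x/y$. The step rule yields $s/t\inl R$, strings $u,v$ with $R\deriv u/v$, and $x=s\capp u$, $y=t\capp v$; the witnesses $u,v$ are suffixes of the short strings $x,y$, hence in $D$, and $\overline s\tapp u,\overline t\tapp v$ evaluate to $x,y$ without overflow. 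The one delicate point is the suffix constraint $u/v\prec x/y$: if $s$ is nonempty then $u\prec x$, if $t$ is nonempty then $v\prec y$, and one of the three disjuncts of $u/v\prec x/y$ holds accordingly. The only problematic subcase is $s=t=\cnil$, where $x=u$ and $y=v$ and no strict inequality is available; there I discard this step and instead apply the induction hypothesis to the strictly shorter derivation of $R\deriv x/y$. This induction, skipping empty pairs, is precisely what makes the strict-order side conditions satisfiable and is the crux of the lemma.
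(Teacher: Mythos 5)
Your proposal is correct and follows essentially the same route as the paper: the same domain of length-bounded Boolean strings plus an overflow point, the same interpretations of $e$, $\star$, $f_b$, $P$, $\prec$, $\equiv$, and the same finiteness/decidability observations, so the two proofs coincide in approach. Your detailed treatment of $\phi_\deriv$ --- induction on the derivation of $R\deriv x/y$, with the $s=t=\cnil$ step discarded in favour of the induction hypothesis on the strictly smaller sub-derivation --- is a correct elaboration of what the paper dismisses as ``mostly straightforward proofs,'' and it identifies exactly the one non-obvious case.
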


\begin{proof}
	Assume $R\deriv s/s$ holds for a string $s$ with $|s|= n$.
	We show that the model $\BB_n$ over Boolean strings bounded by $n$ satisfies $\phi_R$.
	To be more precise, we choose $D_n\cdef \Opt \SigType{s:\List\Bool}{\clen s\le n}$
	as domain, i.e.\ values in $D_n$ are either an (overflow) value $\none$ or a (defined) dependent pair
        $\some{(s,H_s)}$ where $H_s:\clen s\le n$. We interpret the function and relation symbols of the chosen signature by
	\begin{align*}
	e^{\BB_n}&\cdef \cnil & f_b^{\BB_n}\,\none&\cdef  \none & P^{\BB_n}\,s\,t&\cdef R\deriv s/t\\
	\star^{\BB_n}&\cdef \none & f_b^{\BB_n}\,s&\cdef \textnormal{if $\clen s< n$ then $b\ccons s$ else $\none$} &
	s\prec^{\BB_n} t&\cdef s\not = t\land \exists u.\,u\capp s=t
	\end{align*}
	where we left out some explicit constructors and the edge cases of the relations for
        better readability, see the \coqlink[dummy]{Coq code} for full detail.
	As required, $\BB_n$ interprets $\equiv$ by equality $=_{D_n}$.

	Considering the desired properties of $\BB_n$, first note that $D_n$ can be shown finite by induction
        on $n$. This however crucially relies on the proof irrelevance of the $\lambda x.\,x\le n$ predicate\rlap.\footnote{i.e.\ that for every $x:\Nat$ and $H,H':x\le n$ we have $H=H'$. In general, it is
        not always possible to establish finiteness of $\SigType x{P\,x}$ if $P$ is not proof irrelevant.}\enspace
	The atoms $s\prec^{\BB_n} t$ and $s\equiv^{\BB_n} t$ are decidable by straightforward computations
        on Boolean strings. Decidability of $P^{\BB_n}s\,t$ (i.e.\ $R\deriv s/t$) was established in Fact~\ref{fact:BPCP_undec}.
	Finally, since $\phi_R$ is a closed formula, any variable assignment
        $\rho$ can be chosen to establish that $\BB_n$ satisfies $\phi_R$, for instance $\rho\cdef\lambda x.\none$.
	Then showing $\BB_n\vDash_\rho\phi_R$ consists of verifying simple properties of the chosen functions and relations,
        with mostly straightforward proofs.
\end{proof}

\begin{lemma}[][Sig_bpcp_encode_complete]%
        \label{lemma:FSATEQ_BPCP}
	$\FSATEQ(\SBPCP;{\equiv})\,\phi_R\to \BPCP\,R$.
\end{lemma}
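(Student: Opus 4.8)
The plan is to extract a \BPCP solution from a finite, decidable model $(D,\MM,\rho)$ satisfying $\phi_R$. Since $\phi_R$ is closed, $\rho$ is irrelevant, and I may use the five conjuncts of $\phi_R$ directly as facts about $\MM$: that $P^\MM$ is proper (from $\phi_P$), that $\prec^\MM$ is a strict order (from $\phi_\prec$), the injectivity and disjointness properties of the $f_b^\MM$ (from $\phi_f$), the inversion principle for $P^\MM$ (from $\phi_\deriv$), and the existence of some $a:D$ with $P^\MM\,a\,a$. Throughout I write $(\overline s)^\MM$ for the interpretation of the closed encoding term $\overline s$, recalling that $\equiv^\MM$ is plain equality on $D$.

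The heart of the argument is a well-founded induction mirroring the inductive definition of $R\deriv\cdot/\cdot$. Since $D$ is finite and $\prec^\MM$ is a strict order, $\prec^\MM$ is well-founded by \Cref{fact:wf}; consequently the relation $\triangleleft$ on $D\times D$ obtained by reading $u/v\prec x/y$ in the model (so each coordinate weakly decreases under $\prec^\MM$ or equality, with at least one strict decrease) is a well-founded product order. By well-founded induction on $\triangleleft$ I then prove: for all $x,y:D$, if $P^\MM\,x\,y$ then there are strings $s,t$ with $R\deriv s/t$, $x=(\overline s)^\MM$, and $y=(\overline t)^\MM$. Given $P^\MM\,x\,y$, applying $\phi_\deriv$ yields some $s_0/t_0\inl R$ together with either the base case $x=(\overline{s_0})^\MM$, $y=(\overline{t_0})^\MM$ (where $R\deriv s_0/t_0$ holds by the first rule), or the step case providing $u,v$ with $P^\MM\,u\,v$, $x=(\overline{s_0}\tapp u)^\MM$, $y=(\overline{t_0}\tapp v)^\MM$, and $(u,v)\triangleleft(x,y)$. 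In the step case the induction hypothesis applies to $(u,v)$, giving $s_1,t_1$ with $R\deriv s_1/t_1$, $u=(\overline{s_1})^\MM$, $v=(\overline{t_1})^\MM$; the evaluation identity $(\overline{s_0}\tapp\overline{s_1})^\MM=(\overline{s_0\capp s_1})^\MM$ (a routine induction on $s_0$ from the definition of $\tapp$) then yields $x=(\overline{s_0\capp s_1})^\MM$ and $y=(\overline{t_0\capp t_1})^\MM$, while the second rule gives $R\deriv(s_0\capp s_1)/(t_0\capp t_1)$.

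It remains to collapse the extracted pair to a diagonal solution. Applying the claim to the witness $a$ of $P^\MM\,a\,a$ produces strings $s,t$ with $R\deriv s/t$ and $(\overline s)^\MM=a=(\overline t)^\MM$. By $\phi_P$, $P^\MM\,a\,a$ forces $a\neq\star^\MM$, so $(\overline s)^\MM=(\overline t)^\MM\neq\star^\MM$. I then establish, by induction on $s$ with a case analysis on $t$, the injectivity lemma: if $(\overline s)^\MM=(\overline t)^\MM\neq\star^\MM$ then $s=t$. The cases where exactly one of $s,t$ is empty are excluded by the axiom $\dforall x.\,f_b\,x\not\equiv e$; in the case $s=b\ccons s'$, $t=c\ccons t'$ with $b\neq c$, the ``images meet only at $\star$'' axiom forces $(\overline s)^\MM=\star^\MM$, contradicting the hypothesis; and when $b=c$, were $(\overline{s'})^\MM=\star^\MM$ the fixpoint axiom $f_b\,\star\equiv\star$ would give $(\overline s)^\MM=\star^\MM$, so $(\overline{s'})^\MM\neq\star^\MM$, whence injectivity of $f_b^\MM$ away from $\star$ gives $(\overline{s'})^\MM=(\overline{t'})^\MM$ and the induction hypothesis yields $s'=t'$. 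Thus $s=t$, so $R\deriv s/s$ and therefore $\BPCP\,R$.

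On difficulty: the conceptual crux, invisible in a classical treatment, is the well-foundedness step, where I must derive well-foundedness of $\prec^\MM$ constructively from finiteness via \Cref{fact:wf} and then lift it to the product relation $\triangleleft$ on pairs, so that the inversion principle $\phi_\deriv$ genuinely decreases the measure and the recursion terminates. The remaining ingredients, the $\tapp$-evaluation identity and the injectivity lemma, are routine but must be carried out carefully around the overflow value $\star^\MM$, which is exactly the obstruction to injectivity that the conjuncts of $\phi_f$ are designed to fence off.
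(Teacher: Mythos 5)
Your proof is correct and follows essentially the same route as the paper's: a well-founded induction on the pair relation induced by $\prec^\MM$ (justified via Fact~\ref{fact:wf}), using the inversion principle $\phi_\deriv$ together with the $\tapp$-concatenation identity to show that every $P^\MM\,x\,y$ comes from encoded strings with $R\deriv s/t$, and then specialising to the diagonal witness. The only differences are elaborations of detail: the paper applies Fact~\ref{fact:wf} directly to the strict order on pairs over the finite type $D\times D$ rather than lifting well-foundedness through a product construction, and your explicit injectivity lemma ($\overline s=\overline t\neq\star^\MM$ implies $s=t$, via $\phi_P$ and $\phi_f$) spells out the step the paper compresses into ``yields a solution $R\deriv s/s$.''
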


\begin{proof}
	Suppose that $\MM\vDash_\rho\phi_R$ holds for some finite $\SBPCP$-model $(D,\MM,\rho)$ interpreting $\equiv$ as equality
        and providing operations $f^\MM_b$, $e^\MM$, $\star^\MM$, $P^\MM$ and $\prec^\MM$.
	Again, the concrete assignment $\rho$ is irrelevant and $\MM\vDash_\rho\phi_R$ ensures that the functions/relations
        behave as specified and that $P^\MM\,x\,x$ holds for some $x:D$.

	Instead of trying to show that $\MM$ is isomorphic to some $\BB_n$, we directly reconstruct a solution for $R$,
        i.e.\ we find some $s$ with $R\deriv s/s$ from the assumption that $\MM\vDash_\rho\phi_R$ holds.
	To this end, we first observe that the relation $u/v\prec^\MM x/y$ as defined above is a strict order and
        thus well-founded as an instance of Fact~\ref{fact:wf}.

	Now we can show that for all $x/y$ with $P^\MM\,x\,y$ there are strings $s$ and $t$ with $x=\overline s$, $y=\overline t$
        and $R\deriv s/t$, by induction on the pair $x/y$ using the well-foundedness of  $\prec^\MM$.
	So let us assume $P^\MM\,x\,y$. Since $\MM$ satisfies $\phi_\deriv$ there are two cases:

	\begin{itemize}
		\item
		there is $s/t\in R$ such that $x=\overline s$ and $y =\overline t$.
		The claim follows by $R\deriv s/t$;
		\item
		there are $u,v:D$ with $P^\MM\,u\,v$ and $s/t\in R$ such that $x=\overline s\tapp u$,  $y=\overline t\tapp v$, and $u/v\prec^\MM x/y$.
		The latter makes the inductive hypothesis applicable for $P^\MM\,u\,v$, hence yielding $R\deriv s'/t'$ for some
                strings $s'$ and $t'$ corresponding to the encodings $u$ and $v$.
		This is enough to conclude $x=\overline{s\capp s'}$, $y=\overline{t\capp t'}$ and $R\deriv(s\capp s')/(t\capp t')$ as wished.
	\end{itemize}

	\noindent
	Applying this fact to the assumed match $P^\MM\,x\,x$ yields a solution $R\deriv s/s$.
\end{proof}

\section{Constructive Finite Model Theory}%
\label{sec:finmod}

\newcommand{\foequiv}[1]{\mathrel{\circeq_{#1}}}
\newcommand{\foindist}{\foequiv{}}
\newcommand{\genrel}{\mathcal R}
\newcommand{\lf}{l_\funcssymb}
\newcommand{\lp}{l_\predssymb}

\newcommand{\Fb}{\mathrm F}
\newcommand{\Ff}{\Fb_\funcssymb}
\newcommand{\Fp}{\Fb_\predssymb}

\newcommand{\bisim}{\mathrel{\equiv_\Fb}}

Combined with Fact~\ref{fact:BPCP_undec}, Theorem~\ref{thm:BPCP_FSATEQ} entails the undecidability (and non-co-enumerability)
of \FSATEQ over a custom (both finite and discrete) signature $\SBPCP$.
By a series of signature reductions, we generalise these results to any signature containing
an at least binary relation symbol. In particular, we explain how to reduce $\FSAT(\Sigma)$
to $\FSAT(\Void; \{\in^2\})$ for any discrete signature $\Sigma$, hence including $\SBPCP$.
We also provide a reduction from $\FSAT(\Void; \{\in^2\})$ to $\FSAT(\{f^n\};\{P^1\})$
for $n\geq 2$, which entails the undecidability of $\FSAT$ for signatures with
one unary relation and an at least binary function.
But first, let us show that $\FSAT$ is unaltered when further assuming
discreteness of the domain.

\subsection{Converting Models to Discrete Ones}%
\label{sec:discrete_models}

We consider the case of models over a discrete domain $D$,
i.e.\ where the equality relation ${=_D}\cdef\lambda x\,y:D.\, x=y$ is decidable.
The question is the following: is $\FSAT$ altered when adding this
further requirement on models.

Of course, in the case of $\FSATEQ(\Sigma;{\equiv})$ the requirement that
$\equiv$ is interpreted as a decidable binary relation which is equivalent to $=_D$
imposes the discreteness of $D$.
But in the case of $\FSAT(\Sigma)$ nothing imposes such a restriction on $D$.
However as we argue below, using Theorem~\ref{thm:fin_quotient},
we can always quotient $D$ along a suitable decidable congruence,
making the quotient a discrete finite type while preserving
first-order satisfaction, from which we deduce that $\FSAT$ is
unaltered by the discreteness requirement; see Section~\ref{sec:fsat_discrete}.

\smallskip

Let us consider a fixed signature $\Sigma=(\Funcs; \Preds)$.
In addition, let us fix a finite type $D$ and a (decidable) model $\MM$
of $\Sigma$ over $D$. Critically, we do not assume the discreteness of $D$.
We can conceive an equivalence over $D$ which is
a congruence for all the interpretations of the symbols
by $\MM$, namely \emph{first-order indistinguishability}
$x\foequiv\Sigma y~\cdef~\forall \phi\,\rho.\, \MM\vDash_{x\cdot\rho}\phi\toot \MM\vDash_{y\cdot\rho}\phi$,
i.e.\ first-order semantics in $\MM$ is not impacted when switching
$x$ with $y$.

The facts that $\foequiv\Sigma$ is both an equivalence and a congruence are easy to prove but,
with this definition, there is little hope of establishing decidability of $\foequiv\Sigma$.
The main reason for this is that the signature may contain symbols of infinitely many arities.
So we further fix two lists $\lf  : \List\,\Funcs$ and $\lp  : \List\,\Preds$ of
function and relation symbols respectively and restrict the congruence
requirement to the symbols in these lists only.

\setCoqFilename{discrete}
\begin{definition}[Bounded first-order indistinguishability][fo_bisimilar]%
\label{def:fo_bisim}
We say that $x$ and $y$ are \emph{first-order indistinguishable up to $\lf /\lp $},
and we write $x\foindist y$, if no first-order formula built from the symbols in $\lf $ and $\lp $ only
can distinguish $x$ from $y$. Formally, this gives:
\[x\foindist y~\cdef~\forall \phi.~\funcs\phi\subseteq\lf\to
\preds\phi\subseteq\lp\to
\forall \rho:\Nat\to D.\,\MM\vDash_{x\cdot\rho}\phi\toot \MM\vDash_{y\cdot\rho}\phi.\]
\end{definition}

To remain simple, we avoid displaying the dependency on $\lf $, $\lp $, $D$ and
$\MM$ in the notation $\foindist$ as they remain fixed in this section anyway.

\smallskip

We claim that first-order indistinguishability $\foindist$ up to $\lf /\lp $ is a strongly decidable
equivalence and a congruence for all the symbols in $\lf /\lp $. Remember that congruence (limited to $\lf /\lp $)
means commutation with the interpretation of symbols in $\MM$:
\[\begin{array}{c}
\forall (f:\Funcs)\,(\vec v\, \vec w:D^{\arity f}).\, f\inl \lf \to (\forall i:\Fin{\arity f}.\, \vec v_i\foindist \vec w_i)
\to f^\MM\,\vec v\foindist f^\MM\,\vec w\\
\forall (P:\Preds)\,(\vec v\, \vec w:D^{\arity P}).\, P\inl \lp \to (\forall i:\Fin{\arity P}.\, \vec v_i\foindist \vec w_i)
\to P^\MM\,\vec v\toot P^\MM\,\vec w.\\
\end{array}\]
We establish the validity of our claim in the following
discussion, ending with Theorem~\ref{thm:fo_indist}.
Equivalence and congruence of $\foindist$  are easy.
However, Definition~\ref{def:fo_bisim} of $\foindist$ hints at no clue for its decidability.
We therefore switch to an alternate definition of $\foindist$
as a bisimulation\rlap.\footnote{That is the greatest fixpoint of an $\omega$-continuous operator.}\enspace
Using Kleene's fixpoint theorem, we would get ${\foindist}$ as $\bigcap_{n<\omega}\Fb^n(\lambda uv.\top)$
for some below defined $\omega$-continuous operator~$\Fb$.
Hopefully, we could ensure that only finitely many (as opposed to $\omega$)
iterations of the operator $\Fb$ are needed for the fixpoint to be
reached, hence preserving finitary properties such as decidability.

\setBaseUrl{{http://www.ps.uni-saarland.de/extras/fol-trakh-ext/website/Undecidability.TRAKHTENBROT.}}
\setCoqFilename{discrete}

\smallskip

So let us define the operators
$\coqlink[fom_op1]{\Ff},\coqlink[fom_op2]{\Fp}:(D\to D\to\Prop)\to(D\to D\to\Prop)$ that
map a binary relation ${\genrel}:D\to D\to\Prop$ to
\[\begin{array}{r@{~\cdef~}l}
\Ff({\genrel}) & \lambda (x\,y : D).\,\forall f.\, f\inl \lf \to \forall(\vec v:D^{\arity f})\,(i:\Fin{\arity f}).\,
   \genrel\,\bigl(f^\MM\,\subst{\vec v\,}xi\bigr)\,\bigl(f^\MM\,\subst{\vec v\,}yi\bigr)\\
\Fp({\genrel}) & \lambda (x\,y : D).\,\forall P.\, P\inl \lp \to \forall(\vec v:D^{\arity P})\,(i:\Fin{\arity P}).\,
   P^\MM\,\subst{\vec v\,}xi \toot P^\MM\,\subst{\vec v\,}yi.\\
\end{array}\]

\begin{myfact}[][discrete_quotient]%
\label{appendix:fact:bisim}
The following results hold for the operator $\Ff$ (resp.\ $\Fp$).
\begin{enumerate}
\coqitem[fom_op_mono] $\Ff$ is monotonic, i.e.\ ${\genrel} \subseteq{\genrel'}\to \Ff({\genrel})\subseteq \Ff({\genrel'})$;
\coqitem[fom_op_continuous] $\Ff$ is $\omega$-continuous, i.e.\ $\bigcap_{n} \Ff({\genrel_n}) \subseteq \Ff\bigl(\bigcap_{n}{\genrel_n}\bigr)$ with decreasing
$(\genrel_n)_{n<\omega}$;
\coqitem[fom_op_id] $\Ff$ preserves reflexivity, i.e. ${=_D}\subseteq \Ff({=_D})$;
\coqitem[fom_op_sym] $\Ff$ preserves symmetry, i.e.\ $\Ff^{-1}({\genrel})\subseteq \Ff({\genrel^{-1}})$;
\coqitem[fom_op_trans] $\Ff$ preserves transitivity, i.e.\ $\Ff({\genrel})\circ \Ff({\genrel})\subseteq \Ff({\genrel}\circ{\genrel})$;
\coqitem[fom_op_dec] $\Ff$ preserves decidability, i.e.\ if ${\genrel}$ is decidable then so is $\Ff({\genrel})$.
\end{enumerate}
Hence the combination $\coqlink[fom_op]{\Fb({\genrel})}\cdef \Ff({\genrel})\cap \Fp({\genrel})$ also preserves these properties.
\end{myfact}

\begin{proof}
The proofs of items (1)--(5) are easy, even without assuming boundedness by $\lf /\lp $. However, 
to ensure  the preservation of decidability (6), that bound is essential for the quantification
over $f$ in $\lf$ (resp.\ $P$ in $\lp$) in the above definition of $\Ff$ (resp.\ $\Fp$) to stay finite.
We observe that since $D$ is finite then so is $D^{\arity f}$ and the remaining quantifications over
$\vec v:D^{\arity f}$ and $i:\Fin{\arity f}$ are finite quantifications again. Hence,
all these quantifications behave as finitary conjunctions and thus preserve decidability. Notice that compared
to $\Ff$, the case of $\Fp$ is degenerated because it does not depend in $\genrel$, hence is
constant.
\end{proof}

As a side remark, notice that one can also show that \coqlink[fol_def_fom_op]{$\Fb$ preserves first-order definability}
where a relation $\genrel:D\to D\to\Prop$ is \emph{first-order definable} if there
is a formula $\phi_\genrel$ built only from $\lf /\lp $ such that
$\forall\rho.\, \genrel\,(\rho\,x_0)\,(\rho\,x_1)\toot  \MM\vDash_\rho \phi_\genrel$.

\begin{theorem}[][fom_eq_fol_characterization]
First-order indistinguishability $\foindist$ up to $\lf /\lp $ is extensionally equivalent
to $\bisim$ (Kleene's greatest fixpoint of $\Fb$), i.e.\ for any $x,y:D$ we have
\[x\foindist y~\toot~x\bisim y\quad\text{where}\quad
{x\bisim y} \,\cdef\, {\forall n:\Nat.\,\Fb^n(\lambda uv.\top)\,x\,y}.\]
\end{theorem}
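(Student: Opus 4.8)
The plan is to prove the two implications $x\foindist y \to x\bisim y$ and $x\bisim y \to x\foindist y$ separately, exploiting \Cref{appendix:fact:bisim} to control the fixpoint iteration. For the direction $x\bisim y \to x\foindist y$, I would proceed by induction on a formula $\phi$ with $\funcs\phi\subseteq\lf$ and $\preds\phi\subseteq\lp$, showing that $x\bisim y$ implies $\MM\vDash_{x\cdot\rho}\phi\toot\MM\vDash_{y\cdot\rho}\phi$ for all $\rho$. The key point is that each level of the syntactic structure of $\phi$ consumes finitely many iterations of $\Fb$: an atomic formula $P\,\vec t$ is handled by one application of $\Fp$, a function symbol occurring in a subterm by $\Ff$, and binary connectives and quantifiers propagate the equivalence through. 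Since $x\bisim y$ asserts $\Fb^n(\lambda uv.\top)\,x\,y$ for \emph{all} $n$, I always have enough iterations available no matter how deeply nested $\phi$ is, so this direction should go through smoothly by a careful induction that tracks how de Bruijn shifts interact with the extension $a\cdot\rho$.

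For the converse $x\foindist y \to x\bisim y$, the natural route is to show that $\foindist$, viewed as a relation, is a post-fixpoint of $\Fb$ restricted to $\lf/\lp$, i.e.\ ${\foindist}\subseteq\Fb({\foindist})$, and that it is below $\lambda uv.\top$; then by monotonicity (item (1) of \Cref{appendix:fact:bisim}) and induction on $n$ one gets ${\foindist}\subseteq\Fb^n(\lambda uv.\top)$ for every $n$, which is exactly $\foindist\subseteq{\bisim}$. To verify the post-fixpoint property, I would use that $\foindist$ is a congruence for all symbols in $\lf/\lp$ (the congruence claim stated just before \Cref{def:fo_bisim}): given $x\foindist y$, a symbol $f\inl\lf$, a vector $\vec v$ and an index $i$, I must relate $f^\MM\,\subst{\vec v}xi$ and $f^\MM\,\subst{\vec v}yi$ under $\foindist$, which follows because replacing the $i$-th argument from $x$ to $\foindist$-equivalent $y$ preserves first-order indistinguishability of the results. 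The analogous reasoning for $\Fp$ uses that $P^\MM$ respects $\foindist$ in each argument.

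The main obstacle I anticipate is bridging the two \emph{different granularities} of quantification: $\foindist$ quantifies over whole formulas and assignments simultaneously, whereas $\Fb$ peels off one symbol and one argument position at a time via the pointwise update $\subst{\vec v}xi$. Establishing that $\foindist$ is a congruence for single-argument substitutions (the ingredient feeding the post-fixpoint step) is really a small first-order definability argument: to move from $x$ to $y$ in just the $i$-th slot of $f^\MM\,\vec v$, I encode the surrounding term and formula context as a fresh formula over $\lf/\lp$ in which the distinguished variable sits in position $i$, and then invoke the defining property of $\foindist$. Managing de Bruijn indices so that the substituted position lines up with the variable that $\foindist$ swaps out (the head variable $0$ in $x\cdot\rho$) is the fiddly, error-prone part, and it is where the remark about $\Fb$ preserving first-order definability becomes the conceptual backbone of the argument.

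\begin{proof}
We prove the two inclusions separately; the equivalence then follows pointwise.

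For ${\bisim}\subseteq{\foindist}$, we fix $x,y$ with $x\bisim y$ and show, by induction on a formula $\phi$ with $\funcs\phi\subseteq\lf$ and $\preds\phi\subseteq\lp$, that $\MM\vDash_{x\cdot\rho}\phi\toot\MM\vDash_{y\cdot\rho}\phi$ holds for all $\rho$. Atomic cases are discharged using $\Fp$ (for relation atoms) and $\Ff$ (for function symbols occurring in the relevant subterms), each requiring only finitely many iterations of $\Fb$ which are all available since $x\bisim y$ provides $\Fb^n(\lambda uv.\top)\,x\,y$ for every $n$. Connectives propagate the biconditional directly, and quantifiers are handled by extending $\rho$ and reindexing the de Bruijn shift, using the congruence of $\foindist$ under $a\cdot({-})$.

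For ${\foindist}\subseteq{\bisim}$, it suffices to show ${\foindist}\subseteq\Fb^n(\lambda uv.\top)$ for all $n$, by induction on $n$. The base case is trivial since $\Fb^0(\lambda uv.\top)=\lambda uv.\top$. For the step, by monotonicity (\Cref{appendix:fact:bisim}(1)) it is enough to prove the post-fixpoint property ${\foindist}\subseteq\Fb({\foindist})$, i.e.\ ${\foindist}\subseteq\Ff({\foindist})\cap\Fp({\foindist})$. Both inclusions are instances of the congruence of $\foindist$ for the symbols in $\lf/\lp$: given $x\foindist y$, a symbol and a vector $\vec v$ with a distinguished index $i$, the vectors $\subst{\vec v}xi$ and $\subst{\vec v}yi$ differ only in position $i$, so congruence yields $f^\MM\,\subst{\vec v}xi\foindist f^\MM\,\subst{\vec v}yi$ and $P^\MM\,\subst{\vec v}xi\toot P^\MM\,\subst{\vec v}yi$, which are exactly the defining conditions of $\Ff({\foindist})$ and $\Fp({\foindist})$. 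Combining both directions gives $x\foindist y\toot x\bisim y$.
\end{proof}
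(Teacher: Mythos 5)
Your proposal is correct and takes essentially the same route as the paper: the direction ${\foindist}\subseteq{\bisim}$ via the fixpoint-inequality ${\foindist}\subseteq\Fb({\foindist})$ (established by the substitution/congruence argument) followed by monotonicity and induction on $n$, and the converse by induction on formulas using that $\bisim$, being a fixpoint of $\Fb$, is a congruence for the symbols in $\lf/\lp$. The differences are only cosmetic: the paper calls ${\foindist}\subseteq\Fb({\foindist})$ a \emph{pre-}fixpoint property where you say \emph{post-}fixpoint, and your ``consuming finitely many iterations'' phrasing for the atomic cases is just an unfolded form of the paper's appeal to the fixpoint/congruence property of $\bisim$.
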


\begin{proof}
For the $\to$ implication, it is enough to show that $\foindist$ is
a pre-fixpoint of $\Fb$, i.e.\ ${\foindist}\subseteq \Fb({\foindist})$, and
we get this result using \coqlink[fo_bisimilar_fom_eq]{suitable substitutions}. The converse implication
$\leftarrow$ follows from the fact that $\bisim$ is a fixpoint of $\Fb$,
hence it is a congruence for every symbol in $\lf /\lp $, so
$x\bisim y$ entails that \coqlink[fom_eq_fo_bisimilar]{formulas built from $\lf /\lp $ cannot
distinguish $x$ from $y$}.
\end{proof}

With $\bisim$, we now have a more workable characterization of $\foindist$ but still no decidability
result for it since the quantification over $n$ in
$\forall n:\Nat.\, \Fb^n(\lambda uv.\top)\,x\,y$ ranges over the infinite domain $\Nat$.
We now establish that the greatest fixpoint is reached after finitely
many iterations of $\Fb$.
Classically one would argue that $\Fb$ operates over
the finite domain of binary relations over $D$ and since the sequence
$\lambda n.\,\Fb^n(\lambda uv.\top)$ cannot decrease strictly forever (by the PHP),
it must stay constant after at most $n_0\cdef 2^{d\times d}$
iterations where $d\cdef\mathrm{card}\,D$.
Such reasoning is not constructively acceptable as is, due to the impossibility
to build the finite powerset. Fortunately
by Fact~\ref{appendix:fact:bisim} item~6, the iterated values $\Fb^n(\lambda uv.\top)$
are all decidable, hence belong to the weak powerset. As a consequence, we can apply the
finite PHP on the weak powerset.

\setBaseUrl{{http://www.ps.uni-saarland.de/extras/fol-trakh-ext/website/Undecidability.TRAKHTENBROT.}}
\setCoqFilename{discrete}
\begin{theorem}[][fom_eq_finite]%
\label{appendix:thm:bisim_finite}
One can compute $n:\Nat$ such that ${\bisim}$ is equivalent to $\Fb^n(\lambda uv.\top)$.
\end{theorem}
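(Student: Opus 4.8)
The plan is to show that the decreasing chain $\genrel_n\cdef\Fb^n(\lambda uv.\top)$ stabilises after boundedly many iterations, and then to extract the stabilisation index by a decidable bounded search, so that the resulting $n$ is genuinely computable. First I would record two structural facts about the chain. It is \emph{decreasing}, $\genrel_{n+1}\incl\genrel_n$, by induction on $n$ using monotonicity of $\Fb$ (\Cref{appendix:fact:bisim}, item~1): the base case $\Fb(\lambda uv.\top)\incl\lambda uv.\top$ is trivial, and the step follows since $\genrel_{n+1}=\Fb(\genrel_n)\incl\Fb(\genrel_{n-1})=\genrel_n$. It is also \emph{pointwise decidable}, each $\genrel_n$ being decidable by induction using preservation of decidability (\Cref{appendix:fact:bisim}, item~6), with $\lambda uv.\top$ decidable in the base case. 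Since $D$ is finite, so is $D\times D$, and viewing each $\genrel_n$ as a predicate on $D\times D$, each is in particular weakly decidable.

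Next I would invoke the weak powerset (\Cref{appendix:lem:wposet}) on the finite type $D\times D$ to obtain a list $ll$ containing every weakly decidable relation on $D\times D$ up to extensional equivalence, and set $N\cdef\clen{ll}$. This is the constructive substitute for the classical bound $2^{d\times d}$ obtained from the genuine powerset, which is unavailable: the crucial point is that although the full powerset cannot be built, every iterate $\genrel_n$, being decidable, does lie in $ll$ up to extensional equivalence. I would then apply the finite PHP (\Cref{thm:finite_php}) to the list $[\genrel_0;\dots;\genrel_N]$ of length $N+1$, the list $ll$ of length $N$, and the relation ``$\genrel$ is extensionally equivalent to $q$'', which is total from the former into $ll$ by the previous remark. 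The PHP produces two distinct positions, say indices $i<j$, whose iterates $\genrel_i$ and $\genrel_j$ are extensionally equivalent. Note that no discreteness of the relations themselves is needed here, only totality into $ll$.

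From this single coincidence the tail of the chain collapses. By decreasingness we have $\genrel_j\incl\genrel_{i+1}\incl\genrel_i$, while the coincidence gives $\genrel_i\incl\genrel_j$; hence $\genrel_i$, $\genrel_{i+1}$ and $\genrel_j$ all coincide extensionally, and in particular $\Fb(\genrel_i)=\genrel_{i+1}$ coincides with $\genrel_i$, so $\genrel_i$ is a fixpoint of $\Fb$. Since $\Fb$ respects extensional equivalence (immediate from two applications of monotonicity), a routine induction then yields that $\genrel_{i+k}$ coincides with $\genrel_i$ for every $k$. Combining this with $\genrel_i\incl\genrel_m$ for all $m\le i$ (decreasingness) shows that $\bigcap_m\genrel_m$ is extensionally equal to $\genrel_i$, i.e.\ $x\bisim y\toot\Fb^{\,i}(\lambda uv.\top)\,x\,y$, so $n\cdef i$ answers the claim.

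Finally, to \emph{compute} such an $n$ rather than merely assert its existence, I would observe that extensional equivalence of two decidable relations over the finite type $D\times D$ is itself decidable, being a finite conjunction of bi-implications of decidable atoms. The argument above guarantees that some index $n\le N-1$ satisfies ``$\genrel_n$ and $\genrel_{n+1}$ coincide'', so a bounded linear search for the least such $n$ — a search for a witness of a decidable predicate known to exist below the explicit bound $N$ — returns the index constructively. I expect the main obstacle to be precisely this constructive packaging: replacing the non-available finite powerset by the weak powerset of decidable predicates, applying the discreteness-free PHP so that only totality into $ll$ is required, and turning the resulting existence of a repeat into a computable stabilisation index via the decidable consecutive-equivalence test.
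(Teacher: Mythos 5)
Your proposal is correct and follows the paper's core strategy exactly: the iterates $\Fb^n(\lambda uv.\top)$ form a decreasing chain of decidable relations, the weak powerset (\Cref{appendix:lem:wposet}) over the finite type $D\times D$ substitutes for the constructively unavailable powerset, and the discreteness-free PHP (\Cref{thm:finite_php}) yields a repetition $\genrel_i \foequiv{} \genrel_j$ forcing the chain to stall. The only divergence is in how a \emph{computable} index is extracted from the merely propositional existence of the repeat --- a gap the paper flags in the same words (``the values of $a$ and $b$ are not computed by the PHP''). You resolve it by a bounded linear search for the least $n\le\clen{ll}-1$ with $\genrel_n$ extensionally equivalent to $\genrel_{n+1}$, which needs two further (true) observations: extensional equivalence of decidable relations over a finite type is decidable, and a witness of a decidable predicate known to exist below an explicit bound can be extracted constructively from a propositional existential. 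The paper sidesteps both with an a fortiori step: since the decreasing chain stalls at some $a\le\clen{ll}$, it is in particular stalled at $\clen{ll}$ itself, so one simply outputs $n\cdef\clen{ll}$, with no search and no decidability of equivalence required. Your route costs a little more machinery (all of it constructively available) and returns a possibly smaller index; the paper's is coarser but shorter. Both are valid proofs of the statement.
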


\begin{proof}
\setBaseUrl{{http://www.ps.uni-saarland.de/extras/fol-trakh-ext/website/Undecidability.Shared.Libs.DLW.Utils.}}
\setCoqFilename{fin_upto}
By a \coqlink[finite_t_weak_dec_rels]{variant} of Lemma~\ref{appendix:lem:wposet},
we compute the weak powerset of $D\to D\to\Prop$,\footnote{Via $D\to D\to\Prop\simeq D\times D\to\Prop$,
and finiteness of $D\times D$.} i.e.\
a list $ll$ containing every weakly decidable binary relation over $D$, up to extensional equivalence.
Since $\lambda uv.\top:D\to D\to\Prop$ is strongly decidable
and $\Fb$ preserves (both weak and) strong decidability, the sequence
$\lambda n.\, \Fb^n(\lambda uv.\top)$ is contained in the list $ll$, up to extensional equivalence.
Hence by Theorem~\ref{thm:finite_php} (PHP)\rlap,\footnote{And here we really need a finite PHP
over \emph{non-discrete types}.} after $\clen{ll}$ steps, there must have
been a duplicate, i.e.\ there exists $a<b\le \clen{ll}$ such that $\Fb^a(\lambda uv.\,\top)$
and $\Fb^b(\lambda uv.\top)$ are extensionally equivalent. However the values of $a$ and $b$
are not computed by the PHP but we can still deduce that $\Fb^n(\lambda uv.\top)$
must be stalled after $n=a$, hence a fortiori after $n=\clen{ll}$.
It follows that $\Fb^{\clen{ll}}(\lambda uv.\top)$
is extensionally equivalent to $\bisim$.
\end{proof}

We conclude our construction with the main result: $\foindist$ is a strongly
decidable congruence than can be used to quotient $\MM$ onto a discrete one.

\setCoqFilename{discrete}
\begin{theorem}[][fo_bisimilar_dec_congr]%
\label{thm:fo_indist}
First-order indistinguishability $\foindist$ up to $\lf /\lp $ is a strongly decidable
equivalence and a congruence for all the symbols in $\lf /\lp $.
\end{theorem}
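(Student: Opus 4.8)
The statement packages three properties of $\foindist$: strong decidability, the equivalence laws, and congruence for the symbols in $\lf/\lp$. My plan is to establish each of them for the greatest fixpoint $\bisim$ and then transport them to $\foindist$ along the characterization established just above, namely ${x\foindist y}\toot{x\bisim y}$. Since extensional equivalence of relations preserves all the properties at stake (decidability, reflexivity, symmetry, transitivity, and the congruence clauses), it suffices to argue everything for $\bisim$.

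For \emph{strong decidability} I would invoke \Cref{appendix:thm:bisim_finite} to obtain a computable $n:\Nat$ for which ${\bisim}$ is extensionally equal to $\Fb^n(\lambda uv.\top)$. The seed $\lambda uv.\top$ is strongly decidable, and $\Fb$ preserves strong decidability by item (6) of \Cref{appendix:fact:bisim}; hence a trivial induction on $n$ shows $\Fb^n(\lambda uv.\top)$ strongly decidable, and so is $\bisim$, hence $\foindist$.

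For the \emph{equivalence laws} I observe that the seed $\lambda uv.\top$ is reflexive, symmetric, and transitive, and that $\Fb$ preserves each of these: reflexivity follows from item (3) together with monotonicity (1), since $=_D\subseteq\Fb(=_D)\subseteq\Fb(\genrel)$ whenever $=_D\subseteq\genrel$; symmetry follows from item (4) with (1); transitivity from item (5) with (1), as $\Fb(\genrel)\circ\Fb(\genrel)\subseteq\Fb(\genrel\circ\genrel)\subseteq\Fb(\genrel)$ when $\genrel$ is transitive. An induction on $n$ then makes every iterate $\Fb^n(\lambda uv.\top)$ an equivalence, and these laws pass pointwise to the intersection ${\bisim}$: for transitivity, if $x\bisim y$ and $y\bisim z$ then $\Fb^n(\lambda uv.\top)\,x\,y$ and $\Fb^n(\lambda uv.\top)\,y\,z$ yield $\Fb^n(\lambda uv.\top)\,x\,z$ for every $n$, so $x\bisim z$; reflexivity and symmetry are immediate.

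\emph{Congruence} is the only step that uses the fixpoint property rather than mere iteration. From the bisimulation characterization, $\bisim$ is a fixpoint of $\Fb$, so in particular ${\bisim}\subseteq\Ff({\bisim})$ and ${\bisim}\subseteq\Fp({\bisim})$. Unfolding these inclusions gives the single-coordinate congruences: whenever $x\bisim y$, $f\inl\lf$, and $i:\Fin{\arity f}$, one has $f^\MM\,\subst{\vec v\,}xi\bisim f^\MM\,\subst{\vec v\,}yi$, and likewise $P^\MM\,\subst{\vec v\,}xi\toot P^\MM\,\subst{\vec v\,}yi$ for $P\inl\lp$. To reach the \emph{simultaneous} congruence stated in the theorem, where $\vec v$ and $\vec w$ may differ in every coordinate with $\vec v_i\bisim\vec w_i$, I would rewrite one coordinate at a time, forming the hybrid vectors agreeing with $\vec w$ on the first $k$ positions and with $\vec v$ thereafter; consecutive hybrids differ in a single coordinate, so each step is covered by the single-coordinate congruence, and chaining them through transitivity of $\bisim$ delivers $f^\MM\,\vec v\bisim f^\MM\,\vec w$, and the analogous biconditional for $P$. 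I expect this coordinatewise-to-simultaneous upgrade to be the one substantive step: it is where transitivity of $\bisim$, just established, is indispensable, whereas everything else is inherited mechanically from \Cref{appendix:fact:bisim} and \Cref{appendix:thm:bisim_finite}.
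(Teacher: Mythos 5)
Your proof is correct, and the decidability part coincides exactly with the paper's argument (Theorem~\ref{appendix:thm:bisim_finite} plus item~6 of Fact~\ref{appendix:fact:bisim} and an induction on the number of iterations, then transport along the characterisation $\foindist\,\toot\,\bisim$). Where you diverge is in the treatment of the equivalence and congruence properties: the paper establishes these \emph{directly} on the definition of $\foindist$ (it declares them ``easy,'' since the equivalence laws are pointwise laws of $\toot$, and congruence follows from substituting terms $f(x_1,\ldots,x_{\arity f})$ into formulas, replacing one variable at a time), reserving the $\bisim$ machinery solely for decidability. You instead establish \emph{all three} properties on $\bisim$ --- equivalence by showing each preservation law of $\Fb$ passes through the iterates $\Fb^n(\lambda uv.\top)$ and into the intersection, congruence from the fixpoint inclusions ${\bisim}\subseteq\Ff({\bisim})$ and ${\bisim}\subseteq\Fp({\bisim})$ upgraded from single-coordinate to simultaneous form via hybrid vectors and transitivity --- and then transport everything across the characterisation theorem. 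Both routes are sound; yours is more uniform and keeps all formula/substitution reasoning encapsulated inside the already-proved characterisation theorem (whose pre-fixpoint direction is exactly where the paper's ``suitable substitutions'' live), at the cost of making even reflexivity of $\foindist$ depend on the full fixpoint apparatus, whereas the paper gets the equivalence laws for free from the definition. Your hybrid-vector step is a genuine obligation in either approach (the paper's direct congruence proof performs the same one-coordinate-at-a-time replacement, just at the level of formulas), so identifying it as the one substantive step is accurate.
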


\begin{proof}
Remember that the real difficulty was strong decidability.
By Theorem~\ref{appendix:thm:bisim_finite}, the operator $\Fb$ reaches its fixpoint
$\bisim$ after finitely many steps, and by Fact~\ref{appendix:fact:bisim} item~6, $\Fb$
preserves decidability, hence by an obvious induction, $\bisim$ is decidable.
By Theorem~\ref{coq:fom_eq_fol_characterization}, the equivalent indistinguishability
relation $\foindist$ is decidable.
\end{proof}

Back to our side discussion about first-order definability,
by Theorems~\ref{coq:fom_eq_fol_characterization} and~\ref{appendix:thm:bisim_finite},
there is thus a \coqlink[fom_eq_form_sem]{first-order
formula $\xi$ which characterises first-order indistinguishability up to
$\lf /\lp $ in $\MM$},
i.e.\ $\forall\rho.\, \rho\,x_0 \foindist \rho\,x_1\toot  \MM\vDash_\rho \xi$.
Since its semantics does not depend on variables other that $x_0$ and $x_1$,
one can remap all other variables to e.g.\ $x_0$ hence we can
even ensure that the first-order formula characterising $\foindist$ contains
only two free variables, namely $x_0$ and $x_1$.

\subsection{Removing Model Discreteness and Interpreted Equality}%
\label{sec:fsat_discrete}

We use the strongly decidable congruence $\foindist$ to quotient models onto
discrete ones (in fact $\Fin n$ for some $n$) while preserving first-order satisfaction.

\setCoqFilename{fo_sat}
\begin{definition}[][fo_form_fin_discr_dec_SAT]
	We write $\coqlink[FSAT']{\FSAT'}(\Sigma)\,\phi$ if $\FSAT(\Sigma)\,\phi$ on a discrete model.
\end{definition}

\setCoqFilename{red_utils}
\begin{theorem}[][fo_form_fin_dec_SAT_discr_equiv]%
        \label{thm:quotient_FSAT}
	For every first-order signature $\Sigma$ and formula $\phi$ over $\Sigma$,
        we have $\FSAT(\Sigma)\,\phi$ iff $\FSAT'(\Sigma)\,\phi$,
        and as a consequence, both reductions
        $\FSAT(\Sigma)\red \FSAT'(\Sigma)$ and
        $\FSAT'(\Sigma)\red \FSAT(\Sigma)$ hold.
\end{theorem}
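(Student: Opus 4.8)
The plan is to prove the two implications of the $\toot$ separately and then read off the two reductions. The direction $\FSAT'(\Sigma)\,\phi \to \FSAT(\Sigma)\,\phi$ is immediate: a witnessing discrete model is in particular a finite decidable model, so one merely forgets the discreteness. All the substance lies in the converse $\FSAT(\Sigma)\,\phi \to \FSAT'(\Sigma)\,\phi$, where from a finite decidable model $(D,\MM,\rho)$ with $\MM\vDash_\rho\phi$ — crucially with $D$ \emph{not} assumed discrete — I must manufacture a model over a discrete finite domain still satisfying $\phi$.

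First I would specialise the machinery of the previous subsection to the symbols actually occurring in $\phi$, i.e.\ set $\lf\cdef\funcs\phi$ and $\lp\cdef\preds\phi$. By Theorem~\ref{thm:fo_indist}, bounded first-order indistinguishability $\foindist$ is then a strongly decidable equivalence and a congruence for every symbol in $\lf/\lp$. Since $D$ is finite, its enumerating list doubles as a list of representatives, so $\foindist$ has finitely many equivalence classes, and Theorem~\ref{thm:fin_quotient} yields $n:\Nat$ together with $c:D\to\Fin n$ and $r:\Fin n\to D$ satisfying $c\,(r\,p)=p$ and $x\foindist y\toot c\,x=c\,y$. I would then transport $\MM$ along $c$ to a model $\MM'$ over $\Fin n$ by setting $f^{\MM'}\,\vec w\cdef c\,(f^\MM\,(r\map\vec w))$ and $P^{\MM'}\,\vec w\cdef P^\MM\,(r\map\vec w)$ on the relevant symbols (and arbitrarily otherwise); picking representatives via the section $r$ makes $\MM'$ a total function without any well-definedness obligation. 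The domain $\Fin n$ is finite and discrete by construction, and $\MM'$ is decidable since each $P^{\MM'}$ is built from the decidable $P^\MM$, so $(\Fin n,\MM',c\circ\rho)$ is a legitimate discrete finite model.

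It remains to verify that the quotient preserves satisfaction, namely $\MM\vDash_{\sigma}\psi \toot \MM'\vDash_{c\circ\sigma}\psi$ for every $\sigma$ and every $\psi$ built only from $\lf/\lp$; instantiated at $\psi\cdef\phi$ and $\sigma\cdef\rho$ this delivers the required witness $\MM'\vDash_{c\circ\rho}\phi$. I would prove it by induction on $\psi$, after first establishing the term-level analogue that $c$ commutes with evaluation, $c\,(\hat\sigma\,t)=\widehat{c\circ\sigma}\,t$. The key point, and the main obstacle, is that $r\circ c$ is \emph{not} the identity on $D$: one only has $r\,(c\,x)\foindist x$ (since $c\,(r\,(c\,x))=c\,x$), so at the term and atom cases I must invoke the congruence of $\foindist$ to push these $\foindist$-equivalent arguments through $f^\MM$ and $P^\MM$, and then collapse them with $x\foindist y\toot c\,x=c\,y$. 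The connective cases are routine from the induction hypothesis, while the quantifier cases use the two defining equations of the quotient: in the $\forall$ case one instantiates a universal over $D$ at $r\,b$ and applies $c\,(r\,b)=b$ for the forward direction, and at $c\,a$ for the backward direction, dually for $\exists$. Finally, since both implications hold for one and the same formula $\phi$, the identity $\lambda\phi.\,\phi$ witnesses both $\FSAT(\Sigma)\red\FSAT'(\Sigma)$ and $\FSAT'(\Sigma)\red\FSAT(\Sigma)$.
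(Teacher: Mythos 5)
Your proposal is correct and takes essentially the same route as the paper's proof: both dismiss the direction $\FSAT'(\Sigma)\,\phi\to\FSAT(\Sigma)\,\phi$ as trivial, and for the converse both restrict indistinguishability to the symbols $\funcs\phi/\preds\phi$ occurring in $\phi$, invoke Theorem~\ref{thm:fo_indist} to get a strongly decidable congruence, quotient $D$ onto $\Fin n$ via Theorem~\ref{thm:fin_quotient}, and transport the model along the quotient while preserving satisfaction. Your extra detail (the explicit transported model built from $c$ and $r$, the term-evaluation lemma, the observation that $r\circ c$ is only $\foindist$-related to the identity, and the quantifier cases) merely unfolds what the paper compresses into ``its semantics is preserved along the quotient.''
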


\begin{proof}
$\FSAT(\Sigma)\,\phi$ entails $\FSAT'(\Sigma)\,\phi$ is the non-trivial implication.
Hence we consider a finite $\Sigma$-model $(D,\MM,\rho)$ of $\phi$ and we build a new
finite $\Sigma$-model of $\phi$ which is furthermore discrete.
We collect the symbols occurring in
$\phi$ as the lists $\lf \cdef\funcs\phi$ (for functions) and $\lp \cdef\preds\phi$ (for relations).
By Theorem~\ref{thm:fo_indist}, first-order indistinguishability ${\foindist}:D\to D\to\Prop$
up to $\funcs\phi/\preds\phi$ is  a strongly decidable equivalence over $D$ and a congruence for the semantics of the symbols occurring in $\phi$.
\setCoqFilename{Sig_discrete}%
Using Theorem~\ref{thm:fin_quotient}, we \coqlink[fo_discrete_removal]{build the quotient} $D/{\foindist}$ on a $\Fin n$
for some $n:\Nat$. We transport the model $\MM$ along this quotient
and because $\foindist$ is a congruence for the symbols in $\phi$, its
semantics is preserved along the quotient. Hence, $\phi$ has a finite
model over the domain $\Fin n$ which is both finite and discrete.
\end{proof}

\setCoqFilename{red_utils}
\begin{theorem}[][FIN_DEC_EQ_SAT_FIN_DEC_SAT]%
	\label{thm:uninterpret}
        If $\equiv$ is a binary relation symbol in the
        signature $\Sigma$, one has a
        reduction $\FSATEQ(\Sigma;{\equiv})\red \FSAT(\Sigma)$.
\end{theorem}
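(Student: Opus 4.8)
The plan is to realise the reduction by conjoining to $\phi$ a finite block of axioms that force the symbol $\equiv$ to behave like a congruence, so that any finite model of the enriched formula can be quotiented into one where $\equiv$ is genuine equality. Concretely, I would take $f\,\phi \cdef \phi \dand \Theta_\phi$, where $\Theta_\phi$ conjoins the equivalence axioms $\dforall x.\, x\equiv x$, $\dforall x\,y.\, x\equiv y \dto y\equiv x$ and $\dforall x\,y\,z.\, x\equiv y\dto y\equiv z\dto x\equiv z$, together with, for each $f\inl\funcs\phi$, the congruence axiom $\dforall \vec x\,\vec y.\,\bigl(x_1\equiv y_1\dand\cdots\dand x_{\arity f}\equiv y_{\arity f}\bigr)\dto f\,\vec x\equiv f\,\vec y$, and for each $P\inl\preds\phi$, the axiom $\dforall \vec x\,\vec y.\,\bigl(x_1\equiv y_1\dand\cdots\dand x_{\arity P}\equiv y_{\arity P}\bigr)\dto\bigl(P\,\vec x\dto P\,\vec y\bigr)$. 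Since $\funcs\phi$ and $\preds\phi$ are finite lists, $\Theta_\phi$ is a genuine formula and $f$ is computable; it remains to establish $\FSATEQ(\Sigma;{\equiv})\,\phi\toot\FSAT(\Sigma)\,(f\,\phi)$. The forward implication is immediate: a finite decidable model interpreting $\equiv$ as $=_D$ validates every conjunct of $\Theta_\phi$, because genuine equality is an equivalence and a congruence for all symbols, so such a model witnesses $\FSAT(\Sigma)\,(f\,\phi)$ unchanged.

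For the substantial converse, suppose $\FSAT(\Sigma)\,(f\,\phi)$ is witnessed by a finite decidable model $(D,\MM,\rho)$ with $\MM\vDash_\rho f\,\phi$. The conjuncts of $\Theta_\phi$ make $\equiv^\MM$ an equivalence relation on $D$ that is a congruence for the interpretation of every symbol occurring in $\phi$; decidability of $\MM$ makes $\equiv^\MM$ decidable, and the list enumerating the finite type $D$ serves as a list of representatives, so the hypotheses of \Cref{thm:fin_quotient} are met. Applying it yields $n:\Nat$, a projection $c:D\to\Fin n$ and a section $r:\Fin n\to D$ with $c\,(r\,p)=p$ and $x\equiv^\MM y\toot c\,x=c\,y$; in particular $c$ is surjective. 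I would then transport $\MM$ to a model $\MM'$ over $\Fin n$ by setting $f^{\MM'}\,\vec p\cdef c\,(f^\MM\,(r\map\vec p))$ and $P^{\MM'}\,\vec p\cdef P^\MM\,(r\map\vec p)$ for the symbols of $\phi$, which keeps $\MM'$ decidable.

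The heart of the proof is then a preservation lemma, $\MM\vDash_\sigma\psi\toot\MM'\vDash_{c\circ\sigma}\psi$ for every subformula $\psi$ of $\phi$ and assignment $\sigma$, proved by induction on terms and then on formulas. The term and atom cases are exactly where the congruence conjuncts of $\Theta_\phi$ are consumed — one uses that $r\,(c\,x)\equiv^\MM x$ together with componentwise congruence of $\equiv^\MM$ to see that evaluating through representatives does not change the class — while the quantifier case relies on the surjectivity of $c$ to match domains. Instantiating at $\phi$ and $\rho$ gives $\MM'\vDash_{c\circ\rho}\phi$, and the quotient identity forces $p\equiv^{\MM'}q\toot(r\,p)\equiv^\MM(r\,q)\toot p=q$, so $\MM'$ interprets $\equiv$ as genuine equality over the finite, discrete $\Fin n$, which is precisely $\FSATEQ(\Sigma;{\equiv})\,\phi$. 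I expect the preservation lemma to be the main obstacle, since it is the step where the syntactically supplied congruence must be shown to interact correctly with the quotient of \Cref{thm:fin_quotient}; it mirrors the transport already carried out for \Cref{thm:quotient_FSAT}, the one difference being that here the congruence is imposed by the added axioms rather than obtained from first-order indistinguishability, which is exactly what makes the quotient collapse $\equiv^\MM$ onto equality.
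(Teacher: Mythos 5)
Your proposal is correct and takes essentially the same approach as the paper: the paper's reduction is likewise $\lambda\phi.\,\phi~\dand~\psi(\funcs\phi,{\equiv}::\preds\phi,{\equiv})$, where $\psi$ axiomatises that $\equiv$ is an equivalence and a congruence for the symbols occurring in $\phi$, and its correctness rests on exactly the quotient construction via \Cref{thm:fin_quotient} that you spell out. The only cosmetic difference is that the paper explicitly adds $\equiv$ to the list of relation symbols subject to congruence, which in your version is subsumed when $\equiv$ occurs in $\phi$ and otherwise derivable from symmetry and transitivity.
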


\begin{proof}
\setCoqFilename{fo_congruence}
Given a list $\lf$ (resp.\ $\lp$) of function (resp.\ relation) symbols such that ${\equiv}$ belongs to $\lp$,
we construct a formula \coqlink[fol_congruence]{$\psi(\lf,\lp,{\equiv})$} over the function symbols
in \coqlink[fol_congruence_syms]{$\lf $} and relation symbols in \coqlink[fol_congruence_rels]{$\lp$} expressing
the requirement that $\equiv$ is
an \coqlink[fol_sem_congruence]{equivalence and a congruence} for the symbols in $\lf /\lp $.
Then we show that
$\lambda\phi.\, \phi~\dand~\psi(\funcs\phi, {\equiv}::\preds\phi,{\equiv})$
is a correct reduction, where $\funcs\phi$ and $\preds\phi$
list the symbols occurring in $\phi$.
\end{proof}

\subsection{From Discrete Signatures to Singleton Signatures}

Let us start by converting a discrete signature to a finite and discrete
signature.

\setCoqFilename{Sig_Sig_fin}
\begin{lemma}[][Sig_discrete_to_pos]%
\label{lem:signature_finite}
For any formula $\phi$ over a discrete signature $\Sigma$,
one can compute a
signature $\Sigma_{n,m}=(\Fin n;\Fin m)$,
arity preserving maps $\Fin n\to \Funcs$ and $\Fin m\to \Preds$,
and an \emph{equi-satisfiable} formula $\psi$ over  $\Sigma_{n,m}$, i.e.
$\FSAT(\Sigma)\,\phi\toot\FSAT(\Sigma_{n,m})\,\psi$.
\end{lemma}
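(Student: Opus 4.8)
The plan is to exploit that $\phi$ mentions only finitely many symbols, namely those collected by the earlier-defined lists $\funcs\phi:\List\,\Funcs$ and $\preds\phi:\List\,\Preds$. Using discreteness of $\Funcs$ and $\Preds$, I first remove duplicates from these lists to obtain lists $\lf$ and $\lp$ enumerating, without repetition, the distinct function and relation symbols occurring in $\phi$; I set $n\cdef\clen{\lf}$ and $m\cdef\clen{\lp}$ and take $\Sigma_{n,m}\cdef(\Fin n;\Fin m)$ with arities given by lookup, $\arity i\cdef\arity{(\lf)_i}$ for $i:\Fin n$ and $\arity j\cdef\arity{(\lp)_j}$ for $j:\Fin m$. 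The two required maps $\Fin n\to\Funcs$ and $\Fin m\to\Preds$ are then the (now injective) indexing functions $i\mapsto(\lf)_i$ and $j\mapsto(\lp)_j$, which are arity preserving by the very definition of the arities on $\Sigma_{n,m}$; everything here is manifestly computable.

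Next I would build the reverse translation on the symbols actually occurring in $\phi$. Since $\lf$ has no duplicates, discreteness lets me send each $f\inl\lf$ to the unique index $i:\Fin n$ with $(\lf)_i=f$, and symmetrically for relations; these maps invert the indexing functions on occurring symbols and are arity preserving because $(\lf)_i=f$ forces $\arity i=\arity f$. I then define $\psi$ by recursion over $\phi$, rewriting each symbol occurrence to its index while recursively translating subterms. The one delicate point is the dependent typing: a term $f\,\vec t$ with $\vec t:\Term_\Sigma^{\arity f}$ has to become $i\,\vec t'$ with $\vec t':\Term_{\Sigma_{n,m}}^{\arity i}$, which typechecks precisely because $\arity i=\arity f$; relation atoms are handled identically. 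The recursion naturally supplies the membership proofs needed to invoke the (partial) reverse maps, since every symbol visited occurs in $\phi$ and hence lies in $\lf$ (resp.\ $\lp$).

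Finally I would prove $\FSAT(\Sigma)\,\phi\toot\FSAT(\Sigma_{n,m})\,\psi$ by transferring models in both directions, keeping the domain $D$, its finiteness witness, the decidability witnesses, and the assignment $\rho$ unchanged. The heart is a single transfer lemma: whenever $\MM$ over $\Sigma$ and $\MM'$ over $\Sigma_{n,m}$ are \emph{compatible}, meaning $((\lf)_i)^\MM=i^{\MM'}$ for all $i$ and $((\lp)_j)^\MM\,\vec v\toot j^{\MM'}\,\vec v$ for all $j,\vec v$ (the dependent types matching thanks to arity preservation), an induction on terms and formulas yields $\MM\vDash_\rho\phi\toot\MM'\vDash_\rho\psi$. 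For the forward direction I set $i^{\MM'}\cdef((\lf)_i)^\MM$ and $j^{\MM'}\cdef((\lp)_j)^\MM$, which is compatible definitionally. For the backward direction I interpret a symbol of $\Sigma$ by the index-interpretation of its reverse image when it occurs in $\phi$ (decided via discreteness) and by a decidable default otherwise — a constant $\lambda\_.\,\rho\,0$ for functions, using that $D$ is inhabited by $\rho\,0$, and $\lambda\_.\,\bot$ for relations; compatibility holds for every $i,j$ because each $(\lf)_i$ and $(\lp)_j$ does occur in $\phi$, while the inductive satisfaction argument only ever consults the interpretations of these indexed symbols, so the defaults are never examined. I expect the main obstacle to be not mathematical depth but the dependent-type bookkeeping in the term and formula translation, together with threading the membership and arity obligations of the partial reverse maps through the recursion and through the compatibility statement, and choosing the default interpretations so as to preserve decidability of the model.
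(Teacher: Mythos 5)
Your proposal is correct and follows essentially the same route as the paper: collect the symbols occurring in $\phi$, use discreteness to set up a bijection between these occurring symbols and $\Fin n$, $\Fin m$, then structurally translate $\phi$ along this bijection and transfer models in both directions without changing the domain, finiteness, decidability, or assignment. The only cosmetic difference is that the paper obtains the bijection by invoking Corollary~\ref{coro:fin_type} (whose proof, via Theorem~\ref{thm:fin_quotient}, is exactly your deduplication-plus-positional-lookup construction), whereas you inline that construction and additionally spell out the default interpretations for non-occurring symbols, which the paper leaves implicit.
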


\begin{proof}
We use the discreteness of $\Sigma$ and bijectively map the lists of symbols $\funcs\phi$ and $\preds\phi$
onto $\Fin n$ and $\Fin m$ respectively, using
Corollary~\ref{coro:fin_type}. We structurally map $\phi$ to $\psi$
over $\Sigma_{n,m}$ along this bijection, which preserves finite satisfiability.
\end{proof}

Notice that $n$ and $m$ in the signature $\Sigma_{n,m}$ depend on $\phi$, hence
the above statement cannot be presented as a reduction between (fixed) signatures.

\smallskip

We now erase all function symbols by encoding them with relation symbols.
To this end, let $\Sigma=(\Funcs; \Preds)$ be a signature, we set $\Sigma'\cdef(\Void; \{\equiv^2\}+\Funcs^{+1}+\Preds)$
where $\equiv$ is a new interpreted relation symbol of arity two and in the conversion, function symbols have arity lifted by
one, hence the $\Funcs^{+1}$ notation.

\setCoqFilename{red_undec}
\begin{lemma}[][FIN_DISCR_DEC_SAT_FIN_DEC_EQ_NOSYMS_SAT]%
	\label{lem:remove_functions}
        For any finite\footnote{In the Coq code, we prove the theorem for finite \emph{or} discrete types of function symbols.}
        type of function symbols $\Funcs$, one can construct a
        reduction $\FSAT'(\Funcs; \Preds)\red \FSATEQ(\Void; \{\equiv^2\}+\Funcs^{+1}+\Preds;{\equiv^2})$.
\end{lemma}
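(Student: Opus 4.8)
The plan is to implement the standard elimination of function symbols by their graphs, specialised to the finite discrete setting where those graphs become decidable relations. The reduction sends $\phi$ to $\psi\cdef\hat\phi~\dand~\bigwedge_{f\inl\funcs\phi}\mathrm{Func}_f$. For each function symbol $f$ occurring in $\phi$, writing $R_f:\Funcs^{+1}$ for its arity-lifted relation symbol, the axiom $\mathrm{Func}_f\cdef(\dforall\vec x.\,\dexists y.\,R_f\,\vec x\,y)\dand(\dforall\vec x\,y\,z.\,R_f\,\vec x\,y\dto R_f\,\vec x\,z\dto y\equiv z)$ states that $R_f$ is total and single-valued; it is precisely the interpreted equality $\equiv$ of the target signature that lets us phrase functionality. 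The translation $\hat\phi$ flattens terms: by recursion on a term $t$ I define a formula $[t{=}x]$ with a single designated free variable $x$, setting $[v{=}x]\cdef(v\equiv x)$ for a variable $v$ and $[f\,\vec t{=}x]\cdef\dexists\vec y.\,\bigl(\bigwedge_i[t_i{=}y_i]\bigr)\dand R_f\,\vec y\,x$; I lift this to atoms by $\widehat{P\,\vec t}\cdef\dexists\vec y.\,\bigl(\bigwedge_i[t_i{=}y_i]\bigr)\dand P\,\vec y$ and let $\hat\cdot$ commute with connectives and quantifiers. Since the target signature has $\Void$ as its type of function symbols, every atom of $\psi$ mentions only variables, as required.

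The backbone is a term-flattening correctness lemma relating two models over a common domain $D$: a source model $\MM$ (interpreting the $f$'s and $P$'s) and a target model $\MM'$ that interprets $\equiv$ as $=_D$, each $P$ as $P^\MM$, and each $R_f$ as the graph of $f^\MM$, i.e.\ $R_f^{\MM'}\,\vec v\,w\toot f^\MM\,\vec v=w$. Under this linkage I first show $\MM'\vDash_\rho[t{=}x]\toot\hat\rho\,t=\rho\,x$ by induction on $t$ (with $\hat\rho$ the term evaluation in $\MM$), and consequently $\MM'\vDash_\rho\hat\phi\toot\MM\vDash_\rho\phi$ by induction on $\phi$. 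For the forward implication, given a finite \emph{discrete} source model $(D,\MM,\rho)$ of $\phi$, I define the target model $\MM'$ over the same domain by this very linkage: keep $P^\MM$, interpret $\equiv$ as $=_D$, and set $R_f^{\MM'}\,\vec v\,w\cdef f^\MM\,\vec v=w$. Discreteness of $D$ is exactly what makes $R_f^{\MM'}$ and $=_D$ decidable, so $\MM'$ is a legitimate decidable model --- which is why the source problem must be $\FSAT'$ rather than $\FSAT$. Each $\mathrm{Func}_f$ holds because $f^\MM$ is a genuine function, and $\hat\phi$ holds by the correctness lemma, so $(D,\MM',\rho)$ witnesses $\FSATEQ$ of $\psi$.

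For the backward implication, given a finite model $(D,\MM',\rho)$ of $\psi$ --- where $D$ is discrete because $\equiv$ is interpreted as equality and each $R_f^{\MM'}$ is decidable --- the axioms $\mathrm{Func}_f$ supply a total single-valued relation from which I must \emph{extract} an actual function $f^\MM:D^{\arity f}\to D$, interpreting $P$ unchanged and any function symbol absent from $\phi$ by a constant (the domain is nonempty, as it carries the assignment $\rho$). This reconstitutes the linkage, so the correctness lemma yields $\MM\vDash_\rho\phi$ and hence $\FSAT'(\Funcs;\Preds)\,\phi$. The extraction is the one genuinely constructive step, and the only place where finiteness is essential: the totality conjunct of $\mathrm{Func}_f$ yields only $\exists w.\,R_f^{\MM'}\,\vec v\,w$ in $\Prop$, but since $D$ is finite and $R_f^{\MM'}$ decidable a bounded search returns a witness, and the functionality conjunct makes it unique, so $R_f^{\MM'}$ is indeed the graph of the resulting $f^\MM$; I would isolate this as its own fact.

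I expect the main obstacle to lie neither in the model constructions nor in the extraction, but in the de Bruijn bookkeeping inside $\hat\cdot$. Flattening an $\arity f$-ary application introduces $\arity f$ fresh existentials that shift indices, so both the statement and the inductive proof of the term-flattening lemma demand careful lifting of assignments to keep the designated variable of each $[t_i{=}y_i]$ correctly aligned across the recursive calls; getting this index arithmetic right is the fiddly part that is invisible on paper but dominates the mechanisation.
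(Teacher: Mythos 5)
Your proposal is correct and follows essentially the same route as the paper's proof: terms are flattened into formulas expressing $x\equiv t$, function symbols become arity-lifted relation symbols axiomatised as total and functional, and the backward direction constructively recovers genuine functions from these relations using finiteness of the domain together with decidability of their interpretations --- exactly the crux the paper highlights. One minor wrinkle: since you conjoin $\mathrm{Func}_f$ only over $f\inl\funcs\phi$, your backward-direction case split on whether a symbol occurs in $\phi$ silently requires discreteness of $\Funcs$, so under the finiteness-only hypothesis of the statement you should either conjoin the axioms over all of $\Funcs$ or apply your bounded-search-with-default extraction uniformly to every function symbol (the axioms then guarantee the linkage exactly for the symbols occurring in $\phi$, which is all the correctness lemma needs).
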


\begin{proof}
The idea is to  recursively replace a term $t$ over $\Sigma$ by a formula which is ``equivalent'' to
$x\equiv t$ (where $x$ is a fresh variable not occurring in $t$)
and then an atomic formula like e.g.\ $P\,[t_1;t_2]$ by  $\exists\, x_1\,x_2.\, x_1\equiv t_1\,\dand\,x_2\equiv t_2\,\dand\, P\,[x_1;x_2]$.
We complete the encoding with a formula stating that every function symbol $f:\Funcs$ is encoded into a
total functional relation $P_f: \Funcs^{+1}$ of arity augmented by $1$.

Notice that \setCoqFilename{utils}%
\coqlink[graph_tot_reif]{constructively recovering actual functions from total functional relations}
is made possible by the finiteness/discreteness of the domain, combined with the
decidability of the semantic interpretation of the corresponding relation symbols.
In particular, this would fail in the case of SAT with potentially infinite models.
\end{proof}

Next, assuming that the function symbols have already been erased, we explain how to merge the relation symbols in a signature
$\Sigma=(\Void; \Preds)$ into a single relation symbol, provided that there is an upper bound for the arities in $\Preds$.

\begin{lemma}[][FSAT_REL_BOUNDED_ONE_REL]%
\label{lem:bounded_arity}
The reduction $\FSAT(\Void; \Preds)\red\FSAT\bigl(\Void; \{Q^{1+n}\}\bigr)$ holds
when $\Preds$ is a finite and discrete type of relation symbols and
$|P|\le n$ holds for all $P:\Preds$.
\end{lemma}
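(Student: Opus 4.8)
The plan is to encode the predicate index into the extra argument slot of $Q$. Since $\predssymb$ is finite and discrete, Corollary~\ref{coro:fin_type} yields a number $k$ and a bijection between $\predssymb$ and $\Fin k$; write $P_1,\ldots,P_k$ for the relation symbols and $m_j\cdef\arity{P_j}\le n$ for their arities. The reduction maps a formula $\phi$ to $\psi\cdef\dexists z_1\cdots z_k.\,\phi'$, where $\phi'$ rewrites every atom $P_j\,[t_1;\ldots;t_{m_j}]$ (the $t_i$ being variables, as $\Void$ provides no function symbols, so no term rewriting is required) into $Q\,[z_j;t_1;\ldots;t_{m_j};z_j;\ldots;z_j]$, padding the remaining $n-m_j$ slots with $z_j$ so that $Q$ receives exactly $1+n$ arguments; connectives and quantifiers are traversed structurally, the de Bruijn indices of the $z_j$ being shifted under each binder. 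This map is computable, and the $z_j$ play the role of pairwise distinct \emph{tags} selecting the encoded symbol.

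For the implication $\FSAT(\Void;\predssymb)\,\phi\to\FSAT(\Void;\{Q^{1+n}\})\,\psi$, I would start from a finite decidable model $(D,\MM,\rho)$ of $\phi$; note $D$ is inhabited since $\rho:\Nat\to D$ exists, so I may fix some $d_0:D$. I set $D'\cdef D+\Fin k$ and define the \emph{collapse} $\kappa:D'\to D$ by $\kappa\,(\iota_1 d)\cdef d$ and $\kappa\,(\iota_2 j)\cdef d_0$, which is surjective. I interpret $Q$ by $Q^{\MM'}\,[a_0;\ldots;a_n]\cdef P_j^\MM\,[\kappa\,a_1;\ldots;\kappa\,a_{m_j}]$ when $a_0=\iota_2 j$ is the $j$-th tag, and $\bot$ otherwise; this is decidable and $D'$ is finite. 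Choosing the witnesses $z_j\cdef\iota_2 j$, the key step is to prove, by induction on $\phi$, that $\MM'\vDash_{\alpha'}\phi'\toot\MM\vDash_{\kappa\circ\alpha'}\phi$ for every assignment $\alpha'$ mapping the (suitably shifted) tag variables to their tags $\iota_2 j$. Atoms match by construction, and at a quantifier the quantification over all of $D'$ collapses faithfully onto $D$ precisely because $\kappa$ is surjective, so the extraneous tag elements cannot disturb satisfaction; the tag invariant is preserved under binders because both the index shift in $\phi'$ and the extension $a\cdot\alpha'$ shift uniformly.

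Conversely, from a finite decidable model $(D',\MM',\rho')$ of $\psi$ I would extract witnesses $w_1,\ldots,w_k:D'$ for the leading existentials and keep $D'$ as the domain, interpreting each $P_j$ directly by $P_j^\MM\,[d_1;\ldots;d_{m_j}]\cdef Q^{\MM'}\,[w_j;d_1;\ldots;d_{m_j};w_j;\ldots;w_j]$, which is again decidable over the finite $D'$. Here a routine atom-wise induction gives $\MM\vDash_{\rho'}\phi$ from $\MM'$ satisfying $\phi'$ under the assignment sending each $z_j$ to $w_j$, with no collapse needed since the domain is unchanged. Decidability and finiteness of the models are preserved in both directions, which completes the reduction.

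The main obstacle is the faithfulness of $\phi$'s quantifiers in the forward direction: enlarging the domain is forced, since we need $k$ pairwise distinct tags that need not exist in $D$, yet a larger domain would otherwise let the quantifiers of $\phi$ range over the junk tag elements and break the correspondence. The collapse map $\kappa$ is exactly what repairs this, and its surjectivity is the load-bearing property of the argument. It is worth stressing that the target signature carries no interpreted equality, so one can neither assert nor needs to assert inside $\psi$ that the tags are distinct; distinctness is exploited only externally, when constructing the forward model. Everything else (padding to arity $1+n$, the de Bruijn bookkeeping for the $k$ prepended existentials, and preservation of decidability and finiteness) is routine.
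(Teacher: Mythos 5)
Your proof is correct, and its core encoding --- dispatching the relation symbol through the extra first argument of $Q$ and padding the remaining slots --- is exactly the idea the paper uses. The difference is architectural: the paper proves this lemma by composing three \emph{independent} reductions (Fact~\ref{prop:three_reds}): first pad every atom to uniform arity $n$ using an arbitrary variable $x_0$; then replace $P\,\vec v$ by $Q\,(P\ccons\vec v)$, where the symbols $P$ become \emph{constant symbols} (0-ary function symbols) of an intermediate signature $(\Preds^0;\{Q^{1+n}\})$; finally eliminate these constants by substituting fresh variables chosen above the free variables of the formula. You instead do everything in one monolithic step, realising the tags directly as existentially quantified fresh variables $z_j$ (padding with $z_j$ rather than $x_0$, and closing over the tags existentially rather than leaving them free, are both inessential variations, since $\FSAT$ carries an assignment). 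What the paper's factoring buys is reusability and smaller per-step proof obligations --- items (1) and (3) of Fact~\ref{prop:three_reds} are stated for arbitrary signatures and serve elsewhere in the development --- at the price of passing through intermediate signatures with function symbols. What your version makes explicit is the semantic content the paper leaves implicit in item (2): the forward model must enlarge the domain (your $D+\Fin k$) to obtain $k$ pairwise distinct tags, and the collapse map $\kappa$ with its surjectivity is precisely what keeps the quantifiers of $\phi$ faithful over the enlarged domain, since the target signature has no interpreted equality with which to fence off the junk elements. That observation is the load-bearing step of either proof, and you identify and discharge it correctly.
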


\begin{proof}
This comprises three independent reductions, see Fact~\ref{prop:three_reds} below.
\end{proof}

In the following, we denote
by $\Funcs^n$ (resp.\ $\Preds^n$) the same type of function (resp.\
relation) symbols but where the arity is uniformly converted to $n$.

\begin{myfact}%
  \label{prop:three_reds}
  Let $\Sigma=(\Funcs; \Preds)$ be a signature:
  \begin{enumerate}
  \coqitem[FSAT_UNIFORM] $\FSAT(\Funcs; \Preds)\red \FSAT(\Funcs; \Preds^n)$ if $|P|\le n$ holds for all $P:\Preds$;
  \coqitem[FSAT_ONE_REL] $\FSAT(\Void; \Preds^n)\red \FSAT(\Preds^0;\{Q^{1+n}\})$ if $\Preds$ is finite;
  \coqitem[FSAT_NOCST] $\FSAT(\Funcs^0; \Preds)\red \FSAT(\Void; \Preds)$ if $\Funcs$ is discrete.
  \end{enumerate}
\end{myfact}

\begin{proof}
For the first reduction, every atomic formula of the form $P\,\vec v$ with $\clen{\vec v\,}=\arity P\leq n$ is converted to $P\,\vec w$
with $\vec w\cdef \vec v \capp [x_0;\dots;x_0]$ and $\clen{\vec w\,}= n$ for an arbitrary term variable $x_0$.
The rest of the structure of formulas is unchanged.

For the second reduction, we convert every atomic formula $P\,\vec v$ with $\clen{\vec v\,}=n$
into $Q(P\ccons\vec v)$ where $P$ now represents a constant symbol
($Q$ is fixed).

For the last reduction, we replace every constant symbol by a corresponding fresh variable
chosen above all the free variables of the transformed formula.
\end{proof}

\subsection{\mathversion{bold}%
  Compressing \texorpdfstring{$n$}{n}-ary Relations to Binary Membership}%
\label{sec:compressing}

Let $\Sigma_n=(\Void;\{P^n\})$ be a singleton signature where $P$ is of arity $n$.
We now show that $P$ can be compressed to a binary relation modelling
set membership via a construction using hereditarily finite
sets~\cite{SmolkaStark:2016:Hereditarily}
(useful only when $n\ge 3$).

\begin{theorem}[][FIN_DISCR_DEC_nSAT_FIN_DEC_2SAT]%
	\label{thm:compress_relations}
	$\FSAT'(\Void; \{P^n\})\red \FSAT(\Void; \{\din^2\})$.
\end{theorem}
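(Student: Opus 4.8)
The plan is to code the single $n$-ary relation into a finite family of hereditarily finite sets on which the only needed primitive is binary membership, following~\cite{SmolkaStark:2016:Hereditarily}. Since the source is $\FSAT'$, I may assume the domain is $\Fin d$ for some $d$ by \Cref{coro:fin_type}. I fix $d$ pairwise distinct hereditarily finite sets $e_0,\dots,e_{d-1}$ coding the elements of $\Fin d$, an iterated Kuratowski pairing $\langle x_1,\dots,x_n\rangle$ coding $n$-tuples, and I code the relation by the two sets $\alpha\cdef\{e_0,\dots,e_{d-1}\}$ (the ``points'') and $\varrho\cdef\{\,\langle e_{a_1},\dots,e_{a_n}\rangle \mid P^\MM[a_1;\dots;a_n]\,\}$. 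The new domain is the transitive closure of $\{\alpha,\varrho\}$, which is again a finite and discrete type carrying a decidable membership relation.

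On the syntactic side I define the reduction $\phi\mapsto\psi$ as follows. Because $\din$ is the only symbol, equality is recovered as extensional equality $x\doteq y\cdef\dforall z.\, z\din x\dtoot z\din y$, and Kuratowski pairing, hence $n$-tupling, is expressed by a formula $\mathrm{tup}(p,x_1,\dots,x_n)$ built from $\din$ and $\doteq$ stating that $p$ is $\langle x_1,\dots,x_n\rangle$. I relativise every quantifier of $\phi$ to a distinguished point variable, replacing $\dforall x.\,\chi$ by $\dforall x.\, x\din\alpha\,\dto\,\chi$ and dually for $\dexists$, and I replace each atom $P\,[x_1;\dots;x_n]$ by $\dexists p.\, \mathrm{tup}(p,x_1,\dots,x_n)\,\dand\, p\din\varrho$, where $\alpha$ and $\varrho$ are two variables bound by a leading $\dexists$ in $\psi$ (intended to denote the point set and the relation set).

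For soundness I check that the hereditarily finite model above satisfies $\psi$: interpreting $\alpha,\varrho$ by the sets of the same name, a relativised quantifier ranges exactly over the points $e_0,\dots,e_{d-1}$, so it mirrors the corresponding quantifier of $\phi$ over $\Fin d$; and for points $\vec x$ there is some $p$ with $\mathrm{tup}(p,\vec x)\dand p\din\varrho$ iff the genuine tuple $\langle\vec x\rangle$ lies in $\varrho$, i.e.\ iff $P^\MM\,\vec x$, using that $\doteq$ coincides with $=$ in an extensional model and that Kuratowski tupling is injective (the intermediate singletons and doubletons witnessing $\mathrm{tup}$ already lie in the transitive closure of $\varrho$). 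For completeness I start from an arbitrary finite $\din$-model satisfying $\psi$ with witnesses for $\alpha,\varrho$, take as new domain the decidable, hence finite, subtype of points $\{x\mid x\din^\MM\alpha\}$, and define the $n$-ary relation by $P'\,\vec x\cdef\exists p.\,\mathrm{tup}(p,\vec x)\dand p\din^\MM\varrho$; this relation is decidable by \Cref{fact:satisfaction_dec}, and since the atoms and relativised quantifiers of $\psi$ match $P'$ and quantification over the points, the resulting $\{P^n\}$-model satisfies $\phi$.

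I expect the main obstacle to be the soundness direction: verifying that $n$-tupling is faithfully definable from $\din$ alone — in particular that extensional equality behaves as genuine equality in the hereditarily finite model and that Kuratowski tupling is injective, so that membership of the coded tuple in $\varrho$ coincides exactly with $P$ — while simultaneously keeping the carrier finite by closing only under the transitive closure of $\{\alpha,\varrho\}$. By contrast the completeness direction is light, since it merely reads off the relation defined by the translated atom. The whole construction is only genuinely needed for $n\ge 3$; for $n\le 2$ the relation is already at most binary and can be handled directly.
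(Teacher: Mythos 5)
Your construction follows essentially the same route as the paper's: hereditarily finite sets à la~\cite{SmolkaStark:2016:Hereditarily} furnish the target model, Kuratowski pairing expressed by formulas over $\din$ and extensional equality encodes $n$-tuples, quantifiers are relativised to a distinguished ``point set'' variable, and atoms become tuple-membership in a distinguished ``relation set'' variable. The points where you deviate --- binding $\alpha,\varrho$ by an outer $\dexists$ instead of leaving them as reserved free variables, and taking as carrier the transitive closure of $\{\alpha,\varrho\}$ instead of the membership-restriction to an iterated powerset $\mathcal{P}^{2n+1}(d)$ as in the paper's Theorem~\ref{thm:Sign_Sig2} --- are harmless variations: your carrier is a transitive set, so extensional equality coincides with identity on it, which is exactly what the soundness argument needs.

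However, the translation you define is incorrect as stated, because it omits the two guard conjuncts that the paper adds to the relativised formula. (i)~Nothing in your $\psi$ forces the point set $\alpha$ to have any members. Take $\phi\cdef\dforall x.\,\dbot$: it is not finitely satisfiable, since every model comes with an assignment $\Nat\to D$ and hence an inhabited domain; but your translation $\dexists\alpha\,\varrho.\,\dforall x.\,x\din\alpha\dto\dbot$ is satisfied in a one-point model with empty $\din$ by choosing $\alpha$ memberless, the relativised quantifier then holding vacuously. This is precisely where your completeness direction breaks: the subtype $\{x\mid x\din^\MM\alpha\}$ may be empty, and then no environment $\Nat\to D'$ exists at all. (ii)~Nothing constrains the \emph{free} variables of $\phi$ to denote points. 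For instance $\phi\cdef P[x_0;\dots;x_0]\,\dand\,\dforall y.\,\dneg P[y;\dots;y]$ is unsatisfiable, yet even after fixing (i) its translation has a finite model: interpret $x_0$ as a set lying outside $\alpha$ whose coded tuple belongs to $\varrho$, while no tuple of members of $\alpha$ does; your induction then has no way to map the value of $x_0$ into the new domain while preserving the atom. The paper repairs exactly these two defects by conjoining $\phi_1\cdef\dexists z.\,z\din d$ (the domain set is non-empty) and $\phi_2\cdef x_1\din d\dand\cdots\dand x_k\din d$ where $[x_1;\ldots;x_k]=\FV(\phi)$. With these two conjuncts added to your $\psi$, your argument goes through as in the paper.
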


\newcommand{\pair}[2]{({#1},{#2})}
\newcommand{\bpair}[2]{\bigl({#1},{#2}\bigr)}
\newcommand{\myred}{\Sigma_{n\rightsquigarrow 2}}
\newcommand{\myredr}{\myred'}
\newcommand{\powerset}{\mathcal P}
\newcommand{\dequal}{\mathrel{\dot\approx}}
\newcommand{\dis}{\mathrel{\dot\equiv}}

Technically, this reduction is one of the most involved in this work, although in most
presentations of Trakhtenbrot's theorem, this is left as an ``easy exercise\rlap,'' see 
e.g.~\cite{Libkin:2010:EFM:1965351}. Maybe it is perceived so because it relies
on the encoding of tuples in classical set theory, which is somehow natural for
mathematicians, but
properly building the finite set model
in constructive type theory was not that easy.

Here we only give an overview
of the main tools.
We encode an arbitrary $n$-ary relation $R:X^n\to\Prop$ over a finite and discrete type $X$
in the theory of \emph{membership} over the signature $\Sigma_2=(\Void; \{{\din}^2\})$.
Membership is much weaker than set theory because the only required
set-theoretic axiom is \emph{extensionality}: it suffices to prove the
properties of the below described encoding of pairs, ordered pairs and $n$-tuples.
Two sets are extensionally equal if
their members are the same, and we define identity as:
\[x\dequal y {~\cdef~} \dforall z.\, z\din x~\dtoot~z\din y\]
Extensionality states that two extensionally equal sets
belong to the same sets:
\begin{equation}%
\label{eq:ext}
\dforall x y.\, x\dequal y~\dto~\dforall z.\, x\din z~\dto~y\din z
\end{equation}
As a consequence of the extensionality axiom~\eqref{eq:ext}, no first-order formula over $\Sigma_2$ can distinguish two extensionally
equal sets\footnote{Even if $x$ and $y$ have the same elements, this does not imply their identity
in every model. However they are identical in the quotient model of Section~\ref{sec:discrete_models},
because then $x$ and $y$ are first-order indistinguishable.} because identity $\dequal$ is
a congruence for membership~$\din$. Moreover,
establishing the identity $x\dequal y$ of two sets reduces to proving that they have the same elements.

\smallskip

Notice that the language of membership theory (and set theory) does not contain
any function symbol, hence, contrary to usual mathematical practices, there is no
other way to handle a set than via a characterising formula which makes it
a very cumbersome language to work with formally. However, this is how we have to proceed
in the Coq development but here, we also use to meta-level terms in the prose for simplicity.

\setCoqFilename{membership}

Following Kuratowski, the ordered pair of two sets $x$ and $y$ is encoded as $\pair x y = \bigl\{\{x\},\{x,y\}\bigr\}$.
In the first-order theory of membership as implemented in Coq, with terms limited to variables, this means
we encode the sentences ``$p\dis \{x,y\}$'' and
``$p\dis \pair x y$''  by:\footnote{where the $\dis$ notation reads as ``represents\rlap.''}
\[\begin{array}{r@{~\cdef~}l}
 \coqlink[mb_is_pair]{p \dis\{x,y\}}    & \dforall a.\, a\din p~\dtoot~ a \dequal x \,\dor\, a\dequal y\\[0.5ex]
 \coqlink[mb_is_opair]{p \dis \pair x y} & \dexists a b.\, a\dis \{x,x\} ~\dand~ b \dis \{x,y\} ~\dand~ p \dis \{a,b\}\\
\end{array}\]
while the $n$-tuple/vector $(x_1,\ldots,x_n)$ is encoded as $\bpair{x_1}{(x_2,\ldots,x_n)}$
recursively. Hence we encode the sentence ``\coqlink[mb_is_tuple]{$t \dis\vec v$\,}'' recursively on $\vec v\,$ by:
\[
t\dis\cnil {~\cdef~} \dforall a.\, \dneg\,a\din t \qquad\qquad
t\dis x\ccons \vec v\,  {~\cdef~} \dexists p.\, t\dis \pair {x} p~\dand~p\dis \vec v
\]
Finally we encode ``tuple $\vec v\,$ belongs to $r$'' as
 $\coqlink[mb_is_tuple_in]{\vec v\din r}\cdef \dexists t.\,t\dis \vec v~\dand~t\din r$.

\setCoqFilename{Sign_Sig2}

We can now describe the reduction function which maps formulas  over $\Sigma_n$ to
formulas over $\Sigma_2$. We reserve two first-order
variables $d$ (for the domain $D$) and $r$ (for the relation $R$). We describe
the recursive part of the reduction \coqlink[Sign_Sig2_encoding]{$\myredr$}:
\[\begin{array}{@{}r@{~\cdef~}l@{\qquad}r@{~\cdef~}l@{}}
  \myredr(P\,\vec v) & \vec v\din r
& \myredr(\dforall z.\,\phi) & \dforall z.\,z\din d~\dto~\myredr(\phi)\\
  \myredr(\phi\,\binop\,\psi) & \myredr(\phi)\,\binop\,\myredr(\psi)
& \myredr(\dexists z.\,\phi) & \dexists z.\,z\din d~\dand~\myredr(\phi)\\
\end{array}
\]
ignoring the de Bruijn syntax (which would imply adding $d$ and $r$ as
extra parameters of $\myredr$). Notice that we have to prevent $d$ and $r$ from occurring
freely in $\phi$.
In addition, for given $\phi$ we set:
\[
\begin{array}{r@{~\cdef~}l@{\qquad}l}
\phi_1 & \dexists z.\, z\din d        & \text{i.e.\ $d$ is non-empty;}\\[0.5ex]
\phi_2 & x_1\din d~\dand~\cdots~\dand~x_k\din d & \text{where $[x_1;\ldots;x_k] = \FV(\phi)$.}
\end{array}\]
This gives us the reduction function $\myred(\phi) \cdef \phi_1\,\dand\,\phi_2\,\dand\,\myredr(\phi)$.

\smallskip

The \emph{\coqlink[SAT2_SATn]{completeness}}
of the reduction $\myred$ is the easy part. Given a finite model of $\myred(\phi)$
over $\Sigma_2$, we recover a model of $\phi$ over $\Sigma_n$ by selecting as the new
domain the members of $d$,\footnote{using the Boolean counterpart of the decidable membership predicate to
get a finite dependent sum.}
and the interpretation of $P\,\vec v$ is given by testing whether the $n$-tuple $\vec v$ belongs to $r$.

The \emph{\coqlink[SATn_SAT2]{soundness}} of the reduction $\myred$ is the formally involved part, with Theorem~\ref{thm:Sign_Sig2}
below containing the key construction.

\setCoqFilename{reln_hfs}
\begin{theorem}[][reln_hfs]%
\label{thm:Sign_Sig2}
Given a decidable $n$-ary relation $R:X^n\to\Prop$ over a finite, discrete
and inhabited type $X$, one can compute a finite type $Y$ equipped
with a decidable relation ${\in}:Y\to Y\to\Prop$, two distinguished
elements $d,r:Y$ and a pair of maps $i:X\to Y$ and $s:Y\to X$ such that:
\begin{enumerate}[~~1.]
\item for any $x:X$, $i\,x\in d$ holds;
\item for any $y\in d$, there exists $x$ such that $y=i\,x$;
\item for any $\vec v:X^n$, $R\,\vec v$ holds iff ``tuple $i(\vec v\,)$ belongs to $r$\rlap.''
\end{enumerate}
\end{theorem}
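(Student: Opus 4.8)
The plan is to realise the intended set-theoretic model concretely by a type of \emph{hereditarily finite sets} (HFS), following~\cite{SmolkaStark:2016:Hereditarily}. First I would fix a type $\mathbb H$ of HFS equipped with a decidable membership relation and a decidable, extensional equality, closed under the empty set and under adjunction (adding one element), hence under finite-set and pair formation. Since equality on $\mathbb H$ is extensional, the Kuratowski pairing $\pair x y=\{\{x\},\{x,y\}\}$ is injective and the recursive $n$-tuple encoding inherits injectivity. Because $X$ is finite, discrete and inhabited, Corollary~\ref{coro:fin_type} yields an enumeration $x_0,\ldots,x_{k-1}$ with $k\ge 1$, and I encode atoms by setting $i\,x_j\cdef$ the HFS-numeral of $j$ (e.g.\ the $j$-th von Neumann ordinal), which is injective; the accompanying map $s:Y\to X$ sends $y$ to $x_j$ when $y=i\,x_j$ for the (decidably unique) $j$ and to the default $x_0$ otherwise, so that $s$ inverts $i$.

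Next I build the two distinguished sets. I let $d\cdef\{i\,x_0,\ldots,i\,x_{k-1}\}$ encode the domain, and, enumerating the finite and decidable collection of tuples $\vec v:X^n$ with $R\,\vec v$, I let $r$ be the HFS whose elements are exactly the Kuratowski tuples $T(\vec v)\cdef(i\,v_1,\ldots,i\,v_n)$ for those $\vec v$. I then take $Y$ to be the subtype of $\mathbb H$ given by the transitive closure $\mathrm{tc}(\{d,r\})$, which contains $d$, $r$, all tuples $T(\vec v)$, their Kuratowski components and the atoms $i\,x$. Since transitive closures of HFS are again hereditarily finite, $Y$ is finite, and because $\mathbb H$ is discrete the subtype is finite and discrete; the membership relation on $Y$ is the restriction of HFS-membership and is therefore decidable.

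It remains to check the three properties. Property~1 ($i\,x\in d$) and property~2 (every $y\in d$ equals some $i\,x$) are immediate from the definition of $d$ together with extensionality. The crux is property~3, and the key point is that transitivity of $Y$ is exactly what makes the semantic reading of the encoding formulas faithful: whenever a set $p\in Y$ is asserted to be a pair or a partial tuple, all of its genuine HFS-members already lie in $Y$, so the bounded quantifiers in the definitions of $p\dis\{x,y\}$, $t\dis\vec v$ and $\vec v\din r$ range over all the required witnesses. Hence for the forward direction, $R\,\vec v$ gives $T(\vec v)\in r$, and transitivity supplies the intermediate singletons, pairs and partial tuples inside $Y$ certifying that $T(\vec v)$ represents the tuple $i(\vec v)$, so the tuple belongs to $r$. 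For the converse, any witness $t\in r$ representing the tuple $i(\vec v)$ must, by injectivity of the pair and tuple encodings and of $i$, coincide with $T(\vec v)$; since every element of $r$ is some $T(\vec w)$ with $R\,\vec w$, this forces $\vec w=\vec v$ and hence $R\,\vec v$.

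The main obstacle is not the bookkeeping above but the constructive realisation of $\mathbb H$ itself: I need HFS with \emph{computable} extensional (hence decidable) equality, a decidable membership test, and --- most delicately --- the finiteness of transitive closures, all established without quotients, so that $Y$ is a genuine finite, discrete type carrying a decidable membership relation. The second subtle point is the soundness half of property~3, where faithfulness of the model hinges jointly on transitivity of $Y$ (to guarantee that the encoding's quantifiers see every relevant witness) and on injectivity of the Kuratowski encoding (derived solely from extensionality) to pin down $\vec v$ uniquely from its encoded tuple.
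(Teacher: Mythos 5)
Your proposal is correct and rests on the same foundations as the paper's proof: both realise the model inside the hereditarily finite sets of~\cite{SmolkaStark:2016:Hereditarily}, encode the atoms of $X$ by an injection into a transitive set $d$ of HFS-numerals, take $r$ to be the set of Kuratowski $n$-tuples of the $R$-related vectors, carve out $Y$ as a Boolean-membership subtype of the HFS type (using the Boolean version for proof irrelevance), and derive faithfulness of the encoded tuple formulas from transitivity of the carrier together with injectivity of the Kuratowski encoding, itself a consequence of extensionality. The one genuine divergence is the construction of the carrier: the paper takes $Y \cdef \{z \mid z \in p\}$ for $p \cdef \mathcal{P}^{2n+1}(d)$, exploiting that iterated powersets of a transitive set are again transitive, contain $d$ both as a member and as a subset, contain all $n$-tuples over $d$ after $2n$ iterations, and contain $r$ after one more; you instead take the transitive closure $\mathrm{tc}(\{d,r\})$. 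Both carriers are transitive and contain every witness needed by the bounded quantifiers of the encoding, so either choice supports the same faithfulness argument. Yours is more economical: it contains only the tuples actually in $r$ together with their components, needs no powerset operation on HFS, and avoids the counting of powerset iterations (a typical source of off-by-one slips); what it requires instead is that transitive closures of HFS are finite and transitive. The paper's carrier, in exchange, is uniform in $R$ (it depends only on $d$ and $n$) and contains all potential tuples over $d$, related or not. One small correction to your closing remarks: the constructive realisation of the HFS type in the paper is itself obtained by a quotient, implemented effectively via normal forms of ordered trees, so "without quotients" should be read as "with an effective, representative-based quotient"; apart from this, the two obstacles you flag (decidable extensional equality and membership for HFS, and proof irrelevance when forming the subtype $Y$) are exactly the points the mechanisation has to address.
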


\begin{proof}
	We give a brief outline of this proof, referring to the \coqlink[reln_hfs]{Coq code} for details.
	The type $Y$ is built from the type of hereditarily finite sets based on~\cite{SmolkaStark:2016:Hereditarily},
	and when we use the word ``set'' below, it means hereditarily finite set. There, sets are finitely branching ordered trees
        (encoded as binary trees in the standard way), considered up to permutation and contraction equivalence.
        \setCoqFilename{btree}%
        By \coqlink[bt_lt_eq_lt_dec]{totally ordering those trees}, a \coqlink[bt_norm]{normal form} can be computed to give an effective representative for every equivalence class,
        \setCoqFilename{hfs}%
        implementing the quotient on this infinite type of trees in Coq, which gives us \coqlink[hfs]{hereditarily finite sets}.

        Basing on (hereditarily finite) sets,
	the idea behind the construction of $Y$ is first to construct $d$ as a \coqlink[hfs_transitive]{\emph{transitive set}}\footnote{Recall that the members of
        a transitive set are also subsets of it.} of which the elements are in bijection $i/s$
        \setCoqFilename{reln_hfs}%
	with the type $X$, hence $d$ is the \coqlink[X_surj_hfs]{cardinal} of $X$ in the set-theoretic meaning.
	Then the iterated powersets $\powerset(d), \powerset^2(d),\ldots,\powerset^k(d)$ are all transitive sets
        as well and contain $d$
	both as a member and as a subset. Considering $\powerset^{2n}(d)$
	which contains all the $n$-tuples built from the members of $d$, we
	define $r$ as the set of $n$-tuples collecting the encodings $i(\vec v\,)$ of vectors
	$\vec v:X^n$ such that $R\,\vec v$. We show $r\in p$ for $p$ defined as
	$p\cdef \powerset^{2n+1}(d)$.
        Using the Boolean counterpart of $(\cdot)\in p$ for unicity of proofs,
        we then define \coqlink[Y]{$Y\cdef \{z\mid z\in p\}$},
	restrict membership $\in$ to $Y$ and this gives the finite type equipped
	with all the required properties. Notice that the decidability requirement for ${\in}$
        holds constructively because we work with hereditarily finite sets, and would not hold with arbitrary sets.
\end{proof}

Notice that while the axiom of extensionality for membership~\eqref{eq:ext} is absent from
the reduction function $\myred$, it is however satisfied in the model of hereditary finite sets
we build to prove Theorem~\ref{thm:Sign_Sig2}, and instrumental in the set-theoretic encoding of
the $n$-ary relation implemented there.

\subsection{Summary: From Discrete Signatures to the Binary Signature}

Combining all of the previous results,
we give a reduction from any discrete signature
to the binary singleton signature.

\setCoqFilename{red_undec}
\begin{theorem}[][DISCRETE_TO_BINARY]%
\label{thm:discrete_to_binary}
$\FSAT(\Sigma)\red \FSAT(\Void; \{P^2\})$ holds for any discrete signature $\Sigma$.
\end{theorem}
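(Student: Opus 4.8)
The plan is to chain together the reductions established in the preceding subsections. Since $\Sigma$ is discrete, all the side conditions required by those component results can be met in turn, and the only real care needed is to apply them in an order that keeps each intermediate signature within the admissible class. Transitivity of $\red$ (compose the underlying reduction functions) then delivers the composite reduction into the fixed binary signature $(\Void;\{P^2\})$.

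Concretely, starting from $\FSAT(\Sigma)$ with $\Sigma=(\Funcs;\Preds)$ discrete, I would proceed as follows. First apply \Cref{thm:quotient_FSAT} to pass to $\FSAT'(\Sigma)$, so that the model is henceforth discrete. Since $\Funcs$ is discrete, \Cref{lem:remove_functions} then eliminates all function symbols, yielding $\FSATEQ(\Void;\{\equiv^2\}+\Funcs^{+1}+\Preds;{\equiv^2})$, and \Cref{thm:uninterpret} removes the interpreted equality, landing in $\FSAT(\Void;\{\equiv^2\}+\Funcs^{+1}+\Preds)$. This signature is purely relational and still discrete, but its relation symbols may be infinitely many and of unbounded arity. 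To tame this, I would invoke \Cref{lem:signature_finite}: for the given formula it produces an equi-satisfiable formula over a finite, discrete relational signature $(\Void;\Fin m)$ (there are no function symbols, so the function component is $\Fin 0=\Void$). A finite signature has a maximal arity $N$, so \Cref{lem:bounded_arity} applies and merges all relations into a single $(1+N)$-ary symbol, reducing to $\FSAT(\Void;\{Q^{1+N}\})$.

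It remains to compress this single relation to binary membership. Since \Cref{thm:compress_relations} is stated for $\FSAT'$, I would first re-apply \Cref{thm:quotient_FSAT} to move from $\FSAT(\Void;\{Q^{1+N}\})$ to $\FSAT'(\Void;\{Q^{1+N}\})$, and then use \Cref{thm:compress_relations} to obtain $\FSAT(\Void;\{\din^2\})$, which is the target binary signature $(\Void;\{P^2\})$ up to renaming of the relation symbol. Composing the whole chain gives $\FSAT(\Sigma)\red\FSAT(\Void;\{P^2\})$.

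The composition itself is routine, so the genuine difficulty is localised in the component lemmas already proved — most pointedly in \Cref{thm:compress_relations}, whose set-theoretic encoding via hereditarily finite sets is the technically hardest ingredient. At the level of this composition, the only subtleties to watch are bookkeeping ones: functions must be erased (and equality un-interpreted) before the relations are merged, because merging presupposes a relational signature; \Cref{lem:signature_finite} must be inserted to secure finiteness, and hence an arity bound, before \Cref{lem:bounded_arity} can fire; and $\FSAT'$ must be re-established immediately before the final compression, since \Cref{thm:compress_relations} requires a discrete model. The per-formula dependence of the intermediate signatures $(\Void;\Fin m)$ and $\{Q^{1+N}\}$ is harmless because the target $(\Void;\{P^2\})$ is fixed, so the overall construction is still a single reduction function.
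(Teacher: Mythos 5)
Your proposal is correct and draws on exactly the paper's toolkit---\Cref{thm:quotient_FSAT}, \Cref{lem:remove_functions}, \Cref{thm:uninterpret}, \Cref{lem:signature_finite}, \Cref{lem:bounded_arity}, and \Cref{thm:compress_relations}---but composes these results in a different order. The paper applies \Cref{lem:signature_finite} \emph{first}: for each formula $\phi$ it computes a finite discrete signature $\Sigma_{n,m}=(\Fin n;\Fin m)$ and an equi-satisfiable $\psi_\phi$, and only then runs a fixed chain (quotienting, function removal, un-interpretation of equality, relation merging, re-quotienting, compression) for that finite signature, so the resulting reduction is $\lambda\phi.\,f_{n,m}\,\psi_\phi$. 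You instead defer finitization until after the function symbols have been erased. Both orders go through, and you correctly identify the essential bookkeeping: the arity bound must be secured before \Cref{lem:bounded_arity} can fire, $\FSAT'$ must be re-established before \Cref{thm:compress_relations}, and the per-formula dependence of the intermediate signatures is harmless because the final target is fixed (the paper handles this identically). The one substantive consequence of your ordering is that you must apply \Cref{lem:remove_functions} to a possibly infinite, merely discrete type of function symbols; as stated, that lemma assumes $\Funcs$ finite, and your use is covered only by the footnote remark that the Coq development also proves it for discrete $\Funcs$. The paper's ordering buys exactly this economy: since finitization happens first, the stated finite-$\Funcs$ version suffices.
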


\begin{proof}
Let us first consider the case of $\Sigma_{n,m}=(\Fin n;\Fin m)$, a signature
over the finite and discrete types $\Fin n$ and $\Fin m$.
Then we have a reduction $\FSAT(\Fin n;\Fin m)\red\FSAT(\Void; \{P^2\})$
by combining Theorems~\ref{thm:quotient_FSAT},~\ref{thm:uninterpret}, and~\ref{thm:compress_relations}
and Lemmas~\ref{lem:remove_functions} and~\ref{lem:bounded_arity}.

Let us denote by $f_{n,m}$ the reduction $\FSAT(\Fin n;\Fin m)\red\FSAT(\Void; \{P^2\})$.
Let us now consider a fixed discrete signature $\Sigma$. For a formula $\phi$ over $\Sigma$,
using Lemma~\ref{lem:signature_finite}, we compute a signature $\Sigma_{n,m}$
and formula $\psi_\phi$ over $\Sigma_{n,m}$ s.t.\ $\FSAT(\Sigma)\,\phi\toot\FSAT(\Fin n;\Fin m)\,\psi$.
The map $\lambda\phi.f_{n,m}\,\psi_\phi$ is the required reduction.
\end{proof}

The binary signature can be further reduced to a signature with a non-monadic function.

\begin{lemma}[][FSAT_REL2_to_FUNnREL1]%
	\label{lem:sig2_sig21}
	$\FSAT(\Void; \{P^2\})\red \FSAT(\{f^n\};\{Q^1\})$ when $n\geq 2$.
\end{lemma}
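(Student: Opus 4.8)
The plan is to realise the reduction by a single syntactic translation $\phi\mapsto\psi$ that relativises quantification to a definable sub-domain and re-encodes the binary atom through $f$ and $Q$, and then to exhibit the two model constructions witnessing $\FSAT(\Void;\{P^2\})\,\phi\toot\FSAT(\{f^n\};\{Q^1\})\,\psi$. Since the source signature has no function symbols, every source term is a variable and the only atoms have the form $P\,x_i\,x_j$. I would translate such an atom to $Q\,(f\,[x_i;x_j;x_i;\dots;x_i])$, padding the $n-2$ extra arguments (recall $n\ge 2$) with $x_i$; I would leave connectives $\binop$ untouched, relativise each quantifier by the guard $\delta(z)\cdef\dneg Q\,z$ (turning $\dforall z.\,\chi$ into $\dforall z.\,\delta(z)\dto\chi$ and $\dexists z.\,\chi$ into $\dexists z.\,\delta(z)\dand\chi$), and finally conjoin the non-emptiness clause $\dexists z.\,\delta(z)$ together with guards $\delta(x)$ for the free variables of $\phi$.

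The key idea, which sidesteps the naive and impossible attempt to injectively \emph{pair} elements of $D$ inside $D$ itself, is to let $f$ and $Q$ transport only the \emph{truth value} of $P$, recorded at one distinguished marker element. For soundness I would take a finite, decidable model $(D,\MM,\rho)$ of $\phi$ (note that $D$ is inhabited, since the assignment $\rho:\Nat\to D$ exists) and build the target domain $D'\cdef D+\Unit$, writing $\iota:D\to D'$ for the left embedding and $b$ for the image of $\unit$. I interpret $Q$ to hold \emph{only} at $b$, and define $f\,[w_1;\dots;w_n]$ to be $b$ whenever $w_1=\iota\,x$, $w_2=\iota\,y$ and $P^\MM\,x\,y$ holds, and a fixed element $\iota\,x_0$ otherwise; this is a total, decidable interpretation because $D$ is discrete and $P^\MM$ decidable. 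Then $\delta$ carves out exactly $\iota(D)$ (as $b$ is the unique $Q$-point), the relativised quantifiers range over this copy of $D$ under the assignment $\iota\circ\rho$, and the translated atom satisfies $Q\,(f\,[\iota\,x;\iota\,y;\dots])\toot P^\MM\,x\,y$; a routine induction on $\phi$ then yields satisfaction of $\psi$.

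For completeness I would start from an arbitrary finite, decidable model $\MM'$ over $D'$ with $\MM'\vDash_{\rho'}\psi$, and form the source model on the decidable, finite sub-domain $D''\cdef\{z:D'\mid \neg\,Q^{\MM'}\,z\}$, which is inhabited thanks to the conjunct $\dexists z.\,\delta(z)$, interpreting $P$ by the decidable relation $P''\,x\,y\cdef Q^{\MM'}(f^{\MM'}\,[x;y;x;\dots;x])$. Because the source signature is purely relational, restricting the domain raises no function-closure obligation on $f$, so the usual relativisation argument (a straightforward induction on $\phi$) transports $\MM'\vDash_{\rho'}\psi$ directly to satisfaction of $\phi$ in this restricted model. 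The main obstacle is neither construction individually but getting the relativisation bookkeeping exactly right in the de Bruijn presentation---threading the guard $\delta$ through binders, guarding the free variables, and matching the padded argument lists---while checking at each step that finiteness and decidability are preserved constructively; the conceptual crux is simply the observation that one $Q$-marked point, together with the freedom to let $f$ compute with its (discrete) arguments, already suffices to encode $P$ without ever reconstructing pairs.
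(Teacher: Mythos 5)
Your reduction is correct, and it shares the outer skeleton of the paper's proof: translate the atom $P[x;y]$ into $Q(f[x;y;\dots])$, relativise every quantifier to a definable sub-domain, add a non-emptiness clause and guards for the free variables, and verify both directions by the standard back-and-forth model constructions. Where you genuinely diverge is in the model-theoretic realisation. The paper uses the first two arguments of $f$ as a \emph{pairing} function, so the target domain contains (codes of) pairs over $D$, $Q$ marks exactly the pairs lying in $P^\MM$, and the sub-domain is carved out \emph{positively} via a reserved fresh variable $d$, as $\lambda x.\,Q(f[d;x;\dots])$ --- a direct analogue of the tuple-membership encoding of Section~\ref{sec:compressing}. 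You instead observe that no pairing is needed: since $f$ receives both arguments, it can already ``compute'' the answer, so a single marker point suffices. Your target domain is $D+\Unit$ with $Q$ true only at the marker $b$, $f$ sending $(\iota\,x,\iota\,y,\dots)$ to $b$ exactly when $P^\MM\,x\,y$ holds, and the sub-domain defined \emph{negatively} as $\dneg Q\,z$. This buys a smaller model (one extra point instead of on the order of $|D|^2$), eliminates the fresh-variable bookkeeping for $d$, and all constructive side conditions go through: $D+\Unit$ is finite, $Q$ and the case analysis defining $f$ are decidable, and the converse direction only needs the decidable subtype $\{z\mid \neg Q^{\MM'}\,z\}$ of a finite type. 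In fact you need even less than you claim: defining $f$ requires only case analysis on the sum constructors plus decidability of $P^\MM$, so your appeal to discreteness of $D$ is superfluous (though harmless, and available anyway via \Cref{thm:quotient_FSAT}). What the paper's pairing approach buys in exchange is uniformity --- the same positive pattern $Q(f[\cdot\,;\cdot\,;\dots])$ interprets both the relation and the domain predicate, mirroring the membership encoding and generalising directly to higher-arity compressions --- whereas your marker-point trick is specifically tuned to transporting a single truth value.
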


\setCoqFilename{Sig2_SigSSn1}

\begin{proof}
We encode the binary relation $\lambda x\,y.\,P\,[x;y]$ with $\lambda x\,y.\,Q \bigl(f\,[x;y;\dots]\bigr)$,
using the first two parameters of $f$ to encode pairing. But since we need to change the
domain of the model, we also use a fresh variable $d$ to encode the domain as
$\lambda x.\,Q (f\,[d;x;\dots])$ and we \coqlink[Sig2_SigSSn1_encoding]{restrict all quantifications to the domain}
similarly to the encoding $\myredr$ of Section~\ref{sec:compressing}.
\end{proof}

We finish the reduction chains with the weakest possible signature constraints. The
following reductions have straightforward proofs.

\setCoqFilename{red_undec}
\begin{myfact}%
	\label{prop:embed_sig}
        One has reductions for the three statements below (for $n\ge 2$):
        \begin{enumerate}
        \coqitem[FSAT_REL_2ton] $\FSAT(\Void; \{P^2\})\red \FSAT(\Void; \{P^n\})$;
	\coqitem[FSAT_RELn_ANY] $\FSAT(\Void; \{P^n\})\red \FSAT(\Sigma)$ if $\Sigma$ contains an $n$-ary relation (symbol);
        \coqitem[FSAT_FUNnREL1_ANY] $\FSAT(\{f^n\};\{Q^1\})\red \FSAT(\Sigma)$ if $\Sigma$ contains an $n$-ary function and a unary relation.
        \end{enumerate}
\end{myfact}

\section{Decidability Results}%
\label{sec:decidability}

Complementing the previously studied negative results, we now examine the conditions allowing for decidable satisfiability problems.
Since any binary symbol renders finite satisfiability undecidable (see upcoming Theorem~\ref{thm:full_trakhtenbrot}), we only need to consider monadic signatures where all arities
are below one. First we consider the case of discrete mononadic signatures, and then give the most general characterisation
of decidability of finite satisfiability, by showing (in the monadic case) that it is equivalent to the decidability
of the Boolean discernability of function and relation symbols.

\subsection{Discrete Monadic Signatures}

\begin{definition}[Monadic]
A first-order signature is \emph{monadic} if the arities of all symbols are either 0 or 1, functions and relations alike.
A first-order signature is \emph{degenerate/propositional} if the arities of all relation symbols is 0.
\end{definition}

Notice that when relation symbols have arity $0$, then no term can occur in first-order formulas, hence the
degenerate or propositional qualifier.

\setCoqFilename{fo_sat_dec}
\begin{lemma}[FSAT over a fixed domain][FSAT_in_dec]%
\label{lem:FSAT_in}
Given a discrete signature $\Sigma$ and a discrete and
finite type $D$, one can decide whether or not a formula
over $\Sigma$ has a (finite) model over domain $D$.
\end{lemma}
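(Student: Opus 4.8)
The plan is to reduce the decision to a finite search over the finitely many candidate interpretations of the symbols actually occurring in $\phi$, together with the finitely many relevant variable assignments, and then to appeal to the decidability of satisfaction for finite decidable models (Fact~\ref{fact:satisfaction_dec}).

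First I would establish a \emph{coincidence lemma}, by induction on $\phi$: whether $\MM\vDash_\rho\phi$ holds depends only on the interpretations $f^\MM$ for $f\inl\funcs\phi$, the interpretations $P^\MM$ for $P\inl\preds\phi$, and the values $\rho\,x$ for $x\inl\FV(\phi)$. This step is indispensable: the signature $\Sigma$ may contain infinitely many symbols, so enumerating full interpretations is hopeless, and the coincidence lemma lets us restrict attention to the finite collection of relevant data.

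Next I would check that each such piece of data ranges over a finite type. Since $D$ is finite and discrete, Corollary~\ref{coro:fin_type} yields a bijection $D\cong\Fin d$, so each $D^{\arity f}$ is again finite and discrete and hence the function space $D^{\arity f}\to D$ is a finite type (there are finitely many maps between finite discrete types). For relation symbols, the subtlety of Lemma~\ref{appendix:lem:wposet} surfaces: $\Prop$-valued predicates on $D^{\arity P}$ do not form a finite type. But $\FSAT$ ranges only over \emph{decidable} models, so I represent each $P^\MM$ by a Boolean map $D^{\arity P}\to\Bool$, of which there are finitely many and each of which induces a decidable predicate. Finally, $\FV(\phi)$ is a finite list, so only finitely many assignments $\rho$ are relevant (with an arbitrary default elsewhere).

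Assembling these, the relevant data forms a finite type $T$. Fixing a default element $d_0:D$ — available once the degenerate empty-domain case (decidable, as $D$ is finite) is dispatched — every element of $T$ extends to a genuine decidable model $\MM$ over $D$, obtained by interpreting the symbols of $\phi$ as chosen and all other symbols by constant-$d_0$ and empty-relation defaults, together with an assignment $\rho$. By Fact~\ref{fact:satisfaction_dec}, $\MM\vDash_\rho\phi$ is decidable, and $\phi$ has a model over $D$ exactly when this holds for some inhabitant of $T$; as finite quantification preserves decidability, this delivers the decision procedure. The main obstacle is precisely the relation-symbol case: with no computable powerset available, the argument succeeds only because we commit to Boolean-valued interpretations, which simultaneously makes the space of interpretations finite and keeps satisfaction decidable through Fact~\ref{fact:satisfaction_dec}.
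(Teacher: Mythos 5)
Your proposal is correct and takes essentially the same route as the paper's proof: decidability of satisfaction in a fixed finite decidable model (Fact~\ref{fact:satisfaction_dec}) combined with the observation that, up to extensional equivalence, only finitely many candidate models over $D$ matter --- which the paper gets exactly as you do, from finite function spaces, Boolean-valued (decidable) relation interpretations in place of the unavailable powerset, and finitely many relevant assignments. One point worth making explicit: your extension step (chosen interpretations for the symbols in $\funcs\phi/\preds\phi$, defaults elsewhere) is precisely where the discreteness of $\Sigma$ is used, since that case split requires decidable membership of a symbol in these lists --- the very subtlety the paper's footnote flags when it notes that without discreteness of $\Sigma$ one cannot build the list of models over $D=\Bool$.
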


\begin{proof}
By Fact~\ref{fact:satisfaction_dec},
satisfaction in a given finite model
is decidable.
It is also \coqlink[FO_model_equiv_spec]{invariant under extensional equivalence}, so we only need to show that there are
\coqlink[finite_t_model_upto]{finitely many (decidable) models}
over $D$ up to extensional equivalence\rlap.%
\footnote{Without discreteness of $\Sigma$, it is impossible
to build the list of models over $D=\Bool$.}
\end{proof}

\begin{lemma}[][fo_form_fin_discr_dec_SAT_pos]%
\label{lem:FSAT_in_pos}
A formula over a signature $\Sigma$
has a finite and discrete model if and only if it has a (finite) model over
$\Fin n$ for some $n:\Nat$.
\end{lemma}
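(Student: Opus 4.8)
The plan is to prove the two directions separately. The right-to-left direction is essentially immediate, while the left-to-right direction rests on the bijection supplied by \Cref{coro:fin_type}, transported through the Tarski semantics.

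For the ($\Leftarrow$) direction, the type $\Fin n$ is finite (enumerated by the list $[0;\ldots;n-1]$) and discrete (equality of indices is decidable), so any decidable model over $\Fin n$ already is a finite and discrete model. Hence this direction requires no construction. For the ($\Rightarrow$) direction, suppose $(D,\MM,\rho)$ is a finite and discrete $\Sigma$-model with $\MM\vDash_\rho\phi$. Since $D$ is finite and discrete, \Cref{coro:fin_type} yields $n:\Nat$ together with a bijection $b:D\to\Fin n$ with inverse $b^{-1}:\Fin n\to D$ satisfying $b^{-1}\,(b\,x)=x$ and $b\,(b^{-1}\,p)=p$. I transport $\MM$ along $b$ to a model $\MM'$ over $\Fin n$ by setting $f^{\MM'}\,\vec w\cdef b\,(f^\MM\,(b^{-1}\map\vec w))$ for every function symbol $f$ and $P^{\MM'}\,\vec w\cdef P^\MM\,(b^{-1}\map\vec w)$ for every relation symbol $P$, and I take $\rho'\cdef b\circ\rho$ as the assignment. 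The model $\MM'$ is decidable because each $P^{\MM'}$ is $P^\MM$ precomposed with the computable map $b^{-1}$, and $\Fin n$ is itself finite and discrete, so $(\Fin n,\MM',\rho')$ is a finite and discrete model.

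The core of the argument is that $b$ is a model isomorphism and therefore preserves satisfaction: for every formula $\psi$ and every assignment $\sigma:\Nat\to D$,
\[\MM\vDash_\sigma\psi\ \toot\ \MM'\vDash_{b\circ\sigma}\psi.\]
I would establish this by structural induction on $\psi$, supported by the auxiliary claim that term evaluation commutes with $b$, namely $\widehat{b\circ\sigma}\,t = b\,(\hat\sigma\,t)$ for every term $t$, proved by induction on $t$ using the definition of $f^{\MM'}$ and the retraction $b^{-1}\,(b\,x)=x$. With this, the atomic case $P\,\vec t$ unfolds to $P^{\MM'}\,(\widehat{b\circ\sigma}\map\vec t)=P^\MM\,(b^{-1}\map(b\map(\hat\sigma\map\vec t)))=P^\MM\,(\hat\sigma\map\vec t)$, and the connective cases follow directly from the induction hypotheses. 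Applying the claim to $\psi\cdef\phi$ and $\sigma\cdef\rho$ turns $\MM\vDash_\rho\phi$ into $\MM'\vDash_{\rho'}\phi$, as required.

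I expect the quantifier cases of the induction to be the only genuinely delicate step: there one must move quantification over $D$ to quantification over $\Fin n$, which is exactly where surjectivity of $b$ is needed (every $p:\Fin n$ equals $b\,x$ for $x\cdef b^{-1}\,p$, since $p=b\,(b^{-1}\,p)$), together with the compatibility of the de Bruijn extension with $b$, namely $b\circ(a\cdot\sigma)=(b\,a)\cdot(b\circ\sigma)$ pointwise. For $\dexists z.\,\psi$ one transports a witness $a:D$ to $b\,a:\Fin n$ and back via $b^{-1}$; the universal case is dual. Everything else is routine transport of the Tarski semantics across an isomorphism, and the only constructive subtlety, preservation of decidability, is handled by the precomposition remark above.
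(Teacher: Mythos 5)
Your proposal is correct and follows essentially the same route as the paper: its proof likewise obtains the bijection $D\to\Fin n$ from \Cref{coro:fin_type} and transports the model along it, leaving implicit the details (transported interpretations, commutation of term evaluation, the induction with the quantifier cases) that you spell out explicitly. The right-to-left direction is, as you note, immediate since $\Fin n$ is finite and discrete.
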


\begin{proof}
If $\phi$ has a model over
a discrete and finite domain $D$, by Corollary~\ref{coro:fin_type},
one can bijectively map $D$ to $\Fin n$  and transport
the model along this bijection.
\end{proof}

\setCoqFilename{red_dec}
\begin{lemma}[][FSAT_MONADIC_DEC]
\label{lem:monadic_fol}
$\FSAT(\Void;\Preds)$ is decidable if $\Preds$ is discrete with uniform arity $1$.
\end{lemma}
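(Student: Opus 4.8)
The plan is to establish a \emph{small model property} for this monadic fragment and then reduce the decision to a finite search. Write $\preds\phi = [P_1;\ldots;P_k]$ for the (finite) list of relation symbols actually occurring in $\phi$, so that satisfaction of $\phi$ depends only on the interpretations of $P_1,\ldots,P_k$. Since the signature has \emph{no} function symbols, the only atomic information a model $\MM$ attaches to a domain element $x$ is its \emph{$1$-type}, the Boolean vector $\tau(x) \cdef (P_1^\MM\,x,\ldots,P_k^\MM\,x) : \Bool^k$, which is computable because the model is decidable. The key claim I would prove is that $\phi$ has a finite (decidable) model iff it has one over $\Fin n$ for some $n$ with $n \le 2^k$.

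For the nontrivial direction of the claim, I would fix a finite decidable model $\MM$ over a domain $D$ (listability of $D$ suffices; discreteness of $D$ is not needed here). The map $\tau : D \to \Bool^k$ lands in a finite, discrete type, so its image $T$—the set of \emph{realised} $1$-types—is a decidable, listable subtype of $\Bool^k$ with $\clen T \le 2^k$. I would then build a model $N$ directly on $T$, interpreting $P_i$ at a type $t \in T$ by the $i$-th bit of $t$, so that $P_i^N(\tau\,x) \toot P_i^\MM\,x$ by construction and $\tau$ is surjective onto $T$. A routine induction on formulas over $\preds\phi$ shows that $\tau$ preserves satisfaction (the predicate clause holds by construction, quantifier clauses by surjectivity of $\tau$). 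Composing with \Cref{coro:fin_type} transports $N$ onto $\Fin n$ with $n = \clen T \le 2^k$. Intuitively this is the classical fact that same-$1$-type elements are interchangeable; equivalently, one could observe that $\tau(x)=\tau(y)$ entails first-order indistinguishability up to $\preds\phi$ and invoke the decidable quotient of \Cref{thm:fo_indist,thm:quotient_FSAT}, the bound following since the indistinguishability classes refine those of $\tau$.

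With the claim in hand, decidability is immediate. A model over any $\Fin n$ is finite with (monadic) relations trivially decidable, so it witnesses $\FSAT(\Void;\Preds)\,\phi$; hence $\FSAT(\Void;\Preds)\,\phi$ holds iff $\phi$ has a model over $\Fin n$ for some $n \le 2^k$. Each of these finitely many tests—one per $n$ with $0 \le n \le 2^k$, taking the fixed discrete finite domain $D \cdef \Fin n$—is decidable by \Cref{lem:FSAT_in}. A finite disjunction of decidable propositions is decidable, yielding the required decision procedure for $\FSAT(\Void;\Preds)$.

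The main obstacle is the small model property itself, and in particular carrying out the collapse \emph{constructively} with the explicit bound $2^k$: one must compute $\tau$ from the decidability of $\MM$, decide which $1$-types are realised (using listability of $D$ and discreteness of $\Bool^k$) to obtain the finite domain $T$, and verify by induction that the quotient along $\tau$ preserves satisfaction of every formula over $\preds\phi$. These steps are where the constructive groundwork of the earlier sections—decidable congruences and listable finite quotients via \Cref{thm:fin_quotient,coro:fin_type}—is doing the real work; everything else is the mechanical bounded search justified by \Cref{lem:FSAT_in}.
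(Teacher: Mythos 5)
Your proposal is correct and follows essentially the same route as the paper: collapse a finite decidable model onto its realised $1$-types, i.e.\ a decidable subset of $\Bool^k$ given by the map $\tau$, and then decide by a bounded search using \Cref{lem:FSAT_in}. The only cosmetic differences are that the paper first normalises $\Preds$ to $\Fin n$ via \Cref{lem:signature_finite} and enumerates the Boolean subsets $b:\Bool^n\to\Bool$ up to extensional equivalence, whereas you index the search by cardinalities $n\le 2^k$ and transport the $1$-type model through \Cref{coro:fin_type}.
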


\begin{proof}
\setCoqFilename{Sig1}
By Lemma~\ref{lem:signature_finite}, we can assume $\Preds=\Fin n$ for some $n:\Nat$ w.l.o.g.\enspace
We show that if $\phi$ has a finite model then it must have
\coqlink[Monadic_model_bounded]{a model over the domain $\{\vec v:\Bool^n\mid b\,\vec v = \btrue\}$} for
some Boolean subset $b:\Bool^n\to\Bool$.
Up to extensional equivalence, there are only finitely many such subsets
$b$ and we conclude with Lemma~\ref{lem:FSAT_in}.
\end{proof}

The following Lemma~\ref{lemma:remove_funcs} explains how to remove all function symbols
but proceeds entirely inside monadic signatures.
Contrary to Lemma~\ref{lem:remove_functions} that encodes $n$-ary functions
with $(1{+}n)$-ary (functional) relations, here we cannot increase the arity by one because we
intend the target signature to be monadic also.

\begin{lemma}[][FSAT_MONADIC_11_FSAT_MONADIC_1]%
\label{lemma:remove_funcs}
Let $n:\Nat$, and $\Preds$ be a finite type of relation symbols. For \emph{signatures of uniform arity $1$},
we have a reduction $\FSAT(\Fin n;\Preds)\red\FSAT(\Void;(\List\,{\Fin n}\times\Preds)+\Preds)$.
\end{lemma}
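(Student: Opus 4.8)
The plan is to exploit that, in a signature of \emph{uniform arity $1$}, every term is just a variable with a finite list of unary function symbols applied to it: each term has the normal form $\overline{w}\,x$, where $w:\List\,\Fin n$ records the symbols (outermost first) and $x:\Nat$ is a variable, so that every atom has the shape $P\,(\overline{w}\,x)$. For a given $\phi$ I would first read off the bound $m$ on its term depth together with the finite lists $\funcs\phi$ and $\preds\phi$, and take the reduction to be $\phi\mapsto\widehat\phi\dand\chi$. Here $\widehat\phi$ is obtained by the homomorphic rewriting that sends each atom $P\,(\overline{w}\,x)$ to the monadic atom $(w,P)\,x$ (using the left summand $\List\,\Fin n\times\Preds$, and the plain right summand $\Preds$ for the empty word) while leaving all connectives and quantifiers unchanged, and $\chi$ is the monadic sentence asserting that every element has an $f$-successor for each $f\inl\funcs\phi$:
\[
\chi\;\cdef\;\dforall a.\,\bigwedge_{f\,\inl\,\funcs\phi}\;\dexists b.\,\bigwedge_{\clen w\le m-1,\;P\,\inl\,\preds\phi}\bigl((w,P)\,b\;\dtoot\;(w{+}{+}[f],P)\,a\bigr).
\]
Although $\chi$ mentions two variables $a$ and $b$, it only ever relates them through unary relations applied to each separately, so it genuinely stays within the target monadic signature; and it is a \emph{finite} conjunction precisely because $\funcs\phi$, $\preds\phi$ are lists and the words $w$ are bounded by $m-1$.

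For the forward implication I would keep the \emph{same} finite, decidable domain $D$: from a source model $\MM$, interpret $(w,P)^{\MM'}\,a\cdef P^{\MM}\,(\overline{w}^{\MM}\,a)$ and $P^{\MM'}\,a\cdef P^{\MM}\,a$. These relations are decidable because $P^{\MM}$ is and $\overline{w}^{\MM}$ is computable. A direct induction on $\phi$ gives $\MM\vDash_\rho\phi\toot\MM'\vDash_\rho\widehat\phi$, since quantifiers range over the same $D$ on both sides and atoms match by construction; and $\MM'$ validates $\chi$ by choosing the witness $b\cdef f^{\MM}\,a$, using the evident identity $\overline{w}^{\MM}(f^{\MM}\,a)=\overline{w{+}{+}[f]}^{\MM}\,a$.

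The interesting direction is completeness, and the point is to reuse the given finite model $\MM'$ over $D'$ \emph{without enriching the domain} --- which is exactly what the monadic restriction makes delicate, as one cannot store the graph of $f$ as a binary relation. The functions are instead recovered from $\chi$: for each $a:D'$ and $f\inl\funcs\phi$ the predicate ``$b$ witnesses the inner conjunction for $a$'' is decidable, and $D'$ is finite, so searching its enumerating list yields a computable choice of witness and hence a total function $f^{\MM}:D'\to D'$ (symbols outside $\funcs\phi$ are irrelevant, set to the identity). Setting $P^{\MM}\,a\cdef(\cnil,P)^{\MM'}\,a$, the key lemma, proved by induction on $v$, is that $(w,P)^{\MM'}(\overline{v}^{\MM}\,a)\toot(w{+}{+}v,P)^{\MM'}\,a$ whenever $\clen w+\clen v\le m$; the successor property read off from $\chi$ (available for $\clen w\le m-1$) drives the inductive step, and specialising to $w=\cnil$ yields $P^{\MM}(\overline{v}^{\MM}\,a)\toot(v,P)^{\MM'}\,a$ for all terms occurring in $\phi$. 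Thus atoms, and then all of $\phi$, are evaluated faithfully over the finite, decidable model $\MM$.

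The main obstacle I anticipate is calibrating $\chi$ so that it is simultaneously finite, binary-relation-free, and constructively usable. In particular one must check that the depth budget in the successor axioms ($\clen w\le m-1$) meshes exactly with the budget consumed by the composite-term induction (total length $\le m$), and that the existentials of $\chi$ are turned into an honest function by a \emph{decidable finite search} over the enumeration of $D'$ rather than by any appeal to choice. The remaining effort is the de Bruijn bookkeeping for the fresh witness variable $b$ in $\chi$ and for the word-indexed relation symbols; the two satisfaction inductions themselves are routine once the key append identities are in place.
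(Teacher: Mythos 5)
Your proposal follows essentially the same route as the paper's proof: the same normal form $P(\overline w\,x)$ for atoms in a uniform-arity-$1$ signature, the same translation into word-indexed unary predicates $(w,P)$, the same globally computed depth bound $m$ keeping the axiom family finite, the same skolem-style successor axioms (the paper writes them as $\dforall x.\,\dexists x_0\ldots x_{n-1}.\bigwedge\bigl(Q_{f\ccons w,r}(x)\dtoot Q_{w,r}(x_f)\bigr)$ with $\clen w<m$, which your $\chi$ merely regroups with one existential per function symbol), and the same constructive recovery of honest functions in the completeness direction by decidable finite search over the enumeration of the domain. The paper presents the construction in two phases (first replace nested atoms by $Q_{w,r}(x)$ with defining equivalences $Q_{\cnil,r}(x)\dtoot P_r(x)$ and $Q_{f\ccons w,r}(x)\dtoot Q_{w,r}(f(x))$, then skolemize the latter); you merge these into a single step, which is fine.

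One slip needs repair: you translate empty-word atoms $P(x)$ to the right summand $P$, but in the completeness direction you set $P^\MM\,a\cdef(\cnil,P)^{\MM'}\,a$. Nothing in $\widehat\phi\dand\chi$ relates $P^{\MM'}$ to $(\cnil,P)^{\MM'}$ — your $\chi$ never mentions the right summand — so depth-zero atoms of $\phi$ are not evaluated faithfully and that case of the satisfaction induction breaks. Either translate all atoms, including those with the empty word, to the left summand $(\cnil,P)$ (the right summand then simply goes unused), or add to $\chi$ the paper's base equivalences $\dforall x.\,P\,x\dtoot(\cnil,P)\,x$ for $P\inl\preds\phi$. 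With either fix your argument, including the budget calibration of $\clen w+\clen v\le m$ in the key lemma against the bound $\clen w\le m-1$ in the axioms, goes through exactly as in the paper.
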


\begin{proof}
\setCoqFilename{Sig1_1}
We implement a proof somewhat inspired by that of Proposition~6.2.7 (Grädel)
in~\cite[pp.~251]{borger1997classical}
but the invariant suggested in the iterative process described there did not work
out formally
and we had to proceed in a single conversion step instead, switching
from single symbols to lists of function symbols.

\smallskip

Given a formula potentially containing function symbols in the finite type $\Fin n$, the problem is
to remove those functions symbols while preserving \FSAT. In the monadic
signature $(\Fin n;\Preds)$ (with uniform arity $1$), an atomic formula
is necessarily of the form \coqlink[fot_word_var_eq]{$P(f_1(\ldots f_q(x)\ldots))$} where
$P:\Preds$, $[f_1;\ldots;f_q]:\List{\Fin n}$
is a list of function symbols, and $x:\Nat$ is a variable.
The removal of function symbols proceeds in two steps: in a first phase, by adding more predicate symbols, we
reduce to the case where logical atoms are of the form $P(f(x))$ or $P(x)$, i.e.\ at most one nested function
application; and in a second phase, we remove the function symbols by skolemisation.

\smallskip

In the first step, we assume an upper bound $m$ over the number $q$ of successive applications
to be found in an atomic formula, e.g.\ $P(f_1(\ldots f_q(x)\ldots))$. This number $m$
is \emph{computed globally} from the start
formula to be converted, e.g.\ by choosing the \coqlink[max_depth]{maximum depth of nesting} of function symbols occurring in it.
For disambiguation, we denote the relation symbols of the target signature either by $P_r$ with $r:\Preds$
(those that originate in $\Preds$) or newly the introduced ones by $Q_{w,r}$ with  $(w,r):\List\,{\Fin n}\times\Preds$.

To the atom $P_r(f_1(\ldots f_q(x)\ldots))$ we \coqlink[encode]{associate} the new atom
$Q_{[f_q;\ldots; f_1],r}(x)$. To ensure the correctness of this encoding, we add the
equivalences
\[ Q_{\cnil,r}(x)~\dtoot~ P_r(x) \qquad \text{and}\qquad Q_{f\ccons w,r}(x)~\dtoot~Q_{w,r}(f(x))\]
all universally quantified over $x$. When satisfied, these equivalences uniquely characterise the
interpretation of the predicates $Q_{w,r}$ and ensure that the replacing atom $Q_{[f_q;\ldots; f_1],r}(x)$
has the same interpretation as the original $P_r(f_1(\ldots f_q(x)\ldots))$.

Notice however that there are infinitely many words $w$ in $\List\,{\Fin n}$ (as soon as $n>0$),
and as a consequence, infinitely many instances of $Q_{f\ccons w,r}(x)~\dtoot~Q_{w,r}(f(x))$.
As these would not all be embeddable a single first-order formula, we use the bound $m$
to limit the equivalences to the instances where $\clen w<m$, of which there
are indeed only finitely many.

\smallskip

In the second step (skolemisation), we simply convert the equivalences
\[Q_{f\ccons w,r}(x)~\dtoot~Q_{w,r}(f(x))
\qquad\text{into}\qquad
Q_{f\ccons w,r}(x)~\dtoot~Q_{w,r}(x_f)\]
where the $x_f$ are $n$ new skolem variables (as many as in the type $\Fin n$).
To preserve \FSAT, we need to existentially quantify over the skolem variables $x_0,\ldots,x_{n-1}$,
these existential quantifications $\exists x_0\dots\exists x_{n-1}$
occurring above the conjunction of all the equations $Q_{f\ccons w,r}(x)~\dtoot~Q_{w,r}(x_f)$.
On top of this, we keep universal quantification over $x$.
\end{proof}

If functions or relations have arity $0$, one can always
lift them to arity $1$ by filling the hole with an arbitrary term,
like in~Fact~\ref{prop:three_reds}, item~(1).

\begin{myfact}[][FSAT_FULL_MONADIC_FSAT_11]%
\label{prop:rem_cst_props}
The reduction $\FSAT(\Funcs;\Preds)\red\FSAT(\Funcs^1;\Preds^1)$ holds when
all arities in $\Sigma$ are at most 1,
where $\Funcs^1$ and $\Preds^1$ denote arities uniformly updated to $1$.
\end{myfact}

\setCoqFilename{red_dec}
\begin{myfact}%
\label{fact:propositional_inter_reductible}
Let $\Sigma=(\Funcs;\Preds^0)$ be a signature where the arities of all relation symbols is $0$.
Then $\FSAT(\Sigma)$ and $\FSAT(\Void,\Preds^0)$ are inter-reducible, i.e.\ we
have both $\coqlink[FSAT_PROP_FSAT_x0]{\FSAT(\Sigma)\red\FSAT(\Void;\Preds^0)}$ and
$\coqlink[FSAT_x0_FSAT_PROP]{\FSAT(\Void;\Preds^0)\red\FSAT(\Sigma)}$.
\end{myfact}

\begin{proof}
If all relation symbols are constants (arity $0$), then we contemplate
a degenerate/propositional fragment of first-order logic. Indeed, regardless of $\Funcs$,
no term can occur in formulas, hence neither can function symbols.
Since relations are the same with the same arity $0$ in both signatures,
there are obvious $\FSAT$ preserving structural transformations
(back and forth) of formulas of these degenerate first-order signatures.
\end{proof}

\setCoqFilename{red_dec}
\begin{theorem}[][FULL_MONADIC]%
	\label{thm:monadic_fol}
	$\FSAT(\Sigma)$ is decidable if $\Sigma$ is discrete with arities less or equal than $1$, or
	if all relation symbols have arity $0$.
\end{theorem}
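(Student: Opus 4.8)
The plan is to decide $\FSAT(\Sigma)\,\phi$ by distinguishing the two hypotheses: the discrete monadic case is settled by a chain of signature reductions bottoming out in an already decidable problem, while the case where every relation symbol is nullary reduces to deciding the degenerate (propositional) fragment.

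Assume first that $\Sigma$ is discrete with all arities at most $1$. Given $\phi$, I would begin by uniformising arities: Fact~\ref{prop:rem_cst_props} reduces $\FSAT(\Sigma)$ to $\FSAT(\Funcs^1;\Preds^1)$, still over a discrete signature but now of uniform arity $1$. Using discreteness, Lemma~\ref{lem:signature_finite} then replaces, for this particular formula, the symbol types by finite ones, yielding an equi-satisfiable formula over $(\Fin n;\Fin m)$ with $n,m$ depending on $\phi$ and arities still uniformly $1$, the construction being arity preserving. At this point Lemma~\ref{lemma:remove_funcs}, instantiated with $\Funcs=\Fin n$ and the finite $\Preds=\Fin m$, erases the function symbols without leaving the monadic world, giving a reduction to $\FSAT(\Void;(\List\,\Fin n\times\Fin m)+\Fin m)$. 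The resulting relation type is discrete of uniform arity $1$, so Lemma~\ref{lem:monadic_fol} decides it, and Fact~\ref{fact:reduction}~(1) transports the decision back along the whole chain. Because the base procedure for $\FSAT(\Fin n;\Fin m)$ is uniform in $n$ and $m$, composing the computable per-formula reductions yields one decision procedure for $\FSAT(\Sigma)$.

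Assume now instead that all relation symbols have arity $0$. Here Fact~\ref{fact:propositional_inter_reductible} reduces $\FSAT(\Sigma)$ to $\FSAT(\Void;\Preds^0)$, so it suffices to decide the degenerate fragment. Since no relation symbol takes an argument, no term can occur in a formula, and hence the satisfaction of $\phi$ in a decidable model depends only on whether the domain is empty and on the truth values assigned to the finitely many nullary symbols occurring in $\phi$: each $\dexists$ collapses to ``the domain is inhabited and the body holds'' and each $\dforall$ to ``the domain is empty or the body holds''. I would therefore decide satisfiability by ranging over the two possible domain shapes together with the finitely many truth assignments to the occurring atoms, evaluating $\phi$ in each candidate model by Fact~\ref{fact:satisfaction_dec} and returning success if any evaluation holds.

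The genuine mathematical content of the monadic case is entirely packaged in the already-established Lemma~\ref{lemma:remove_funcs} (the two-phase predicate introduction followed by skolemisation that removes functions while staying monadic) and in Lemma~\ref{lem:monadic_fol}; within the present theorem the work is the careful composition of reductions whose target signatures are computed from the input formula. I expect the more delicate point to be the degenerate case: Lemma~\ref{lem:monadic_fol} is stated for relations of arity exactly $1$, so one cannot simply lift the nullary relations to arity $1$ and appeal to it, and the propositional fragment must instead be decided directly by the finite enumeration over domain shapes and atom assignments sketched above.
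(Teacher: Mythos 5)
Your first case is exactly the paper's proof: uniformise arities with Fact~\ref{prop:rem_cst_props}, pass to a finite signature via Lemma~\ref{lem:signature_finite}, erase function symbols with Lemma~\ref{lemma:remove_funcs}, and decide the result with Lemma~\ref{lem:monadic_fol}, transporting decidability back along the per-formula reductions.

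The degenerate case, however, contains a genuine error, and it is precisely the point you flag as delicate. Your claim that ``one cannot simply lift the nullary relations to arity $1$ and appeal to it'' contradicts your own first case: Fact~\ref{prop:rem_cst_props} \emph{is} exactly that lifting --- it applies to any signature whose arities are at most $1$, arity $0$ included, by padding atoms with an arbitrary variable as in Fact~\ref{prop:three_reds}, item~(1) --- and you invoked it yourself to raise arity-$0$ symbols to arity $1$ in the first case. Consequently the paper's treatment of the second case is a one-liner: after Fact~\ref{fact:propositional_inter_reductible} one faces $\FSAT(\Void;\Preds^0)$, which is a (discrete) signature with all arities at most $1$, so the first case applies verbatim; no separate propositional decision procedure is needed.

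Moreover, the replacement procedure you sketch is unsound as stated. In the paper's semantics a $\Sigma$-model is a triple $(D,\MM,\rho)$ with $\rho:\Nat\to D$, so every model has an inhabited domain and there is no ``empty domain shape'' to range over. Reading $\dforall$ as ``the domain is empty or the body holds'' and counting the empty shape as a success would classify, e.g., $\dforall\dbot$ as finitely satisfiable, which it is not. With that branch removed --- and with discreteness of $\Preds$, which your enumeration of truth assignments also needs in order to identify occurrences of the same atom, and which the theorem implicitly assumes (compare the discernability hypothesis in Theorem~\ref{thm:full_monadic_fol}, case~(b)) --- your truth-table enumeration over a singleton domain would indeed work, but it only re-proves what Fact~\ref{prop:rem_cst_props} combined with the first case already gives.
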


\begin{proof}
	If all arities are at most $1$, then by Fact~\ref{prop:rem_cst_props}, we can assume $\Sigma$ of uniform arity~$1$.
	Therefore, for a formula $\phi$ over $\Sigma$ with uniform arity $1$, we
	need to decide $\FSAT$ for $\phi$. By Theorem~\ref{lem:signature_finite}, we can compute a
	signature $\Sigma_{n,m}=(\Fin n;\Fin m)$ and a formula $\psi$ over $\Sigma_{n,m}$
	equi-satisfiable with $\phi$.
	Using the reduction of Lemma~\ref{lemma:remove_funcs}, we compute a formula $\gamma$,
	equi-satisfiable with $\psi$, over a discrete signature of uniform arity $1$, void of functions.
	We decide the satisfiability of $\gamma$ by Lemma~\ref{lem:monadic_fol}.

	On the other hand, if all relation symbols have arity $0$, we use the reduction
        $\FSAT(\Sigma)\red\FSAT(\Void,\Preds^0)$ of Fact~\ref{fact:propositional_inter_reductible}
        and we are back to the previous case.
\end{proof}

\subsection{Discernability and Finite Satisfiability}

\newcommand{\dsfun}{\delta}
\newcommand{\dscrn}{\mathrel{\not\equiv_d}}
\newcommand{\udscrn}{\mathrel{\equiv_d}}

We say that two members $x$ and $y$ of a type $X$ (of e.g.\ symbols) are \emph{Boolean discernable}
if there is a Boolean valued map $\dsfun:X\to\Bool$ giving them two different values, i.e.\ $\dsfun\,x \neq \dsfun\,y$.
In the sequel, we simply say discernable or undiscernable for conciseness. We however recall the reader not
to confuse (Boolean) undiscernability with Leibniz's understanding of undiscernability\footnote{$x$ is
Leibniz-undiscernable from $y$ if for any predicate $P:X\to\Prop$, $P(x)$ implies $P(y)$.}
which agrees with ``equality'' in the case of Coq. Formally this gives the definition:

\setCoqFilename{discernable}
\begin{definition}[Discernability]
Given a type $X$ and two terms $x,y:X$, we say that $x$ and $y$
are \coqlink[discernable]{\emph{discernable}} and write $x\dscrn y$ if $\exists \dsfun:X\to\Bool.\,\dsfun\,x \neq \dsfun\,y$.
On the other hand, we say that $x$ and $y$ are \coqlink[undiscernable]{\emph{undiscernable}}
and we write $x\udscrn y$ if $\forall \dsfun:X\to\Bool.\,\dsfun\,x=\dsfun\,y$.
\end{definition}

\coqlink[discernable_equiv1]{Equivalently}, $x$ and $y$ are discernable if they can be mapped to two different values of a discrete type. Hence
$\dsfun$ discerns $x$ from $y$ and the decision algorithm for identity between $\dsfun\,x$ and $\dsfun\,y$ eventually
produces a proof of discernability.  Because $\Bool$ is discrete,
undiscernability is the negation of discernability,
i.e.\ \coqlink[undiscernable_spec]{$\forall xy:X.\, x\udscrn y\toot \neg (x\dscrn y)$}.

Undiscernability is an equivalence relation generally weaker than equality but the two notions match on discrete types.
As we explain below, it is the proper notion to capture when two symbols of a first-order signature
cannot be interpreted differently by any finite first-order model. The reason is that
those models are inherently discrete:
by Theorem~\ref{thm:quotient_FSAT}, the interpretation
of function symbols can be restricted to discrete models;
and relation symbols are always interpreted by decidable relations.

\setCoqFilename{Sig_discernable}
\begin{myfact}[][FSAT_dec_implies_discernable_rels_dec]%
\label{fact:FSAT_discern_rels}
Let $\Sigma=(\Funcs;\Preds)$ be a signature such that
$\FSAT(\Sigma)$ is decidable.
Then for any two relation symbols $P$ and $Q$ in $\Preds$, it is decidable
whether $P$ and $Q$ are discernable.
\end{myfact}

\begin{proof}
The formula $P({\cdots})\,\dand\, \dneg Q({\cdots})$ is finitely satisfiable if
and only if $P\dscrn Q$. The dots $({\cdots})$ are arbitrary
terms filling the mandatory arguments of $P$ and $Q$.
\end{proof}

\begin{myfact}[][FSAT_dec_implies_discernable_syms_dec]%
\label{fact:FSAT_discern_syms}
Let $\Sigma=(\Funcs;\Preds)$ be a signature with $P:\Preds$, a relation
symbol of arity 1. If
$\FSAT(\Sigma)$ is decidable,
then for any two function symbols $f$ and $g$ in $\Funcs$, it is decidable
whether $f$ and $g$ are discernable.
\end{myfact}

\begin{proof}
The formula $P\bigl(f({\cdots})\bigr)\,\dand\, \dneg P\bigl(g({\cdots})\bigr)$ is finitely satisfiable if
and only if $f\dscrn g$.
\end{proof}

We restrict the discussion that follows to monadic signatures only
because otherwise $\FSAT$ is undecidable
(see upcoming Theorem~\ref{thm:full_trakhtenbrot}). Hence function and
relation symbols have arity either $0$ or $1$.
If a monadic signature contains a unary relation symbol, then decidability
of $\FSAT$ entails decidability of discernability of both function symbols and relation
symbols. On the other hand, if a monadic signature contains no unary relation (degenerate case), then
all relations are constant (zero-ary) and in this case we can only obtain an algorithm
for discerning relation symbols.

\begin{lemma}[][Sig_discernable_dec_to_discrete]
Let $\Sigma^1=(\Funcs^1;\Preds^1)$ be a signature with uniform arity 1 such
that both $\Funcs$ and $\Preds$ have decidable discernability. For any
formula $\phi$ over $\Sigma^1$ one can compute a discrete signature
$\Sigma_d^1$ of uniform arity 1 and a formula $\psi$ over $\Sigma_d^1$
such that $\phi$ and $\psi$ are equi-satisfiable, i.e.
$\FSAT(\Sigma^1)\,\phi\toot\FSAT(\Sigma_d^1)\,\psi$.
\end{lemma}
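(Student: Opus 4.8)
The plan is to make the signature discrete by quotienting the symbols occurring in $\phi$ under undiscernability and relabelling $\phi$ accordingly, the correctness hinging on the observation that a finite decidable model can never tell two undiscernable symbols apart. First I would restrict to the \emph{finite} subtypes of symbols actually appearing in $\phi$, enumerated by the lists $\funcs\phi$ and $\preds\phi$. By hypothesis discernability is decidable on $\Funcs$ and $\Preds$, and since undiscernability ${\udscrn}$ is the negation of discernability, ${\udscrn}$ is a decidable equivalence relation; moreover $\funcs\phi$ (resp.\ $\preds\phi$) is a list of representatives covering every occurring function (resp.\ relation) symbol up to ${\udscrn}$.

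Applying Theorem~\ref{thm:fin_quotient} twice then yields quotient data: maps $c_\funcssymb,r_\funcssymb$ onto $\Fin n$ and $c_\predssymb,r_\predssymb$ onto $\Fin m$ with $c\,(r\,p)=p$ and $x\udscrn y\toot c\,x=c\,y$. I set $\Sigma_d^1\cdef(\Fin n;\Fin m)$ with uniform arity $1$, which is discrete and arity-compatible with $\Sigma^1$, and I obtain $\psi$ from $\phi$ by structurally relabelling each symbol $s$ occurring in $\phi$ to its class $c\,s$. Note that the two quotient equations give $c\,(r\,(c\,s))=c\,s$ and hence $r\,(c\,s)\udscrn s$, which is the key bookkeeping fact matching a relabelled atom with its representative.

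The conceptual core, and the step I expect to be the main obstacle, is the lemma that in any finite decidable model over a \emph{discrete} domain $D$, undiscernable symbols receive identical interpretations. For a function symbol this works because $D$ finite and discrete makes $D\to D$ finite and discrete, so for each fixed $h:D\to D$ the Boolean map on function symbols testing whether $s^\MM$ equals $h$ is a genuine element of $\Funcs\to\Bool$; instantiating $f\udscrn g$ at $h\cdef f^\MM$ forces $g^\MM=f^\MM$. The relation case is identical after passing to the Boolean counterpart $D\to\Bool$ of a decidable predicate, yielding extensional agreement of $P^\MM$ and $Q^\MM$. This is precisely where discreteness of the model is indispensable, and where the decidability of discernability pays off.

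It then remains to verify equi-satisfiability. For the forward direction I first invoke Theorem~\ref{thm:quotient_FSAT} to replace a given finite decidable model of $\phi$ by one over a discrete finite domain; the core lemma makes its interpretation factor through ${\udscrn}$, so setting $i^{\MM'}\cdef(r_\funcssymb\,i)^\MM$ for $i:\Fin n$ and correspondingly for relations produces a model $\MM'$ of $\psi$, using $r\,(c\,s)\udscrn s$ to align each relabelled atom with the original. For the backward direction, given a finite decidable model $\MM'$ of $\psi$ over $(\Fin n;\Fin m)$, I pull it back by $s^\MM\cdef(c\,s)^{\MM'}$, after which $\MM\vDash_\rho\phi$ and $\MM'\vDash_\rho\psi$ coincide atom by atom over the same domain, which stays finite and decidable. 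Both constructions are visibly computable, so $\lambda\phi.\,\psi$ together with the computed $\Sigma_d^1$ is the required equi-satisfiability reduction.
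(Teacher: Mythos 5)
Your overall route coincides with the paper's: relabel the symbols occurring in $\phi$ along undiscernability classes into a discrete type, and justify equi-satisfiability by the observation that a finite decidable model over a discrete domain (supplied by Theorem~\ref{thm:quotient_FSAT}) must interpret undiscernable symbols the same way. Your proof of that core observation is essentially right, with one caveat in wording: $D\to D$ is \emph{not} finite or discrete in this setting (that would need function extensionality); what your argument actually uses, and all that is needed, is that \emph{extensional} agreement with a fixed $h:D\to D$ is a decidable, hence Boolean, test on symbols, so instantiating $f\udscrn g$ at this test yields pointwise equality of $f^\MM$ and $g^\MM$.

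The genuine gap is in the quotient step. Theorem~\ref{thm:fin_quotient} cannot be ``applied twice'' as you state it: its hypothesis requires a list of representatives meeting \emph{every} equivalence class of the quotiented type, whereas $\funcs\phi$ only covers the symbols occurring in $\phi$; an arbitrary $f:\Funcs$ need not be undiscernable from anything in $\funcs\phi$, and a map $c$ satisfying $x\udscrn y\toot c\,x=c\,y$ for \emph{all} $x,y:\Funcs$ is not constructible in general --- this is exactly the obstruction the paper flags (there may be no enumeration of the $\udscrn$-classes of the full type). You can instead apply Theorem~\ref{thm:fin_quotient} to the subtype $\{f\mid f\inl\funcs\phi\}$ with the lifted relation, but then $c$ is defined only there, and your backward direction breaks: the pulled-back model must interpret \emph{every} symbol of $\Funcs$, you write $s^\MM\cdef(c\,s)^{\MM'}$ for arbitrary $s$, and you cannot even case-split on whether $s\inl\funcs\phi$, because list membership is undecidable when $\Funcs$ is merely assumed to have decidable discernability (not decidable equality). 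The repair is to totalize the class map: search $\funcs\phi$ for the first $g$ with $s\udscrn g$ (this \emph{is} decidable) and use its class, falling back to a default interpretation when the search fails; this is precisely what the paper packages as a \emph{variant} of Theorem~\ref{thm:fin_quotient}, producing a total map $\dsfun:\Funcs\to D$ into a discrete type such that $r\udscrn s\toot\dsfun\,r=\dsfun\,s$ holds for $r,s\inl\funcs\phi$. The same totalization also smooths your relabelling step, which as written would have to thread membership proofs through the recursion on $\phi$. With this one fix, your argument goes through and matches the paper's proof.
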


\begin{proof}
Let $S$ be a type of symbols that has decidable discernability (either $\Funcs$ or $\Preds$). Consequently,
\coqlink[discernable_dec_undiscernable_dec]{undiscernability is also decidable}. Since undiscernable symbols cannot be distinguished by first-order
models, we would like to quotient along undiscernability but it is not always possible to
do so constructively\rlap.\footnote{e.g.\ if there is no available enumeration of $\udscrn$-equivalence classes.}\enspace However, using a variant of Theorem~\ref{thm:fin_quotient},
given a list $l_S:\List S$ of symbols, we \coqlink[discriminable_list]{compute} a discrete type\footnote{Notice that $D$ is also finite by construction but this property is not needed below.}
$D$ and a map $\dsfun:S\to D$
such that
\[\forall rs.\, r\inl l_S\to s\inl l_S\to (r\udscrn s \toot \dsfun\,r=\dsfun\,s).\]
In a way,
$\dsfun$ is able to discern symbols simultaneously, but only those in $l_S$. We can now use $\dsfun$ to project
the symbols of $S$ on a discrete type while preserving $\FSAT$ for formulas that contain symbols
in $l_S$ only.

Given a start formula $\phi$ in the signature $\Sigma^1$ (of uniform arity $1$) with decidable discernability of
symbols, we compute the function and relation symbols occurring in $\phi$ and use those two lists
to compute two discrete types $\funcs d$, $\preds d$ and two maps $\dsfun_{\funcs{}}:\Funcs\to \funcs d$, $\dsfun_{\preds{}}:\Preds\to \preds d$
identifying the undiscernable symbols occurring in $\phi$. Using $\dsfun_{\funcs{}}$ and $\dsfun_{\preds{}}$, we map the formula
$\phi$ on a formula $\psi$ in the discrete signature $\Sigma_d^1=(\preds d^1;\funcs d^1)$ (of uniform arity $1$ also)
while preserving $\FSAT$, i.e.\ $\FSAT(\Sigma^1)\,\phi\toot\FSAT(\Sigma_d^1)\,\psi$.
Notice that the target signature depends on (the symbols occurring in) $\phi$.
\end{proof}

\setCoqFilename{red_dec}
\begin{theorem}[][FSAT_FULL_MONADIC_discernable]%
\label{thm:FSAT_full_monadic_discern}
Let $\Sigma=(\Funcs;\Preds)$ be a monadic signature (all arities are less than 1)
such that both $\Funcs$ and $\Preds$ have decidable discernability. Then
$\FSAT(\Sigma)$ is decidable.
\end{theorem}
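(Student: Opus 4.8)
The plan is to reduce the claim to the discrete monadic case already settled by Theorem~\ref{thm:monadic_fol}, using decidable discernability to collapse the symbols onto a discrete signature of the same shape. First I would apply Fact~\ref{prop:rem_cst_props} to obtain the reduction $\FSAT(\Funcs;\Preds)\red\FSAT(\Funcs^1;\Preds^1)$, which lifts every $0$-ary function or relation to arity $1$ so that all arities become uniformly $1$; by Fact~\ref{fact:reduction}(1) it then suffices to decide $\FSAT(\Sigma^1)$ for $\Sigma^1\cdef(\Funcs^1;\Preds^1)$. Crucially, discernability is a purely type-level property (of the form $\exists\dsfun.\,\dsfun\,x\neq\dsfun\,y$) that never mentions arities, so the hypothesis of decidable discernability of $\Funcs$ and of $\Preds$ carries over verbatim to $\Funcs^1$ and $\Preds^1$, exactly matching the premise of the immediately preceding lemma.

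Next I would decide $\FSAT(\Sigma^1)$ pointwise. Given a formula $\phi$ over $\Sigma^1$, I invoke the preceding lemma, which from the decidable discernability of $\Funcs$ and $\Preds$ computes a discrete signature $\Sigma_d^1$ of uniform arity $1$ together with an equi-satisfiable formula $\psi$ over $\Sigma_d^1$, i.e.\ $\FSAT(\Sigma^1)\,\phi\toot\FSAT(\Sigma_d^1)\,\psi$. Since $\Sigma_d^1$ is now discrete with all arities at most $1$, Theorem~\ref{thm:monadic_fol} supplies a decision procedure for $\FSAT(\Sigma_d^1)$ and in particular decides $\FSAT(\Sigma_d^1)\,\psi$. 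Transporting this answer back through the equivalence decides $\FSAT(\Sigma^1)\,\phi$, and the whole construction assembles into a decision procedure for $\FSAT(\Sigma^1)$, hence, via the reduction of the first step, for $\FSAT(\Sigma)$.

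The one point requiring care is that the discrete signature $\Sigma_d^1$ produced by the preceding lemma depends on the symbols occurring in $\phi$, so this is not a reduction between two \emph{fixed} signatures; decidability is nonetheless obtained, since for each individual $\phi$ we build a concrete $\Sigma_d^1$ and decide $\FSAT$ there. The genuinely difficult work is already discharged inside that preceding lemma: turning decidable discernability of a listed set of symbols into a projection onto a discrete type, via a variant of the finite decidable quotient of Theorem~\ref{thm:fin_quotient}, so that undiscernable symbols — which no finite, hence discrete, model can tell apart — are identified while satisfiability is preserved. Relative to that, the present theorem is merely the assembly of Fact~\ref{prop:rem_cst_props}, the preceding lemma, and Theorem~\ref{thm:monadic_fol}, and so I expect the proof itself to be short.
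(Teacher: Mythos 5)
Your proposal is correct and follows essentially the same route as the paper's proof: Fact~\ref{prop:rem_cst_props} to normalise to uniform arity $1$ (with the observation that discernability, being arity-independent, is preserved), then Lemma~\ref{coq:Sig_discernable_dec_to_discrete} to pass to an equi-satisfiable formula over a discrete signature, and finally Theorem~\ref{thm:monadic_fol} to decide it. Your remark that the target signature depends on $\phi$, so this is a pointwise decision procedure rather than a reduction between fixed signatures, matches the caveat the paper itself makes.
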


\begin{proof}
By Fact~\ref{prop:rem_cst_props}, to show the $\FSAT(\Sigma)$ is decidable we may assume
that $\Sigma$ is of uniform arity $1$, i.e.\ of the form $\Sigma^1=(\Funcs^1;\Preds^1)$.
Given $\phi$ over $\Sigma^1$, we compute an equi-satisfiable $\psi$ over the discrete
signature $\Sigma_d^1$
with Lemma~\ref{coq:Sig_discernable_dec_to_discrete},
and then decide $\FSAT(\Sigma_d^1)\,\psi$ using Theorem~\ref{thm:monadic_fol}.
\end{proof}

\begin{theorem}[][FSAT_PROP_ONLY_discernable]%
\label{thm:FSAT_prop_discern}
Let $\Sigma=(\Funcs;\Preds)$ be a propositional monadic signature (all arities of relation
symbols are 0) such that $\Preds$ has decidable discernability. Then
$\FSAT(\Sigma)$ is decidable.
\end{theorem}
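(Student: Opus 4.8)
The plan is to exploit the defining feature of a propositional signature: since all relation symbols have arity $0$, no term—and hence no function symbol—can occur in any formula, so the interpretation of function symbols is irrelevant to satisfiability and we may simply discard them. Discarding the function symbols replaces $\Funcs$ by the empty type $\Void$, whose discernability is \emph{trivially} decidable. This is precisely what lets us fall back on the general discernable monadic result, even though the present hypothesis only grants decidable discernability for $\Preds$ and says nothing about $\Funcs$; indeed, by Fact~\ref{fact:FSAT_discern_syms}, there is no unary relation available to discern function symbols in this degenerate case, so demanding function discernability here would be unreasonable.

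Concretely, I would proceed in three steps. First, since all relation symbols of $\Sigma=(\Funcs;\Preds)$ have arity $0$, Fact~\ref{fact:propositional_inter_reductible} yields the reduction $\FSAT(\Sigma)\red\FSAT(\Void;\Preds^0)$, eliminating the function symbols altogether. Second, I observe that $(\Void;\Preds^0)$ is a monadic signature (all arities are $0$, hence at most $1$) in which \emph{both} symbol types carry decidable discernability: for $\Preds$ this is the hypothesis of the theorem, and for the empty function type $\Void$ it holds vacuously (there are no pairs of function symbols to discern). Hence Theorem~\ref{thm:FSAT_full_monadic_discern} applies and gives that $\FSAT(\Void;\Preds^0)$ is decidable. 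Finally, decidability transports backwards along the reduction by Fact~\ref{fact:reduction}~(1), so $\FSAT(\Sigma)$ is decidable as well.

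Essentially all of the technical work has already been carried out inside Theorem~\ref{thm:FSAT_full_monadic_discern}, which internally lifts to uniform arity $1$ via Fact~\ref{prop:rem_cst_props}, quotients the relation symbols along undiscernability using Lemma~\ref{coq:Sig_discernable_dec_to_discrete}, and finally decides the resulting discrete instance with Theorem~\ref{thm:monadic_fol}. The only genuinely new ingredient—and the sole reason this warrants a theorem of its own—is the initial elimination of function symbols: it simultaneously removes the need for a discernability hypothesis on $\Funcs$ and, by collapsing $\Funcs$ to $\Void$, makes that very hypothesis hold automatically. I therefore expect no substantive obstacle; the only points requiring care are verifying that Fact~\ref{fact:propositional_inter_reductible} is indeed applicable (all relation arities being $0$) and confirming that decidable discernability of the empty type $\Void$ is vacuously satisfied so that Theorem~\ref{thm:FSAT_full_monadic_discern} can legitimately be invoked.
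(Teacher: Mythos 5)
Your proposal is correct and follows essentially the same route as the paper: reduce $\FSAT(\Sigma)$ to $\FSAT(\Void;\Preds^0)$ via Fact~\ref{fact:propositional_inter_reductible}, then conclude by Theorem~\ref{thm:FSAT_full_monadic_discern}, transporting decidability backwards along the reduction. Your additional remarks (the vacuous discernability of $\Void$ and the explicit appeal to Fact~\ref{fact:reduction}~(1)) simply spell out details the paper leaves implicit.
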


\begin{proof}
We combine the reduction $\FSAT(\Sigma)\red\FSAT(\Void;\Preds^0)$
of Fact~\ref{fact:propositional_inter_reductible} and Theorem~\ref{thm:FSAT_full_monadic_discern}.
\end{proof}

\section{Final Signature Classification}%
\label{sec:trakh_full}

We conclude with the exact classification of $\FSAT$ regarding enumerability, decidability, and undecidability depending on the properties of the signature.

\setCoqFilename{red_enum}
\begin{theorem}[][FSAT_opt_enum_t]%
\label{thm:FSAT_enum}
Given $\Sigma=(\Funcs;\Preds)$ where both $\Funcs$
and $\Preds$ are data types, the finite satisfiability problem
for formulas over $\Sigma$ is  enumerable.
\end{theorem}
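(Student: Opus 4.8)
The plan is to express $\FSAT(\Sigma)$ as the existential projection over a natural-number index of a \emph{decidable} predicate, and then to invoke the standard synthetic-computability fact that such a predicate over an enumerable domain is enumerable. The whole argument rests on results already established in \Cref{sec:finmod,sec:decidability}, so the work here is essentially to assemble them.

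First I would establish the characterisation
\[\FSAT(\Sigma)\,\phi~\toot~\exists n:\Nat.\,(\text{$\phi$ has a finite decidable model over $\Fin n$}).\]
The forward direction combines \Cref{thm:quotient_FSAT}, which turns any finite decidable model into a discrete one (so $\FSAT(\Sigma)\,\phi$ entails $\FSAT'(\Sigma)\,\phi$), with \Cref{lem:FSAT_in_pos}, which shows that a finite and discrete model can be transported bijectively onto a model over $\Fin n$ for some $n$ via \Cref{coro:fin_type}. The backward direction is immediate, since $\Fin n$ is finite and any model over it is a legitimate witness for $\FSAT$. Next I would observe that for each fixed $n$ the predicate $q\,\phi\,n\cdef(\text{$\phi$ has a model over $\Fin n$})$ is decidable: this is exactly \Cref{lem:FSAT_in}, whose hypotheses are met because $\Fin n$ is a discrete and finite type and the signature $\Sigma$ is discrete, $\Funcs$ and $\Preds$ being data types (hence in particular discrete). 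Thus $\FSAT(\Sigma)\,\phi\toot\exists n.\,q\,\phi\,n$ with $q$ decidable.

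Finally I would use that the formula type $\Formula_\Sigma$ is itself enumerable. Since $\Funcs$ and $\Preds$ are data types, in particular enumerable, the type of terms is enumerable (variables come from $\Nat$, and vectors over an enumerable type are enumerable), and formulas, being built by finitely many constructors over enumerable terms, relation symbols, connectives and quantifiers, form an enumerable type as well. With $\Formula_\Sigma$ enumerable and $q:\Formula_\Sigma\to\Nat\to\Prop$ decidable, the predicate $\lambda\phi.\,\exists n.\,q\,\phi\,n$ is enumerable by the generic construction: enumerate all pairs $(\phi,n):\Formula_\Sigma\times\Nat$ using the enumerations of both components, test the decidable $q\,\phi\,n$, and emit $\some\phi$ precisely when the test succeeds; this yields a function $\Nat\to\Opt\,\Formula_\Sigma$ of the required form.

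I do not expect a genuine obstacle here, as the substantive constructive finite model theory was already carried out earlier. The only points requiring care are the enumerability of $\Formula_\Sigma$ from the enumerability of the signature symbols (routine, but it is where the ``data type'' hypothesis on $\Funcs$ and $\Preds$ is consumed together with the discreteness needed for \Cref{lem:FSAT_in}) and the standard lemma that the existential projection of a decidable predicate over an enumerable type is enumerable; both are plumbing of the synthetic framework rather than new mathematics.
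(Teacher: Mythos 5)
Your proposal is correct and follows essentially the same route as the paper's proof: both combine \Cref{thm:quotient_FSAT}, \Cref{lem:FSAT_in_pos}, and \Cref{lem:FSAT_in} to express $\FSAT(\Sigma)$ as the existential projection over $n:\Nat$ of a decidable predicate ``$\phi$ has a model over $\Fin n$'', and then enumerate formula/index pairs to produce the enumerator. The paper compresses the final step into ``it is easy to build a computable enumeration''; your write-up merely spells out that plumbing, including where the data-type hypothesis on $\Funcs$ and $\Preds$ is consumed.
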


\begin{proof}
Using Theorem~\ref{thm:quotient_FSAT} and Lemmas~\ref{lem:FSAT_in} and~\ref{lem:FSAT_in_pos}, one constructs
a predicate $Q:\Nat\to \Formula_\Sigma\to\Bool$ s.t.\
\coqlink[FSAT_rec_enum_t]{$\FSAT(\Sigma)\,\phi\toot\exists n.\,Q\,n\,\phi = \btrue$}.
Then, it is easy to build a \coqlink[FSAT_opt_enum_t]{computable enumeration} $e:\Nat\to\Opt\,\Formula_\Sigma$ of $\FSAT(\Sigma):\Formula_\Sigma\to\Prop$.
\end{proof}

\setCoqFilename{red_dec}
\begin{theorem}[Full Monadic FOL][FULL_MONADIC_discernable]%
	\label{thm:full_monadic_fol}
	$\FSAT(\Sigma)$ is decidable if either (a) or (b) holds:
\begin{description}
\item[(a)] $\Sigma$ is monadic and has decidable discernability for both function and relation symbols;
\item[(b)] relation symbols in $\Sigma$ have all arity $0$ together with decidable discernability.
\end{description}
\end{theorem}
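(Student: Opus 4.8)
The plan is to prove the statement by a straightforward case distinction on whether (a) or (b) holds, in each case reducing to one of the two specialised decidability results established just above.

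First, suppose (a) holds, so that $\Sigma$ is monadic and both $\Funcs$ and $\Preds$ have decidable discernability. These are exactly the hypotheses of Theorem~\ref{thm:FSAT_full_monadic_discern}, and I would therefore conclude decidability of $\FSAT(\Sigma)$ by a direct appeal to that theorem.

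Next, suppose (b) holds, so that every relation symbol of $\Sigma$ has arity $0$ and $\Preds$ has decidable discernability. This is precisely the propositional monadic setting of Theorem~\ref{thm:FSAT_prop_discern}, whose conclusion gives decidability of $\FSAT(\Sigma)$ at once. No assumption on the function symbols is needed in this branch: when all relations have arity $0$, no term can appear in any formula over $\Sigma$, so function symbols are never interpreted and their discernability is irrelevant --- an observation already exploited in Fact~\ref{fact:propositional_inter_reductible}.

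I do not expect any genuine obstacle here, since both branches are immediate consequences of results proved earlier; the only point to verify is that the hypotheses of the present theorem line up exactly with those of Theorems~\ref{thm:FSAT_full_monadic_discern} and~\ref{thm:FSAT_prop_discern}, which they do. In effect this theorem is just the packaging of the two discernability-based decidability results into a single classification statement.
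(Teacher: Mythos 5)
Your proposal is correct and matches the paper's own argument exactly: the paper proves this theorem by simply recollecting Theorems~\ref{thm:FSAT_full_monadic_discern} and~\ref{thm:FSAT_prop_discern}, which is precisely your two-case appeal. Your side remark that function-symbol discernability is irrelevant in case (b) is also consistent with how the paper handles the propositional case via Fact~\ref{fact:propositional_inter_reductible}.
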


\noindent
This simply recollects Theorems~\ref{thm:FSAT_full_monadic_discern} and~\ref{thm:FSAT_prop_discern}.
However, we point out that in the case of monadic signatures,
because of Facts~\ref{fact:FSAT_discern_rels} and~\ref{fact:FSAT_discern_syms}, it is not
possible to weaken further the above assumptions (a) or (b) to entail the decidability of $\FSAT(\Sigma)$.
And for non-monadic signatures, we have the general statement of Trakhtenbrot's theorem:

\setCoqFilename{red_undec}
\begin{theorem}[Full Trakhtenbrot][FULL_TRAKHTENBROT]%
\label{thm:full_trakhtenbrot}
        $\BPCP$ reduces to $\FSAT(\Sigma)$ if either (a) or (b) holds:
\begin{description}
\item[(a)] $\Sigma$ contains either an at least binary relation symbol;
\item[(b)] $\Sigma$ contains a unary relation symbol together with an at least binary function symbol.
\end{description}
\end{theorem}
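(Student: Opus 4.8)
The plan is to assemble the result purely by composing the many-one reductions established earlier, exploiting the transitivity of $\red$ (which follows from function composition). First I would build a single ``master'' reduction $\BPCP\red\FSAT(\Void; \{P^2\})$ into the binary singleton signature. This is obtained by chaining Theorem~\ref{thm:BPCP_FSATEQ}, giving $\BPCP\red\FSATEQ(\SBPCP;{\equiv})$, with Theorem~\ref{thm:uninterpret}, which uninterprets the equality symbol to yield $\FSATEQ(\SBPCP;{\equiv})\red\FSAT(\SBPCP)$, and finally with Theorem~\ref{thm:discrete_to_binary}, which applies because the custom signature $\SBPCP$ is discrete and thus gives $\FSAT(\SBPCP)\red\FSAT(\Void; \{P^2\})$. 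Composing these three reductions delivers the master reduction, from which both cases of the theorem follow by a further reduction step.

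For case~(a), suppose $\Sigma$ contains a relation symbol of some arity $n\ge 2$. I would bridge from the binary signature to the $n$-ary one using Fact~\ref{prop:embed_sig} item~(1), namely $\FSAT(\Void; \{P^2\})\red\FSAT(\Void; \{P^n\})$, and then embed the singleton $n$-ary signature into the target via item~(2), $\FSAT(\Void; \{P^n\})\red\FSAT(\Sigma)$, which is available precisely because $\Sigma$ carries an $n$-ary relation. Chaining with the master reduction yields $\BPCP\red\FSAT(\Sigma)$.

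For case~(b), suppose $\Sigma$ contains a unary relation symbol together with a function symbol of arity $n\ge 2$. Here I would instead route through Lemma~\ref{lem:sig2_sig21}, which converts the binary relation signature into one with a single $n$-ary function and a unary relation, $\FSAT(\Void; \{P^2\})\red\FSAT(\{f^n\};\{Q^1\})$, and then embed into $\Sigma$ using Fact~\ref{prop:embed_sig} item~(3), which applies since $\Sigma$ provides both an $n$-ary function and a unary relation. Again composing with the master reduction gives $\BPCP\red\FSAT(\Sigma)$.

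I expect no genuine obstacle at this stage: the theorem is a recollection, and all the substantive constructions have already been carried out upstream, in particular the discreteness-to-binary compression chain culminating in the hereditarily-finite-set encoding of Theorem~\ref{thm:compress_relations}. The only point requiring care is arity bookkeeping, i.e.\ correctly instantiating the arity parameter $n$ extracted from the distinguished symbol of $\Sigma$ and verifying that the side conditions $n\ge 2$ of Fact~\ref{prop:embed_sig} and Lemma~\ref{lem:sig2_sig21} are met. Once the arities are matched, the two cases close immediately by transitivity of $\red$.
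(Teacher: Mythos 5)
Your proposal is correct and follows exactly the same route as the paper: the paper's proof is precisely the composition of Theorems~\ref{thm:BPCP_FSATEQ}, \ref{thm:uninterpret}, and~\ref{thm:discrete_to_binary} with Lemma~\ref{lem:sig2_sig21} and Fact~\ref{prop:embed_sig}, split into the two cases just as you describe. Your write-up merely makes explicit the chaining that the paper leaves implicit in its citation list.
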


\begin{proof}
	By Theorems~\ref{thm:BPCP_FSATEQ},~\ref{thm:uninterpret} and~\ref{thm:discrete_to_binary},
        Lemma~\ref{lem:sig2_sig21}, and Fact~\ref{prop:embed_sig}.
\end{proof}

According to the explanation provided in \Cref{sec:undec} we may (informally) observe that this verified reduction yields undecidability:

\begin{observation}%
	\label{larry:trakh}
	For a signature $\Sigma$ satisfying the conditions (a) or (b) of Theorem~\ref{thm:full_trakhtenbrot}, $\FSAT(\Sigma)$ is undecidable and, more specifically, not co-enumerable.
\end{observation}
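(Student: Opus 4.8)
The plan is to read off both conclusions directly from the reduction $\BPCP\red\FSAT(\Sigma)$ supplied by Theorem~\ref{thm:full_trakhtenbrot}, using the transport properties of Fact~\ref{fact:reduction} exactly as sketched in Section~\ref{sec:undec}. Since the statement is explicitly informal --- the ambient type theory being consistent with the decidability of every predicate --- the ``proof'' amounts to spelling out why, in the intended effective interpretation of synthetic computability, the reduction carries the (non-)computability of $\BPCP$ over to $\FSAT(\Sigma)$ rather than establishing a closed Coq theorem.

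For undecidability I would invoke Fact~\ref{fact:reduction}(1) contrapositively: it states that decidability transports backwards along $\red$, so any decision procedure for $\FSAT(\Sigma)$ would yield one for $\BPCP$. By Fact~\ref{fact:BPCP_undec} the halting problem reduces to $\BPCP$, so in the effective reading $\BPCP$ is undecidable; hence $\FSAT(\Sigma)$ is undecidable as well.

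For the sharper non-co-enumerability, I would first observe that the reduction function $f$ witnessing $\BPCP\,R\toot\FSAT(\Sigma)\,(f\,R)$ simultaneously witnesses $\neg\BPCP\,R\toot\neg\FSAT(\Sigma)\,(f\,R)$, i.e.\ it is also a reduction of the complements $\overline{\BPCP}\red\overline{\FSAT(\Sigma)}$ via the very same map. Then I would appeal to Fact~\ref{fact:reduction}(2): provided the instance type of $\BPCP$ and the formula type $\Formula_\Sigma$ are data types, enumerability of $\overline{\FSAT(\Sigma)}$ would force enumerability of $\overline{\BPCP}$. Since $\BPCP$ is not co-enumerable --- inheriting non-co-enumerability from the halting problem along the same complement-reduction argument applied to Fact~\ref{fact:BPCP_undec} --- it follows that $\FSAT(\Sigma)$ cannot be co-enumerable either.

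The main obstacle here is conceptual rather than computational: none of this is a provable internal statement, so the work lies entirely in justifying the meta-level reading. Two points require care. First, the hypotheses of Fact~\ref{fact:reduction}(2) demand that $\Formula_\Sigma$ be a data type (enumerable and discrete), which holds once the symbol types of $\Sigma$ are data types --- unproblematic for the concrete signatures reached along the reduction chain, but worth flagging for arbitrary $\Sigma$. Second, one must track that the complement reduction reuses the identical $f$, so that no fresh computability assumption is smuggled in when passing from $p\toot q\circ f$ to its negation. Everything beyond these observations is a direct application of results already established.
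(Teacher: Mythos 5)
Your proposal is correct and matches the paper's own justification: the Observation carries no formal proof but rests exactly on the explanation in Section~\ref{sec:undec}, namely transporting undecidability backwards along $\BPCP\red\FSAT(\Sigma)$ via Fact~\ref{fact:reduction}(1) and non-co-enumerability via the complement reduction together with Fact~\ref{fact:reduction}(2), rooted in Fact~\ref{fact:BPCP_undec}. Your added caveats (the data-type hypotheses on $\Formula_\Sigma$ and the reuse of the identical reduction function for the complements) are consistent with the paper's deliberately informal, meta-level reading of the statement.
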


Notice that even if the conditions on arities of Theorems~\ref{thm:full_monadic_fol}
and~\ref{thm:full_trakhtenbrot} fully classify signatures with decidable discernability, \textit{a fortiori}
discrete signatures, it is not possible to decide whether a signature is monadic or not,
unless the signature is furthermore finite.
For a given formula $\phi$ though, it is always possible to render it in the finite signature of used symbols.

\section{Application: Undecidability of Separation Logic}%
\label{sec:seplog}

\setBaseUrl{{http://www.ps.uni-saarland.de/extras/fol-trakh-ext/website/Undecidability.}}
\setCoqFilename{SeparationLogic.SL}

Trakhtenbrot's theorem can be used to establish several negative results concerning first-order logic and other decision problems.
Regarding first-order logic, a direct consequence is that the completeness theorem fails in the sense that no (enumerable) deduction system can capture finite validity, as this would turn $\FSAT(\Sigma)$ co-enumerable in contradiction to \Cref{larry:trakh}.
Examples of decision problems shown undecidable by trivial reduction from $\FSAT(\Sigma)$ are the finite satisfiability problem in the first-order theory of graphs and problems such as query containment and query equivalence in data base theory.
This section is devoted to a still simple but insightful application of Trakhtenbrot's theorem, namely the undecidability of separation logic.

Separation logic~\cite{reynolds2002separation,ishtiaq2001bi} as an assertion language for finite data structures bears an obvious connection to finitely interpreted first-order logic.
In particular the formulation in~\cite{10.1007/3-540-45294-X_10} adding pointers to \emph{binary} heap cells $(t\mapsto t_1,t_2)$ to the spatial operations $\phi *\psi$ and $\phi\sep \psi$ for separating conjunction and implication, as well as $\emp$ for the emptiness assertion in extension of the pure first-order language, admits a compact reduction from the $\FSAT$ problem over the \emph{binary} signature.
In this section, we outline our adaptation of the undecidability proof given in~\cite{10.1007/3-540-45294-X_10} to our constructive and type-theoretic setting, with a focus on the technical details concerning discreteness and decidability induced by this approach.

We represent the syntax of separation logic as an inductive type \coqlink[sp_form]{$\SL$} of formulas $\phi,\psi$ by
\[\phi,\psi:\SL~\bnfdef~ (t\mapsto t_1,t_2)\mid\emp\mid \phi * \psi \mid \phi \sep \psi\mid  t_1\equiv t_2\mid\dbot \mid \phi\binop \psi\mid\quant\phi\hspace{2em}(t:\Opt\,\Nat)\]
with ${\binop}\in\{{\dto},{\dand},{\dor}\}$ and ${\quant}\in\{{\dforall},{\dexists}\}$ as in \Formula and isolate a minimal fragment \setCoqFilename{SeparationLogic.MSL}\coqlink[msp_form]{$\MSL$} by
\[\phi,\psi:\MSL~\bnfdef~ (t\hookrightarrow t_1,t_2)\mid\dbot \mid \phi\binop \psi\mid\quant\phi\hspace{2em}(t:\Opt\,\Nat).\]

Informally, the assertions expressed by $\SL$ and $\MSL$ are interpreted over a memory model consisting of a finite heap addressing binary cells and a stack mapping variables to addresses.
The first-order fragment is interpreted as expected, where quantification ranges over addresses.
The pointer $(t\mapsto t_1,t_2)$ is interpreted strictly as the assertion that the heap consists of a single cell containing the pair denoted by $t_1$ and $t_2$ referenced at $t$, while $(t\hookrightarrow t_1,t_2)$ just asserts that the heap contains such a pair.

\setCoqFilename{SeparationLogic.SL}
\begin{definition}[][sp_sat]
	Given a \emph{stack} $s:\Nat \to\Val$ mapping variables to possibly invalid addresses $\Val :=\Opt\,\Nat$ and a \emph{heap} $h:\List\,(\Nat\times(\Val\times\Val))$ representing a finite map of valid addresses to pairs of addresses, we define the \emph{satisfaction relation} $h\vDash_s\phi$ for $\phi:\SL$ recursively by
	\begin{small}
		\begin{align*}
		h\vDash_s (t\mapsto t_1,t_2)&~:=~\exists a.\, \hat s\,t=\some a \land h=[(a,(\hat s\,t_1,\hat s\,t_2))]&
		h\vDash_s t_1\equiv t_2 &~:=~ \hat s\,t_1=\hat s\,t_2\\
		h\vDash_s \emp &~:=~ h = \cnil&
		h\vDash_s \dbot &~:=~ \bot\\
		h\vDash_s \phi *\psi &~:=~ \exists h_1h_2.\,h\approx h_1\!\!\capp\!\! h_2 \,\land\, h_1\vDash_s \phi \,\land\, h_2\vDash_s \psi&
		h\vDash_s \phi\binop \psi&~:=~ h\vDash_s \phi ~\binopm~ h\vDash_s \psi\\
		h\vDash_s \phi \sep \psi &~:=~ \forall h'.\,h\# h'\,\to\, h'\vDash_s \phi \,\to\, h\!\!\capp\!\! h'\vDash_s \psi&
		h\vDash_s \quant\phi &~:=~ \quantm v:\Val.\,h\vDash_{v\cdot s}\phi
		\end{align*}
		where $\hat s\,\some x:=s\,x$ and $\hat s\,\none:=\none$, where $h\approx h'$ denotes equivalence $\forall a\,p.\,(a,p)\in h \leftrightarrow (a,p)\in h$ and $h\# h'$ denotes disjointness $\neg \exists a\,p\,p'.\,(a,p)\in h \land (a,p')\in h'$,
		and where for the first-order fragment each logical connective ${\binop}/{\quant}$ is mapped to its meta-level counterpart ${\binopm}/{\quantm}$.

		For $\phi:\MSL$ the relation $h\vDash_s\phi$ is obtained by the same rules with additionally
		\[h\vDash_s (t\hookrightarrow t_1,t_2) ~:=~\exists a.\, \hat s\,t=\some a \land (a,(\hat s\,t_1,\hat s\,t_2))\in h\]
		and we define the satisfiability problem \emph{\coqlink[SLSAT]{\SLSAT} (\setCoqFilename{SeparationLogic.MSL}\coqlink[MSLSAT]{\MSLSAT})} on \emph{$\phi:\SL$ ($\phi:\MSL$)} as the existence of a stack $s$ and a functional heap $h$ (i.e. $\forall app'.\,(a,p)\in h \to (a,p')\in h\to p=p'$) such that $h\vDash_s\phi$.
	\end{small}

\end{definition}

The outline of the following reduction is to first establish $\FSAT\preceq \MSLSAT$ to emphasise that already the fragment \MSL is undecidable and then continue with $\MSLSAT\preceq\SLSAT$ by a mere syntax embedding.
The idea for the main reduction is to encode the binary relation $P[x;y]:\Formula$ on the heap by $(a\hookrightarrow x,y):\MSL$ at some address $a$ while tracking the domain elements $x$ via empty cells $(x\hookrightarrow \none,\none)$.
Since the intermediate model transformations require computational access to the domain, we actually need to base the main reduction on $\FSAT'$ guaranteeing discreteness.

\setCoqFilename{SeparationLogic.Reductions.FSATdc_to_MSLSAT}
Formally, we translate first-order formulas $\phi:\Formula$ over the binary signature $(\Void; \{P^2\})$ to formulas \coqlink[encode]{$\overline\phi :\MSL$} in the sufficient fragment of separation logic by
\begin{align*}
	\overline{P[x;y]}&:=(\dexists z.\, (\some z\hookrightarrow \some{x}, \some{y}))~\dand~ (\some x \hookrightarrow \none,\none) ~\dand~(\some y \hookrightarrow \none,\none)\\
	\overline{\dforall x.\,\phi}&:=\dforall x.\,(\some x\hookrightarrow\none,\none) ~\dto~ \overline \phi\hspace{2.5em}
	\overline{\dexists x.\,\phi}:=\dexists x.\,(\some x\hookrightarrow\none,\none) ~\dand~ \overline \phi
\end{align*}
and by recursively descending through the remaining logical operations.
The next two lemmas verify the correctness of the reduction function.

\begin{lemma}[][reduction_forwards]%
	\label{sl1}
	Given a model \MM over a discrete and finite domain $D$ exhausted by a duplicate-free list $l_D$ and coming with a decider $f_P$ for $P^\MM$, one can compute a functional heap $h$ and from every environment $\rho:\Nat \to D$ a stack $s_\rho$ such that $\MM\vDash_\rho\phi~\toot~h\vDash_{s_\rho}\overline \phi$ for all $\rho$.
\end{lemma}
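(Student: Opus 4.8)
The plan is to build the heap $h$ from the model $\MM$ once and for all (it does not depend on $\rho$), and then to define the stack $s_\rho$ pointwise from $\rho$, so that the equivalence $\MM\vDash_\rho\phi \toot h\vDash_{s_\rho}\overline\phi$ follows by structural induction on $\phi$. The heap must simultaneously encode two things: the \emph{domain} $D$ and the \emph{relation} $P^\MM$. Following the encoding scheme, I would realise each domain element $x\inl l_D$ as an empty cell $(x\hookrightarrow\none,\none)$ located at the address corresponding to $x$, and each related pair, i.e.\ each $(x,y)$ with $P^\MM[x;y]$, as a binary cell $(z\hookrightarrow \some x,\some y)$ at some witness address $z$. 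Since $D$ is discrete and finite via the duplicate-free list $l_D$, I can compute a concrete injection from $D$ into $\Nat$ (using \Cref{coro:fin_type} to get a bijection $D\simeq\Fin{\clen{l_D}}$) to assign addresses; the stack $s_\rho$ is then simply $\lambda x.\,\some{(\text{address of }\rho\,x)}$, and I would check it lands in $\Val$ correctly. The decider $f_P$ for $P^\MM$ is what makes this heap effectively computable: I enumerate all pairs from $l_D\times l_D$, keep those passing $f_P$, and allocate fresh addresses for their cells, obtaining a finite and \emph{functional} heap.

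With $h$ and $s_\rho$ fixed, the core is the induction. The logical connectives and $\dbot$ are immediate because both $\vDash$ relations map them to the same meta-level connective. The quantifier cases are where the guards in the translation do their work: for $\dforall x.\,\phi$, the translation $\dforall x.\,(\some x\hookrightarrow\none,\none)\dto\overline\phi$ restricts the separation-logic quantification over \emph{all} addresses $v:\Val$ down to exactly those $v$ that carry an empty cell in $h$, and by construction these are precisely the addresses of domain elements; so I would prove a bijection lemma stating that $h\vDash_{v\cdot s_\rho}(\some{0}\hookrightarrow\none,\none)$ (suitably phrased) holds iff $v=\some{(\text{address of some }a:D)}$, letting me transport the meta-level quantifier $\forall a:D$ to the guarded quantifier over addresses. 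The existential case is dual. The atomic case $\overline{P[x;y]}$ requires showing that the conjunction asserting an existing binary cell pointing to $x,y$ together with the two empty-cell guards holds in $h$ under $s_\rho$ iff $P^\MM\,[\rho\,x;\rho\,y]$ holds; the forward direction uses that $h$ contains the encoding cell exactly when the pair is related, and the guards ensure $x,y$ are genuine domain elements.

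The \textbf{main obstacle} is the atomic case under the $\hookrightarrow$ (``heap contains'') semantics rather than the stricter $\mapsto$. Because $(t\hookrightarrow t_1,t_2)$ only asserts that the cell is \emph{present} in $h$ (not that $h$ is a singleton), I must ensure the encoding is faithful in both directions: no \emph{spurious} binary cell may connect two elements $x,y$ unless $P^\MM[x;y]$ actually holds, and no empty cell may sit at an address that is not a domain element. This forces careful bookkeeping of the address space so that domain-cell addresses, relation-witness addresses, and unused addresses stay disjoint, and so that the heap is genuinely functional (no address carries two different payloads). Getting this disjointness right is precisely the kind of low-level detail that is invisible on paper but must be discharged explicitly here; once the address allocation is fixed and these separation invariants are established, the induction itself is routine.
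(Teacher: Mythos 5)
Your proposal is correct and follows essentially the same route as the paper: the same two-part heap (empty cells at domain-element addresses, binary cells at separate addresses for pairs passing the decider $f_P$, with the address classes kept disjoint to ensure functionality and to rule out spurious cells), the same stack $s_\rho\,x = \some{n_{\rho\,x}}$, and the same induction on $\phi$ in which the empty-cell guards mediate the quantifier cases and disjointness secures both directions of the atomic case. The only difference is cosmetic: the paper fixes the witness addresses concretely as $\pi(n_d,n_e)+\clen{l_D}$ via an injective pairing function $\pi$, whereas you leave the allocation scheme abstract, which is fine as long as injectivity and disjointness from the domain addresses are maintained.
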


\begin{proof}
	We encode domain elements $d:D$ as natural numbers by the unique index $n_d<\clen{l_D}$ at which $d$ occurs in $l_D$ and pairs $(d,e):D\times D$ as numbers $n_{d,e}\ge \clen{l_D}$ by $n_{d,e}:=\pi(n_ d,n_ e)+\clen{l_D}$ employing an injective pairing function $\pi:(\Nat\times \Nat)\to\Nat$.
	We then construct a heap $h$ encoding both the full domain $D$ and the binary relation $P^\MM$ by
	\[h:=[(n_d,(\none,\none))\mid d\in l_D]\capp [(n_{d,e},(\some{n_ d}, \some{n_ e}))\mid f_P\,d\,e=\btrue]\]
	which is functional by the injectivity of the encodings $n_d$ and $n_{d,e}$ that are taken care not to overlap by the addition of $\clen{l_D}$ in the definition of $n_{d,e}$.

	Given an environment $\rho$, we convert it to a stack $s_\rho\, x := \some{n_{\rho\,x}}$ and prove the claimed equivalence by induction on $\phi$ with $\rho$ generalised.
	We only discuss the case $\phi=P[x;y]$.
	Assuming $\MM\vDash_\rho P[x;y]$ we have $P^\MM[d;e]$ and so $f_P\,d\,e=\btrue$ for $d:=\rho\,x$ and $e:=\rho\,y$.
	Therefore $(n_{d,e},(\some{n_ d}, \some{n_ e}))\in h$.
	Since by construction also $(n_d,(\none,\none))\in h$ and $(n_e,(\none,\none))\in h$, we can conclude $h\vDash_{s_\rho}\overline{P[x;y]}$.
	Conversely, given $h\vDash_{s_\rho}\overline{P[x;y]}$, we know that $(a,(\some{n_ d}, \some{n_ e}))\in h$ at some address $a$ and hence can deduce that $f_P\,d\,e=\btrue$.
	Thus $\MM\vDash_\rho P[x;y]$.
\end{proof}

\begin{lemma}[][reduction_backwards]%
	\label{sl2}
	Given a heap $h$ containing at least one element of the form $(a_0,(\none,\none))$, one can compute a decidable model $\MM$ over a discrete and finite domain $D$ and from every stack $s$ an environment $\rho_s:\Nat\to D$ such that $\MM \vDash_{\rho_s}\phi ~\toot~h\vDash_s \overline \phi$ for all stacks $s$ that satisfy the condition $\forall x\in\FV(\phi).\,\exists a.\, s\,x=\some a\land (a,(\none,\none))\in h$.
\end{lemma}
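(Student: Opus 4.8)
The plan is to read a first-order model off the heap directly, dually to Lemma~\ref{sl1}. The empty cells of $h$ serve as domain markers, so I take as domain the type of addresses carrying such a cell:
\[ D \cdef \SigType{a:\Nat}{(a,(\none,\none))\inl h}. \]
Since addresses are natural numbers and payloads lie in $\Opt\,\Nat$, both discrete, membership $(a,(\none,\none))\inl h$ is decidable; using its Boolean counterpart for proof irrelevance (the same device used to turn decidable membership into a finite dependent sum in the completeness reductions), $D$ is a genuine type that is discrete, being a subtype of $\Nat$, and finite, its underlying addresses being among the finitely many first components of $h$. The hypothesis that $h$ contains some $(a_0,(\none,\none))$ makes $D$ inhabited, supplying a default element. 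I then interpret the binary relation by
\[ P^\MM\,d\,e \cdef \exists z.\,(z,(\some{a_d},\some{a_e}))\inl h, \]
where $a_d,a_e$ are the addresses underlying $d,e:D$; this is decidable by searching the finite heap $h$ for an entry whose payload is $(\some{a_d},\some{a_e})$. Note that $P^\MM$ is defined only on pairs drawn from $D$, so both arguments are automatically domain markers, matching the two guard conjuncts in $\overline{P[x;y]}$.

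Next I turn a stack $s$ into an environment $\rho_s:\Nat\to D$ by setting $\rho_s\,x$ to be the element of $D$ with address $a$ whenever $s\,x=\some a$ and $(a,(\none,\none))\inl h$, and the default element $a_0$ otherwise. The crucial property is that $\rho_s$ is defined pointwise from $s$ and hence commutes with de Bruijn extension: for any value $v$ that is a domain marker, say $v=\some{a_d}$ with $d:D$, we have $\rho_{v\cdot s}=d\cdot\rho_s$. This identity aligns the guarded quantifiers of the encoding with plain first-order quantification over $D$.

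I then prove $\MM\vDash_{\rho_s}\phi\toot h\vDash_s\overline\phi$ by induction on $\phi$ with $s$ generalised, carrying the side condition that $s$ maps every free variable of $\phi$ to a domain marker. The atomic case $P[x;y]$ is immediate: the side condition discharges the two conjuncts $(\some x\hookrightarrow\none,\none)$ and $(\some y\hookrightarrow\none,\none)$, so $h\vDash_s\overline{P[x;y]}$ reduces to the existence of a relation cell whose payload consists of the addresses of $\rho_s\,x$ and $\rho_s\,y$, which is exactly $P^\MM\,(\rho_s\,x)\,(\rho_s\,y)$. The propositional connectives follow by recursion, using that the free variables of a subformula are among those of $\phi$, so the side condition is inherited. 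In the quantifier cases the guard $(\some x\hookrightarrow\none,\none)$ forces the bound value $v$ to be precisely a domain marker, the first-order quantifier ranges over $D$, and the commutation $\rho_{v\cdot s}=d\cdot\rho_s$ together with the fact that $v\cdot s$ then satisfies the side condition for the body lets the induction hypothesis apply.

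The main obstacle is the bookkeeping around this side condition rather than any deep construction: one must state the inductive claim with $s$ generalised and with exactly the hypothesis that $s$ maps $\FV(\phi)$ into domain markers, then check that this invariant is preserved when descending under a guarded quantifier (where the guard itself supplies the marker property for the freshly bound variable) and is available for the variables occurring in atoms. Getting this invariant to match the guard conjuncts of $\overline{\cdot}$ so that the de Bruijn extension identity applies is the only delicate point; the finiteness, discreteness, and decidability facts are routine consequences of working over $\Nat$ and $\Opt\,\Nat$ with a finite heap.
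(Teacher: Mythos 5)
Your proposal is correct and follows essentially the same route as the paper: the same domain $D:=\SigType{a}{(a,(\none,\none))\inl h}$ with Boolean membership for proof irrelevance, the same interpretation $P^\MM$ via heap lookup, the same default-element construction of $\rho_s$, and the same induction on $\phi$ with $s$ generalised under the free-variable side condition. The extra detail you supply on the de Bruijn commutation $\rho_{v\cdot s}=d\cdot\rho_s$ and the propagation of the invariant through guarded quantifiers is exactly the bookkeeping the paper leaves implicit.
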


\begin{proof}
	As domain $D$ we take the type of all addresses $a$ such that $(a,(\none,\none))\in h$, formally defined as $D:=\SigType{a}{(a,(\none,\none))\in h}$
	using the Boolean counterpart of list membership for unicity of proofs.
	By this construction, elements $a$ and $a'$ of $D$ are equal iff they are equal as addresses in $\Nat$ and so $D$ is discrete and, since it is bounded by $h$, also finite.
	To turn $D$ into a model $\MM$, we set
	\[P^\MM[a_1;a_2]:=\exists a.\,(a,(\some{a_1},\some{a_2}))\in h\]
	which is decidable since it expresses a bounded quantification over a list.

	Given the dummy element $a_0$ of $D$, we can convert a stack $s$ into an environment $\rho_s$ mapping $x$ to $a$ if $s\,x=\some a$ with $(a,(\none,\none))\in h$, and to $a_0$ in any other case.
	With these constructions in place, the claim is established by induction on $\phi$ with $s$ generalised, we again just discuss the case $\phi=P[x;y]$.
	First suppose $\MM \vDash_{\rho_s}P[x;y]$, then since $x,y\in\FV(P[x;y])$ we know that $\rho_s$ is well-defined on $x,y$ by the condition on $s$ and obtain corresponding $a_1,a_2:D$.
	From $P^\MM[a_1;a_2]$ we obtain $a$ with $(a,(\some{a_1},\some{a_2}))\in h$ and therefore conclude $h\vDash_s \overline{P[x;y]}$.
	Conversely, starting with $h\vDash_s \overline{P[x;y]}$ straightforwardly yields $\MM \vDash_{\rho_s}P[x;y]$.
\end{proof}

Since \Cref{sl2} imposes a condition on the free variables of the input formula, it is convenient to start from the restriction $\FSAT'_c$ of $\FSAT'$ to closed formulas $\phi$.
To connect $\FSAT'_c$ to \BPCP, one could verify that the reduction chain $\BPCP\preceq \FSAT'$ given in the previous sections actually yields closed formulas, hence witnessing $\BPCP\preceq \FSAT'_c$ directly.
However, we prefer to implement the general (and straightforward) reduction $\FSAT'\preceq\FSAT'_c$.
The following theorem then summarises the three parts comprising the reduction $\FSAT'\preceq \SLSAT$:

\begin{theorem}%
	\label{thm:SL}
	We have reductions as follows:
	\begin{enumerate}
		\setCoqFilename{FOL.Reductions.FSATd_to_FSATdc}
		\coqitem[reduction]
		$\FSAT'(\Sigma)\preceq\FSAT'_c(\Sigma)$
		\setCoqFilename{SeparationLogic.Reductions.FSATdc_to_MSLSAT}
		\coqitem[reduction]
		$\FSAT'_c(\Void; \{P^2\})\preceq \MSLSAT$
		\setCoqFilename{SeparationLogic.Reductions.MSLSAT_to_SLSAT}
		\coqitem[reduction]
		$\MSLSAT\preceq \SLSAT$
	\end{enumerate}
\end{theorem}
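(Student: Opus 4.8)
The three reductions are logically independent, so the plan is to prove each in turn; the statement only asks for the three maps, so nothing beyond what it lists needs to be composed. Reduction~(1), $\FSAT'(\Sigma)\red\FSAT'_c(\Sigma)$, I would obtain by existential closure: given $\phi$ with $\FV(\phi)=[x_1;\ldots;x_k]$, send it to the closed formula $\dexists x_1.\,\cdots\,\dexists x_k.\,\phi$. Since the satisfiability predicate already quantifies existentially over the assignment $\rho$, prefixing existential quantifiers over exactly the free variables changes neither the class of models nor their domains, so discreteness, finiteness and decidability are all preserved and the equivalence is immediate.

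For reduction~(3), $\MSLSAT\red\SLSAT$, I would use a syntactic embedding $(\cdot)^\flat:\MSL\to\SL$ that commutes with $\dbot$, every connective $\binop$ and every quantifier $\quant$, and that translates the single new atom by $(t\hookrightarrow t_1,t_2)^\flat\cdef(t\mapsto t_1,t_2)*\dtop$, where $\dtop$ abbreviates some \SL{} formula true of every heap (e.g.\ $\dbot\dto\dbot$). The idea is that the ``the heap contains this cell'' reading of $\hookrightarrow$ is exactly the ``exact single cell'' reading of $\mapsto$ joined by separating conjunction with an unconstrained remainder. A routine induction on $\phi:\MSL$ with the stack generalised gives $h\vDash_s\phi\toot h\vDash_s\phi^\flat$ for all $h,s$; as the functional-heap requirement is the same on both sides, $\MSLSAT\,\phi\toot\SLSAT\,\phi^\flat$ follows.

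The substance is reduction~(2), $\FSAT'_c(\Void;\{P^2\})\red\MSLSAT$, which I would assemble by gluing \Cref{sl1} and \Cref{sl2} around the translation $\phi\mapsto\overline\phi$. Restricting to closed $\phi$ is what makes this clean: the side condition of \Cref{sl2} on the free variables of $\phi$ becomes vacuous. In the forward direction, $\FSAT'_c(\Void;\{P^2\})\,\phi$ supplies a discrete, finite, decidable model; deduplicating its enumerating list and reading off a decider for $P^\MM$ meets the hypotheses of \Cref{sl1}, yielding a functional heap $h$ and stack $s_\rho$ with $h\vDash_{s_\rho}\overline\phi$, hence $\MSLSAT\,\overline\phi$. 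In the backward direction, any witnessing heap and stack for $\overline\phi$ are fed to \Cref{sl2} to reconstruct a finite discrete model of $\phi$.

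The main obstacle I anticipate is the non-emptiness bookkeeping that lets the two lemmas compose: \Cref{sl2} needs the heap to contain at least one empty cell $(a_0,(\none,\none))$ to serve as the carrier of the recovered domain, yet a bare $\overline\phi$ need not force such a cell when $\phi$ makes no existential commitment. I would repair this by conjoining a non-emptiness clause $\dexists z.\,(\some z\hookrightarrow\none,\none)$ to $\overline\phi$, mirroring the role of $\phi_1$ in \Cref{sec:compressing}. This is harmless on the forward pass, since every \FSAT{} model has an inhabited domain (an assignment $\rho:\Nat\to D$ forces $D$ inhabited) and so the heap built in \Cref{sl1} already contains an empty cell for each domain element; and it is exactly what guarantees the dummy element $a_0$ that \Cref{sl2} demands on the backward pass. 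The only remaining care, that the address encodings $n_d$ and $n_{d,e}$ stay disjoint so the constructed heap is genuinely functional, is already discharged inside \Cref{sl1} and needs no rework.
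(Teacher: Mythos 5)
Your proposal is correct and follows essentially the same route as the paper: existential closure for (1), the translation $\overline\phi$ guarded by the non-emptiness conjunct $\dexists z.\,(\some z\hookrightarrow\none,\none)$ glued between \Cref{sl1} and \Cref{sl2} for (2) (the paper's $\phi'$ uses exactly this guard), and the embedding $(t\hookrightarrow t_1,t_2)\mapsto(t\mapsto t_1,t_2)*\dtop$ for (3). In particular, the ``obstacle'' you anticipated — supplying the dummy empty cell $(a_0,(\none,\none))$ required by \Cref{sl2} — is precisely how the paper defines its reduction formula, so nothing is missing.
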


\begin{proof}
	We establish each reduction separately.
	\begin{enumerate}[1.]
		\item
		Given a formula $\phi$ over signature $\Sigma$, we first compute a bound $N$ such that $x<N$ for all $x\in \FV(\phi)$.
		Then it is easy to verify that $\phi':=\dexists^N \phi$ prefixing $\phi$ with $N$ existential quantifiers is closed and has a (finite and discrete) model iff $\phi$ does.
		\item
		Given a closed formula $\phi$ over the binary signature $(\Void; \{P^2\})$, we define $\phi':\MSL$ by
		\[\phi':=(\dexists\,(\some 0\hookrightarrow\none,\none))~\dand~ \overline \phi\]
		and show that $\phi$ has a finite and discrete model iff $\phi'$ is $\MSL$-satisfiable.

		First, if $\MM\vDash_\rho\phi$ over a finite and discrete domain $D$, we can apply \Cref{sl1} to obtain $h\vDash_{s_\rho}\overline \phi$ since over discrete types any list can be turned duplicate-free.
		Moreover, $D$ at least contains the element $d:=\rho\,0$ and hence by construction of $h$ we have $(n_d,(\none,\none))\in h$, establishing the guard $\dexists\,(\some 0\hookrightarrow\none,\none)$.
		So in total $h\vDash_{s_\rho}\phi'$.

		Secondly, from $h\vDash_s\phi'$ we obtain $h\vDash_s\overline\phi$ and some $a_0$ with $(a_0,(\none,\none))\in h$.
		Since $\phi$ is closed, the condition in \Cref{sl2} holds vacuously and thus we obtain $\MM \vDash_{\rho_s}\phi$.
		\item
		We embed $\MSL$ into $\SL$ by the map sending the sole deviating assertion $(t\hookrightarrow t_1,t_2)$ to $(t\mapsto t_1,t_2)*\dtop$ where $\dtop:=\dbot\dto\dbot$.
		To verify this reduction, it suffices to establish that $h\vDash_s(t\hookrightarrow t_1,t_2)$ iff $h\vDash_s(t\mapsto t_1,t_2)*\dtop$, which follows by straightforward list manipulation.
		\qedhere
	\end{enumerate}
\end{proof}

\setCoqFilename{SeparationLogic.SL_undec}
\begin{corollary}[][SLSAT_undec]
	$\FSAT'(\Sigma)$ reduces to \SLSAT, therefore also \BPCP reduces to \SLSAT.
\end{corollary}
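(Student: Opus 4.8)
The plan is to compose the three reductions already established in Theorem~\ref{thm:SL}, instantiated at the binary signature $(\Void; \{P^2\})$ (which is the one for which Section~\ref{sec:seplog} has built the separation-logic encoding, so I read $\Sigma$ here as $(\Void; \{P^2\})$), and then to prepend the reduction chain from \BPCP developed in the previous sections. The only conceptual ingredient needed beyond the cited results is that many-one reducibility $\red$ is transitive: if $f:X\to Y$ witnesses $p\red q$ and $g:Y\to Z$ witnesses $q\red s$, then $g\circ f$ witnesses $p\red s$, directly from the definition of $\red$. Hence it suffices to string together reductions whose source and target problems match up at each junction.

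Specialising the three parts of Theorem~\ref{thm:SL} to $(\Void; \{P^2\})$ gives the chain
\[\FSAT'(\Void; \{P^2\})\red\FSAT'_c(\Void; \{P^2\})\red\MSLSAT\red\SLSAT,\]
so transitivity yields the first claim $\FSAT'(\Void; \{P^2\})\red\SLSAT$. The point requiring care is that Part~2 of Theorem~\ref{thm:SL} is available \emph{only} for the binary signature, so the whole chain must be routed through $(\Void; \{P^2\})$ rather than an arbitrary signature.

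For the second claim I would prepend the Trakhtenbrot chain. By Theorem~\ref{thm:full_trakhtenbrot} (condition~(a), witnessed by the binary relation $P^2$) we have $\BPCP\red\FSAT(\Void; \{P^2\})$, and by Theorem~\ref{thm:quotient_FSAT} we have $\FSAT(\Void; \{P^2\})\red\FSAT'(\Void; \{P^2\})$. Composing these with the separation-logic chain above and invoking transitivity once more gives $\BPCP\red\SLSAT$. No genuine obstacle remains, since all the substantive work lies in the cited results; the task here is purely bookkeeping, namely aligning the signatures at each junction and switching from \FSAT to $\FSAT'$ via Theorem~\ref{thm:quotient_FSAT} before entering the separation-logic reductions.
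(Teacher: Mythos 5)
Your proposal is correct and follows essentially the same route as the paper: the paper's proof likewise composes the three parts of Theorem~\ref{thm:SL} (necessarily routed through the binary signature $(\Void;\{P^2\})$, since Part~2 only applies there) with Theorem~\ref{thm:full_trakhtenbrot} and Theorem~\ref{thm:quotient_FSAT}, using transitivity of $\red$. Your explicit attention to aligning the signatures and inserting the $\FSAT\red\FSAT'$ step matches the paper's intent exactly.
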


\begin{proof}
	By composing the three parts of \Cref{thm:SL} with Theorems~\ref{thm:full_trakhtenbrot} and~\ref{thm:quotient_FSAT}.
\end{proof}

As in the case of $\FSAT$ and backed by the explanation in \Cref{sec:undec}, we may interpret this reduction as an undecidability result:

\begin{observation}
\SLSAT is undecidable and, more specifically, not co-enumerable.
\end{observation}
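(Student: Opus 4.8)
The plan is to obtain the first reduction purely by composing the three reductions of \Cref{thm:SL} and then to prepend the Trakhtenbrot reduction in order to reach \BPCP. Since many-one reducibility is transitive (the composite of the underlying reduction functions is again a reduction), the whole argument amounts to lining up the intermediate problems correctly and inserting the $\FSAT$/$\FSAT'$ bridge exactly where it is needed.

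First I would chain the three parts of \Cref{thm:SL} at the binary signature on which the separation-logic encoding is built. Part~(1) gives $\FSAT'(\Void;\{P^2\})\preceq\FSAT'_c(\Void;\{P^2\})$, part~(2) gives $\FSAT'_c(\Void;\{P^2\})\preceq\MSLSAT$, and part~(3) gives $\MSLSAT\preceq\SLSAT$. Transitivity then yields
\[
\FSAT'(\Void;\{P^2\})\;\preceq\;\SLSAT,
\]
which is the content of the first claim. (If one wants the fully general statement for an arbitrary \emph{discrete} $\Sigma$, one simply prepends the compression $\FSAT(\Sigma)\preceq\FSAT(\Void;\{P^2\})$ of \Cref{thm:discrete_to_binary}, but the essential reduction is the binary one above.)

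Second, to reach \BPCP I would feed Trakhtenbrot into the front of this chain. By \Cref{thm:full_trakhtenbrot} case~(a), taking the at-least-binary relation symbol to be $P^2$, we have $\BPCP\preceq\FSAT(\Void;\{P^2\})$. The one genuinely necessary gluing step is that the separation-logic encoding starts from $\FSAT'$ (it needs an honestly discrete finite domain to read elements off as heap addresses, cf.\ \Cref{sl1,sl2}), whereas Trakhtenbrot delivers plain $\FSAT$; here \Cref{thm:quotient_FSAT} supplies the missing link $\FSAT(\Void;\{P^2\})\preceq\FSAT'(\Void;\{P^2\})$. Composing $\BPCP\preceq\FSAT(\Void;\{P^2\})\preceq\FSAT'(\Void;\{P^2\})\preceq\SLSAT$ then gives $\BPCP\preceq\SLSAT$ as wanted.

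There is no real obstacle left at this stage: every ingredient has already been established, so the corollary is essentially bookkeeping over transitivity. The only point that deserves care — and the one I would make explicit — is the \emph{direction} in which \Cref{thm:quotient_FSAT} is used: because \SLSAT-satisfaction is defined over genuinely discrete finite heaps, the discreteness guaranteed by $\FSAT'$ (as opposed to $\FSAT$) is precisely what makes Lemmas~\ref{sl1} and~\ref{sl2} applicable. Hence the $\FSAT\preceq\FSAT'$ bridge must be inserted \emph{before}, not after, the separation-logic encoding.
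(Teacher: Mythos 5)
Your proposal is correct and takes essentially the same route as the paper: the corollary immediately preceding this observation establishes $\BPCP\preceq\SLSAT$ precisely by composing the three parts of \Cref{thm:SL} with Theorems~\ref{thm:full_trakhtenbrot} and~\ref{thm:quotient_FSAT}, including the $\FSAT(\Void;\{P^2\})\preceq\FSAT'(\Void;\{P^2\})$ bridge inserted before the separation-logic encoding, exactly as you argue. The only implicit step in your write-up — transporting undecidability and non-co-enumerability of \BPCP along the composed reduction via Fact~\ref{fact:reduction} — is likewise left informal in the paper, which simply appeals to the explanation in \Cref{sec:undec}.
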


In comparison to~\cite{10.1007/3-540-45294-X_10}, our reduction is formulated for satisfiability problems instead of the dual validity problems.
However, this change is inessential since the models are transformed pointwise as visible in \Cref{sl1} and
\Cref{sl2} and so the only consequence is a flipped quantifier in the proof of (2) of \Cref{thm:SL}.
More importantly, the formal setting forced us to be more explicit about the handling of addresses, in particular the encoding of a given finite first-order interpretation.
For instance, the way chosen in \Cref{sl1} to start with an abstract domain $D$ and encode both elements and pairs over $D$ as numbers is not the only alternative but allowed us to maintain the explicit representation of the address space as $\Nat$.
Moreover, our syntax fragments differ slightly since we don't need equality in $\MSL$ as it is not a primitive in $\Formula$ but on the other hand keep all logical connectives and not just the classically sufficient base ${\dto,\dforall,\dbot}$ as this is (in the general case) constructively insufficient.

We end this section with the remark that the reduction given in~\cite{10.1007/3-540-45294-X_10} and adapted here of course crucially relies on the binary pointers $(t\mapsto t_1,t_2)$ as a language primitive.
As discussed in~\cite{BROCHENIN2012106}, with a less explicit memory structure, the considered fragment of separation logic is decidable and only turns undecidable on addition of separating implication.

\section{Discussion}%
\label{sec:discussion}

\subsection{Code overview, implementation, and design choices}

The main part of our Coq development directly concerned with the
\href{https://github.com/uds-psl/coq-library-undecidability/tree/trakhtenbrot_ijcar/theories/TRAKHTENBROT}{classification of finite satisfiability}
consists of a bit more than 10k loc, in addition to 3k loc of (partly reused)
\href{https://github.com/uds-psl/coq-library-undecidability/tree/trakhtenbrot_ijcar/theories/Shared/Libs/DLW}{utility libraries}.
Most of the code comprises the signature transformations,
with more than 4k~loc for reducing discrete signatures to membership.
Comparatively, the initial reduction from $\BPCP$ to $\FSATEQ(\SBPCP)$ takes less than 500~loc.
The application to the undecidability of separation logic also amounts to roughly 500~loc.

Our mechanisation of first-order logic in principle follows previous developments~\cite{ForsterCPP,ForsterEtAl:2019:Completeness,forster2021completeness} but also differs in a few aspects that were picked up by follow-up work~\cite{kirst:2001:synthetic}.
Notably, we had to separate function from relation signatures to be able to express distinct signatures that agree on one sort of symbols computationally.
This mostly avoids wandering in ``setoid hell\rlap,'' i.e.\ the painful manipulation of 
cumbersome type castings on formul{\ae}, of which the type \emph{depends} on signatures.
Moreover, we found it favorable to abstract over the logical connectives in form of $\binop$ and $\quant$ to shorten
purely structural definitions and proofs. Most of the signature reductions proceed in such structural approaches
so this is a real relief for propositional connectives. For existential quantifiers~$\dexists$ and universal
quantifiers~$\dforall$, it is however less frequent that they can be managed in a unified way.

The quantifiers~$\dexists$ and~$\dforall$ are also binders, and to deal with these binders,
we used standard unscoped de Bruijn syntax. Notice that it is much easier
to implement de Bruijn for first-order logic than it is for higher-order logics or even just lambda calculus,
the reason being that (first-order) terms used for substitutions do not contain binders.
It could have been possible to use the Autosubst 2~\cite{AutoSubst2} support for de Bruijn syntax,
but we refrained from this choice because of its current (technical but not fundamental) dependency
on the axiom of functional extensionality. We remind the reader that in the context of synthetic
undecidability, axioms cannot be freely added without risking breaking the requisite of
computability of the terms constructed in the underlying type theory.

So far, we discussed technical implementation choices that have little or no impact on the
meaning or provability of the reduction results that we implement in this paper. Other choices
are clearly possible, but some could lead to considerably more complicated proofs, e.g.\
setoid hell if dependent types are managed too naively.

\smallskip

At the other end are choices that
could/would impact the provability of our reduction results, and paramount to them, the notions
involved in the definition of what is a finite model of first-order logic. In classical
settings, there is usually no discussion on how to interpret function symbols:
as set-theoretic functions in the model, i.e.\ total and functional binary relations.
However in a constructive setting, already the notion of finiteness has several
non-equivalent implementations. We choose to define finiteness by basing on
the inductive type of lists, used to enumerate members of finite types.
Hence finiteness of $D$ just requires that the terms of type $D$ can
be collected into a list, i.e.\ there is $l:\List D$ such that $\forall x:D.\,x\in l$.

We could have required the domain $D$ of the model to also be discrete (i.e.\ equipped with a computable
way to discriminate elements which are not identical), but we devoted Section~\ref{sec:discrete_models}
to establish that this requirement is not necessary and does not impact the reduction results
(see Theorem~\ref{thm:quotient_FSAT}).
While finiteness  implies discreteness in a purely predicative setting like
the one of Agda where ${\in}: D\to\List D\to\Type$, it does not imply discreteness
when the membership predicate ${\in}: D\to\List D\to\Prop$ is typed in the
impredicative sort $\Prop$.\footnote{because the membership predicate typed ${\in}: D\to\List D\to\Prop$
cannot be used to computationally recover the position of an element in a list, whereas
it can when typed as ${\in}: D\to\List D\to\Type$.}

This illustrates that finiteness alone does not imply computability in the
impredicative constructive setting of Coq. Even on the finite type $\Fin n=\{0,\ldots,n-1\}$,
relations $\Fin n\to\Prop$ are not necessarily decidable. We choose to assume that the
intended meaning of a finite model of first-order logic is that it can be described
\emph{in extension} by tables. Hence, we found appropriate to interpret a function symbol
$f$ by a term of type $D^{|f|}\to D$, that if applied to the list enumerating $D$ and
then reduced on each instance, would yield the intended table.
For relation symbols $P$, we used a term $P^\MM$
of type $D^{|f|}\to \Prop$, \emph{moreover assuming that $P^\MM$ is decidable} (or equivalently
Boolean). This ensures that not only functions but also relations can be described by tables.

However, dropping the decidability of $P^\MM$, or else interpreting function symbols
by functional and weakly total\footnote{where $R:X\to Y\to\Prop$ is \emph{weakly total} if
$\forall x\exists y\, R\,x\,y$ holds.}
relations of type $D^{|f|}\to D\to\Prop$, would certainly
break some of our reductions, again because of the impredicativity of sort $\Prop$.
The question of whether this one relaxation or other relaxations would still be
acceptable renderings of the notion of finite model in constructive type theory
could be debated informally, but formally they could change the meaning of
Trakhtenbrot's theorem up to a breaking point.

\subsection{Future work}

We refrained from additional axioms since we included our development in the growing \href{https://github.com/uds-psl/coq-library-undecidability}{Coq library of synthetic
undecidability proofs}~\cite{library_coqpl}.
In this context, we plan to generalise some of the intermediate signature reductions so that they become reusable for
other undecidability proofs concerning first-order logic over arbitrary models.
For these general reductions however, additional assumptions like unique choice for the encoding of functions as relations or the requirements for a model of set theory as described in~\cite{kirst2019categoricity} to compress to the binary signature will play a role.

As further future directions, we want to explore whether a more direct reduction into the binary signature can be given, circumventing the intermediate signature transformations.
Also possible, though rather ambitious, would be to mechanise the classification of first-order satisfiability with regards to the quantifier prefix as
comprehensively developed in~\cite{borger1997classical}.


\bibliographystyle{alphaurl}
\bibliography{paper}

\appendix

\end{document}